\newtheorem{theorem}{Theorem}
\newtheorem{proposition}{Proposition}
\newtheorem{lemma}{Lemma}
\newtheorem{corollary}{Corollary}
\theoremstyle{definition}
\newtheorem{definition}{Definition}
\theoremstyle{remark}
\newtheorem{remark}{Remark}
\numberwithin{equation}{section}
\newcommand{\field}[1]{\ensuremath{\mathbb{#1}}}
\newcommand{\DD}{\field{D}}
\newcommand{\RR}{\field{R}}
\newcommand{\ZZ}{\field{Z}}
\newcommand{\R}{{\mathbb{R}}}
\newcommand{\curly}[1]{\mathscr{#1}}
\newcommand{\cA}{\curly{A}}
\newcommand{\cB}{\curly{B}}
\newcommand{\cC}{\curly{C}}
\newcommand{\cD}{\curly{D}}
\newcommand{\cE}{\curly{E}}
\newcommand{\cF}{\curly{F}}
\newcommand{\cG}{\curly{G}}
\newcommand{\cH}{\curly{H}}
\newcommand{\cI}{\curly{I}}
\newcommand{\cJ}{\curly{J}}
\newcommand{\cK}{\curly{K}}
\newcommand{\cL}{\curly{L}}
\newcommand{\cM}{\curly{M}}
\newcommand{\cN}{\curly{N}}
\newcommand{\cP}{\curly{P}}
\newcommand{\cU}{\curly{U}}
\newcommand{\cV}{\curly{V}}
\newcommand{\cW}{\curly{W}}
\newcommand{\cZ}{\curly{Z}}
\newcommand{\Ga}{\Gamma}
 \DeclareMathOperator{\Hom}{Hom}
\DeclareMathOperator{\Aut}{Aut}
\begin{document}

\title[Homotopy classes of gauge fields and the lattice]{Homotopy classes of gauge fields and the lattice}


\author[Meneses]{Claudio Meneses}
\address{\noindent Centro de Investigaci\'on en Matem\'aticas A.C., Jalisco S/N,
 Valenciana, \indent C.P. 36023, Guanajuato, Mexico}
\curraddr{Mathematisches Seminar, Christian-Albrechts Universit\"at zu Kiel, \indent Ludewig-Meyn-Str. 4, 
24118 Kiel, Germany}
\email{meneses@math.uni-kiel.de} 

\author[Zapata]{Jos\'e A. Zapata}
\address{Centro de Ciencias Matem\'aticas, Universidad Nacional Aut\'onoma de M\'exico
\indent UNAM-Campus Morelia, A. Postal 61-3, Morelia, Mexico, and Department 
\indent of Applied Mathematics, University of Waterloo, Waterloo, Canada}
\email{zapata@matmor.unam.mx} 

\thanks{}

\thanks{}

\subjclass[2010]{Primary 	55P35, 55R10, 70S15, 81T25}

\date{}

\dedicatory{}
\begin{abstract}
For a smooth manifold $M$, possibly with boundary and corners, and a Lie group $G$, we consider a suitable description of gauge fields in terms of parallel transport, as groupoid homomorphisms from a certain path groupoid in $M$ to $G$.  Using a cotriangulation $\cC$ of $M$, and collections of finite-dimensional families of paths relative to $\cC$, we define a homotopical equivalence relation of parallel transport maps, leading to the concept of an extended lattice gauge (ELG) field. A lattice gauge field, as used in Lattice Gauge Theory, is part of the data contained in an ELG field, but the latter contains further local topological information sufficient to reconstruct a principal $G$-bundle on $M$ up to equivalence. The space of ELG fields of a given pair $(M,\cC)$ is a covering for the space of fields in Lattice Gauge Theory, whose connected components parametrize equivalence classes of principal $G$-bundles on $M$. We give a criterion to determine when ELG fields over different cotriangulations define equivalent bundles.
\end{abstract}

\maketitle


\tableofcontents
\section{Introduction}

The standard geometric approach to the study of field theories with gauge symmetry such as Yang--Mills theory begins with a principal $G$-bundle $\pi: P\to M$ over a smooth manifold $M$. From it, a space of states $\cA_{P}/\cG_{P}$ is induced---the space of gauge fields---consisting of smooth connections in $P$ modulo the action of the group of gauge transformations. In the usual formulation of Lattice Gauge Theory (LGT) 
\cite{Creu83}, the discrete analogs of a gauge field, the so-called \emph{lattice gauge fields}, arise as evaluations of the parallel transport of a connection over a discrete collection of paths gene\-rated by a lattice $\Lambda$. However, the role of the topology of $P$ (an intrinsically global problem) is less evident in the subsequent axiomatization of lattice gauge fields when $M$ is arbitrary. 

Different proposals for a mechanism to consistently associate a bundle to a lattice gauge field exist in the literature since the introduction of LGT. L\"uscher \cite{Luscher82} observed that if the parallel transporters $\{g_{\gamma}\in\mathrm{SU}(2)\}$ on a 4-torus satisfy certain metric constraints, then an interpolation algorithm could be defined on such a LG field to construct the topological charge of a principal bundle. In a series of works \cite{Phil85,PS86,PS90,PS93}, Phillips and Stone provided a rigorous construction of L\"uscher's algorithm for arbitrary compact $M$ and $G$, in terms of a combinatorial realization of the homotopy of a principal bundle and its characteristic classes. 

The objective of this work is to propose an alternative to the previous ideas, which is somewhat similar in spirit to the combinatorial constructions of Phillips--Stone, but with a crucial difference: instead of truncating the space of LG fields, we complement LG fields, in such a way that the topology of a principal $G$-bundle is incorporated as part of the data in a local manner. Our proposal is motivated by the following proposition, proved in section \ref{sec:cellular data}. Let $M$ be equipped with a network of paths $\Gamma = \{\gamma\}$, and let $\{p_{\sigma}\}$ be its set of vertices. For any given $P$ and a choice of fiber base points $\{b_{\sigma}\in \pi^{-1}(p_{\sigma})\}$, consider the map on $\cA_{P}$ that assigns to any given connection $A\in \cA_{P}$ its set of parallel transporters $\{g_{\gamma}(A)\in G\}$. 

\begin{proposition}\label{prop:interpolation}
For any choice of LG field $\{g_{\gamma}\}$ in $(M,\Gamma)$, and any principal $G$-bundle with a choice of base points $(P,\{b_{\sigma}\})$, there exists a smooth connection $A\in\cA_{P}$ (in fact, infinitely many) such that $g_{\gamma}(A) = g_{\gamma}$.
\end{proposition}

The classification of principal $G$-bundles on a smooth $n$-manifold $M$ can be encoded in the homotopy theory of maps to the classifying space $BG$ \cite{Steen51,Milnor56,NR61,MS74}, the conjugacy classes of covering homotopy homomorphisms 
\cite{Lash56}, and the \v{C}ech cohomology spaces $\check{H}^{1}\left(M,G\right)$. In our proposal, a crucial step consists on interpreting the \v{C}ech cohomology classification combinatorially, in terms of an auxiliary ``scaffolding" structure in $M$, namely a cell decomposition $\cC$ dual to a triangulation in $M$.\footnote{This idea is not new, and goes back to Segal  \cite{Segal68}. A variation of it appears in \cite{PS90,PS93} as the notion of \emph{parallel transport functions} (p.t.f.).} 
The crucial fact that we exploit systematically is the correspondence---via \emph{clutching maps}---between $\check{H}^{1}\left(S^{k},G\right)$, $k\geq 2$ and the homotopy group $\pi_{k-1}(G,e)$, reducing the study of the space $\check{H}^{1}\left(M,G\right)$ to the groups $\pi_{k}(G,e)$, $0 < k < n$.  

Our proposal, leading to the notion of an \emph{extended lattice gauge field} (definition \ref{def:ELGT data}) consists of giving a discretization mechanism for gauge fields 
that preserves the topology of a principal $G$-bundle through the decimation process. We use an arbitrary but fixed LG field as a central ingredient to define relative homotopy classes of compatible transition maps from a bundle trivialization over a $k$-cell to trivializations over its boundary subcells (which we call \emph{glueing maps}), $1 \leq k \leq n$. As a result, not only every ELG field encodes a LG field, but also an isomorphism class of principal $G$-bundles \emph{in a local manner}, getting rid of the indeterminacy in proposition \ref{prop:interpolation}.\footnote{In some cases of physical interest where the topology of a bundle is necessarily trivial (e.g. the dimensions $n = 2,3$, and $G= \mathrm{SU}(r)$; see remark \ref{remark:simple-simply-conn}) the added data is trivial, and our proposed extension yields standard LG fields.} 
The idea behind the definition of an ELG field is based on the Barrett--Kobayashi construction \cite{Koba54,Ba91,Lew93,CP94}, which asserts that an axiomatization of the notion of \emph{holonomy} of a connection \cite{KN63}, based on the structure of a suitable group of equivalence classes of based loops in $M$, is sufficient to recover a pair $(P, [A])$ consisting of a principal $G$-bundle and a gauge orbit of smooth connections on it. 

The framework that we propose has yet another field theoretical motivation. When dealing with a manifold with boundary representing a portion of spacetime, Hamilton's principle of extremal action for classical fields requires the consideration of spaces of fields that are compatible with a given field at the boundary. Similarly, quantum field theory frameworks require the consideration of spaces of fields whose restriction to the boundary is given. 
Our proposal originates from the consideration of such scenarios in the case of gauge fields described through their parallel transport. Consequently, the constructions in our work are suited to study such physically fundamental problems for gauge field theories on base spaces which may have boundary and corners, which are essential in local covariant quantization frameworks such as spin foam models \cite{Reisenberger:1994aw,Baez:1997zt}.\footnote{A precursor of our work in the case of two dimensional abelian gauge field theory is presented in \cite{ORS05}, where it is shown that extra  bundle data complementing ordinary LGT is essential in the construction of a spin foam model for two-dimensional gravity. This is the case because in two-dimensional gravity, there are homotopically non trivial maps from the boundary of a disk to the structure group, which in Euclidean signature is $\mathrm{SO}(2)$. In dimension four the relevant structure group is $\mathrm{SU}(2)$, which again makes the bundle structure data essential.}

\subsection{A structural guide to our work}

The article is organized as follows. We begin by providing a brief glossary of facts an terminology that our work is based on, presented in section \ref{sec:terminology}. The aim is to give a precise and concise introduction to the structural foundations of our constructions: intimacy of paths, groups of based loops, holonomy maps, the Barrett-Kobayashi reconstruction theorem, and lattice gauge fields. To keep the exposition as brief as possible, the additional facts on cotriangulations that we will need are provided in appendix \ref{sec:triangle-dual}. 

Section \ref{sec:path-structures} introduces the main technical notions in our work, namely path groupoids $\cP_{\cC}$ and parallel transport maps relative to a cotriangulation $\cC$, which are used to define ELG fields. In order to do so, the Barrett-Kobayashi reconstruction theorem for bundles with connection is first reformulated accordingly in theorem \ref{theo:PT}, with a proof presented in appendix \ref{sec:parallel transport}. Supplementing $M$ with a network of paths---a discrete subgroupoid of $\cP_{\cC}$---leads to the notion of the LG field of a parallel transport map. Similarly, a special type of \emph{finite-dimensional} local path subgroupoids of $\cP_{\cC}$ are the crucial ingredient to define ELG fields as suitable classes in relative homotopy. 

The definition of ELG fields is chosen for being succinct. However, we study the relation between ELG fields and isomorphism classes of principal $G$-bundles by dissecting the former into local building blocks. The dissected data is described in theorem \ref{theo:dissection-data}, section \ref{sec:cellular data}. The local building blocks contain the usual LGT data, and additionally, homotopy classes of extensions of glueing maps from the boundary of a $(k+1)$-cell---a $k$-sphere---to its interior. This allows us to identify the minimal local topological data contained in an ELG field that is sufficient to reconstruct a bundle, which we name the \emph{core} of an ELG field (corollary \ref{cor:core}).\footnote{The core of an ELG field corresponds to a homotopy class of p.t.f. in the terminology of Phillips--Stone \cite{PS90,PS93}, relative to a given LG field.} 
Allowing the underlying LG field to change continuously in a family of cores of ELG fields leads the notion of \emph{cellular bundle data}, introduced in definition \ref{def:bundle data}. The correspondence between cellular bundle data and isomorphism classes of principal $G$-bundles is proved in theorem \ref{theo:equivalence} in appendix \ref{app:cellular bundle data}. Section \ref{sec:small-dim} is devoted to describing explicitly the dissected data in small dimensions. 

A study of the spaces of ELG fields is given in section \ref{sec:category}. The dissected local data reveals a \emph{covering space} structure in the space of ELG fields for a given triple $(M,\cC,\Gamma)$, fibering over the space of standard LGT data---a Lie group---(corollary \ref{cor:torsor}). The group of deck transformations of the cover is modeled on subgroups of a product of homotopy groups, acting on the homotopy classes of extensions of glueing maps. The connected components of the cover are in bijective correspondence with $\check{H}^{1}\left(M,G\right)$. As a by-product, we reconstruct the space $\check{H}^{1}\left(M,G\right)$ as a \emph{homogeneous space} for such a group of deck transformations (corollary \ref{cor:fibration}). Section \ref{sec:Pachner} describes the dependence of the bundle structure of ELG fields on auxiliary cotriangulations, addressed in terms of Pachner moves. Following the theorem of U. Pachner \cite{Pach91}, we provide an algorithm to determine when two ELG fields on different cell decompositions are homotopic and define equivalent bundles (theorem \ref{theo:Pachner}).

\section{Terminology and fundamental notions we build our work upon}\label{sec:terminology}

Let $M$ be a smooth manifold (possibly with boundary and corners). A path $\gamma:[0,1]\to M$ will always be assumed to be continuous and piecewise-smooth, unless otherwise stated. We follow the convention of denoting $\gamma(0)=s(\gamma)$ and $\gamma(1)=t(\gamma)$ (the \emph{source} and \emph{target} of $\gamma$). In particular, a based loop is a path satisfying $s(\gamma) = t(\gamma)$. We will denote by $\gamma^{-1}$ the inverse path $\gamma^{-1}(t) := \gamma(1-t)$ and the path composition of a pair $(\gamma_{1}, \gamma_{2})$ such that $s(\gamma_{1}) = t(\gamma_{2})$ by $\gamma_{1}\cdot\gamma_{2}$. 
For a choice of base point $p\in M$, $\Pi(M,p)$ will denote the path space of $M$ with $s(\gamma) = p$. There is a natural fibration $\mathrm{pr}:\Pi(M,p) \rightarrow M$ given by $\gamma \mapsto t(\gamma)$. We will denote the fibers $\mathrm{pr}^{-1}(q)$ by $\Pi(M,p,q)$. In particular, $\Pi(M,p,p)$ is the loop space $\Omega(M,p)$. The inversion and composition of loops define operations on $\Omega(M,p)$. 

\begin{definition}[\cite{Ba91,CP94}]\label{def:thin-homotopy}
Two based paths $\gamma_{0}, \gamma_{1}\in \Pi(M,p,q)$ are \emph{thin homotopic} if there is a piecewise-smooth homotopy $\gamma:[0,1]^{2}\rightarrow M$ such that $\gamma(s,\cdot)\in  \Pi(M,p,q)$ $\forall s\in[0,1]$, $\gamma(i,\cdot) = \gamma_{i}$ for $i = 0,1$, and $\gamma\left([0,1]^{2}\right)\subseteq \gamma_{0}([0,1])\cup \gamma_{1}([0,1])$. 
\end{definition}

\begin{remark}
If $p = q$ and $\gamma_{1},\gamma_{2}$ are based loops, definition \ref{def:thin-homotopy} is a reformulation of Barrett's  \cite{Ba91,CP94}. Let $\sim_{R_{1}}$ be the equivalence relation in $\Pi(M,p,q)$ given by decreeing $\gamma_{0}\sim_{R_{1}}\gamma_{1}$ if there is a finite sequence of thin homotopies connecting them.  A \emph{path retracing} \cite{AL94,GP96} is any thin homotopy between a path of the form $\gamma_{1}\cdot\gamma^{-1}\cdot\gamma\cdot\gamma_{2}$ and its reduction $\gamma_{1}\cdot\gamma_{2}$. It is
a fundamental question, postulated in \cite{Ba91,AL94,GP96}, whether two paths $\gamma_{0}, \gamma_{1}\in \Pi(M,p,q)$ are thin homotopic if and only if they differ by a finite collection of piecewise-smooth reparametrizations and retracings. Such correspondence would provide a useful characterization (and a possible axiomatization) of thin homotopy  in terms of ``generators and relations".  
\end{remark}

Let $P(M,p)\subset \Pi(M,p)$ (resp. $P(M,p,q)\subset \Pi(M,p,q)$) the subspace of smooth paths with \emph{sitting instants} at $0,1$ \cite{CP94} i.e. smooth paths that are constant in a neigborhood of $0$ and $1$. Similarly, let $L(M,p) = P(M,p,p)$. In general, there exist a retract $\Pi(M,p,q)\rightarrow P(M,p,q)$, since every  path $\gamma\in \Pi(M,p,q)$ can always be reparametrized to a new path $\gamma' \in P(M,p,q)$.

\begin{definition}[\cite{CP94}, cf. \cite{SW09,BH11}]\label{def:intimacy}
Two paths $\gamma_{0}, \gamma_{1}\in P(M,p,q)$ are \emph{intimate} if there is a smooth map $\gamma:[0,1]^{2}\rightarrow M$ and $0<\epsilon <1/2$ such that $\gamma(s,\cdot)\in P(M,p,q)$ $\forall s\in [0,1]$, $\gamma(s,\cdot) = \gamma_{0}$ and $\gamma(1-s,\cdot)=\gamma_{1}$ for $0 \leq s\leq\epsilon$, and  $\mathrm{Rank}\left( d \gamma\right) < 2$ everywhere.
\end{definition}

\begin{remark}
There is an analogous equivalence relation $\sim_{R_{2}}$ in the spaces $P(M,p,q)$ given by decreeing $\gamma_{0}\sim_{R_{2}}\gamma_{1}$ if they are intimate.
Both thin homotopy and intimacy possess a fundamental feature. The operations of (possibly reparametrized) inversion and composition of based loops descend to equivalence classes as $[\gamma]^{-1} := \left[\gamma^{-1}\right]$ and $[\gamma_{1}]\cdot[\gamma_{2}]:= [\gamma_{1}\cdot\gamma_{2}]$, leading to an induced \emph{topological group} structure on the quotient spaces $\Omega(M,p)/\sim_{R_{1}}$ and $L(M,p)/\sim_{R_{2}}$. 
It is shown in \cite{CP94} that the relation $\sim_{R_{2}}$ is an explicit example of a \emph{holonomy relation}, i.e., if a pair  $\gamma_{0},\gamma_{1}\in L(M,p)$ satisfies $\gamma_{0}\sim_{R_{2}}\gamma_{1}$, then $\mathrm{Hol}(A,\gamma_{0}) = \mathrm{Hol}(A,\gamma_{1})$ for any smooth connection $A$ on any principal $G$-bundle $P\rightarrow M$. The work \cite{CP94} was motivated by the following subtle observation: it is not straightforward to show that thin homotopy is a holonomy relation.

\end{remark}

We will denote the quotient $P(M,p,q)/\sim_{R_{2}}$ by $\cP(M,p,q)$. We can analogously  define $\cP(M,p)$ through the corresponding fibration over $M$ with fibers $\cP(M,p,q)$. We will call $\cL(M,p) := L(M,p)/\sim_{R_{2}}$ the \emph{group of based loops} of $M$.
Given a pair $(P,b)$ of a principal $G$-bundle $\pi: P\rightarrow M$ and a choice $b\in \pi^{-1}(p)$, holonomy based at $b$ defines a map
\[
A\in\cA_{P} \mapsto \mathrm{Hol}(A)\in \Hom(\cL(M,p),G)_{0}
\]
where $\Hom(\cL(M,p),G)_{0}$ is the set of topological group homomorphisms $\cL(M,p)\rightarrow G$ satisfying a suitable smoothness condition (condition H3 in \cite{Ba91}; c.f. definition \ref{def:PT}). 
The holonomy map is invariant under the action of the group $\cG_{P,p}$ of gauge transformations acting trivially on $\pi^{-1}(p)$ (or equivalently, preserving $b$), and descends to a map $\cA_{P}/\cG_{P,p}\rightarrow \Hom(\cL(M,p),G)_{0}$. The Barrett-Kobayashi reconstruction theorem can be stated as follows.

\begin{theorem}[\cite{Ba91,Lew93,CP94}]
Given a choice of based principal $G$-bundles $\left(P,b\in\pi^{-1}(p)\right)$ for every isomorphism class $\{P\}\in \check{H}^{1}(M,G)$, the induced map 
\begin{equation}\label{eq:B-K}
\mathrm{Hol}:\bigsqcup_{\{P\}\in \check{H}^{1}(M,G)}\cA_{P}/\cG_{P,p}\longrightarrow \Hom(\cL(M,p),G)_{0} 
\end{equation}
is a bijection.
\end{theorem}

\begin{remark}
The natural function space topologies of $\Hom(\cL(M,p),G)_{0}$ are not explicitly discussed in the works \cite{Ba91,Lew93,AL94,CP94,GP96}. In this work we will only consider its topology induced by the identification \eqref{eq:B-K}. In particular, there is an induced bijection (cf. \cite{Lash56})
\begin{equation}
\pi_{0}(\Hom(\cL(M,p),G)_{0}) \longleftrightarrow \check{H}^{1}(M,G).
\end{equation}
\end{remark}

Let $\cC = \{c_{\tau}\}$ be a cotriangulation on $M$ with prescribed base points $\cB_{\cC} = \{p_{\tau}\in c_{\tau}\,|\,c_{\tau}\in\cC\}$, and for every $c_{\tau}\in \cC$ let $\cB_{\tau} = \{p_{\sigma}\in c_{\sigma}\,|\,\overline{c_{\sigma}}\subset\overline{c_{\tau}}\}$. Details on cotriangulations are presented in appendix \ref{sec:triangle-dual}. 

\begin{definition}\label{def:cellular network}
 A  \emph{cellular network} $\cP_{\Gamma}$ in a cotriangulation $\cC$ of $M$ with base points $\cB_{\cC}$ is the groupoid generated by a collection of smooth paths $\Gamma =\{\gamma_{\sigma\tau}\in P(\overline{c_{\tau}},p_{\tau},p_{\sigma})\}$ indexed by pairs of an arbitrary $c_{\tau}\in\cC$ and any 0-cell $c_{\sigma} = p_{\sigma}\in\partial\overline{c_{\tau}}$, with images in the 1-skeleton $\mathrm{Sk}_{1}(B(\cC))$ of a barycentric subdivision of $\cC$. A \emph{lattice gauge (LG) field} on $(M,\Gamma)$ is a groupoid homomorphism $g:\cP_{\Gamma}\rightarrow G$.
 \end{definition}

\section{The definition of an extended lattice gauge field}\label{sec:path-structures}

\begin{definition}\label{def:path groupoid}
Let 
\[
\cP_{\cC} := \bigcup_{p_{\sigma_{1}},p_{\sigma_{2}}\in\cB_{\cC}}\cP(M,p_{\sigma_{1}},p_{\sigma_{2}}), 
\]
and for every $c_{\tau}\in\cC$, let 
\[
\cP_{\tau} := \bigcup_{p_{\sigma_{1}},p_{\sigma_{2}}\in\cB_{\tau}}\cP(\overline{c_{\tau}},p_{\sigma_{1}},p_{\sigma_{2}}).
\]
\end{definition}

\begin{remark}
There are obvious inclusions $\iota_{\tau}:\cP{\tau}\hookrightarrow \cP_{\cC}$ $\forall c_{\tau}\in\cC$. By a slight abuse of notation, we will identify $\cP_{\tau}$ with its image in $\cP_{\cC}$ whenever the context requires it.
\end{remark}

\begin{remark}
The definition of intimacy ensures that $\cP_{\cC}$ and each $\cP_{\tau}$ are groupoids under the operation of path composition. In particular, the groupoids $\cP_{\tau}$ contains the loop groups $\cL(\overline{c_{\tau}}, p_{\sigma})$ $\forall p_{\sigma}\in\cB_{\tau}$. 
Any cellular network relative to $\cC$ is a discrete subgroupoid  $\cP_{\Gamma}\subset \cP_{\cC}$.
\end{remark}
 
 \begin{definition}
Let $\DD^{r}$ denote the unit disk in $\RR^{r}$. For any pair $c_{\sigma_{1}}, c_{\sigma_{2}}\in\cC$, an $r$\emph{-dimensional smooth path family} $\cF^{r}_{\sigma_{1} \sigma_{2}}$ is a map $\phi:\DD^{r}\rightarrow \cP(M,p_{\sigma_{1}},p_{\sigma_{2}})$ induced by any smooth map
\[
f:\DD^{r}\times[0,1]\rightarrow M
\]
such that $\gamma^{x} := f(x,\cdot)\in P(M,p_{\sigma_{1}},p_{\sigma_{2}})$ $\forall x\in \cU$.
\end{definition}
 
\begin{definition}\label{def:PT}
Let $G$ be a Lie group. A \emph{smooth parallel transport map}, relative to a choice of cotriangulation $\cC$ of $M$, is a groupoid homomorphism
\[
\mathrm{PT}_{\cC}:\cP_{\cC}\to G,
\]
 such that for any choice of local path family $\cF_{\sigma_{1}\sigma_{2}}$, the map induced by evaluation
\[
g_{\sigma_{1}\sigma_{2}}: \DD^{r}\rightarrow G 
\] 
is smooth. Two parallel transport maps $\mathrm{PT}_{\cC}$ and $\mathrm{PT}'_{\cC}$ are equivalent if there is a set $\{g_{\tau}\in G\,|\, c_{\tau}\in\cC\}$ such that for any $[\gamma]\in\cP(M,p_{\sigma_{1}},p_{\sigma_{2}})$
\[
\mathrm{PT}'_{\cC}([\gamma])=g_{\sigma_{2}}\mathrm{PT}_{\cC}([\gamma])g_{\sigma_{1}}^{-1}.
\]
\end{definition}


\begin{remark}
Let $G_{M}$ be the groupoid (thought of as a covariant functor) which to every $p\in M$ assigns a $G$-principal homogeneous space $P_{p}$, and to every morphism $p\to q$, $p,q\in M$, assigns a $G$-equivariant map $P_{p}\to P_{q}$. More generally, a smooth parallel transport map could be defined as a groupoid homomorphism $\mathrm{PT}: \cP_{M} \to G_{M}$ satisfying an analogous smoothness condition, where $\cP_{M} = \bigcup_{p,q\in M}\cP(M,p,q)$ (cf. \cite{SW09}).  
\end{remark}

\begin{remark}
The motivation for definition \ref{def:PT} is the following. Let $\cG_{P,*}$ be the group of gauge transformations of a principal $G$-bundle $\pi:P\to M$ acting trivially along the fibers $\left\{\pi^{-1}\left(p_{\sigma}\right)\right\}_{c_{\sigma}\in\cC}$. For any choice of base points $\cE_{\cC} := \{b_{\sigma} \in\pi^{-1}\left( p_{\sigma}\right)\,|\,c_{\sigma}\in\cC\}$, any isomorphism class of triples
\[
\left[\left(P,\cE_{\cC}, [A]\in\cA_{P}/\cG_{P,*}\right)\right]
\]
determines an equivalence class of smooth parallel transport maps $\left\{\mathrm{PT}_{\cC}\right\}$ via horizontal lifts and parallel transport. That the previous correspondence is bijective is just a reformulation of the theorem of Barrett-Kobayashi in terms of parallel transport. This fact is established in the following result. Its proof is given in appendix \ref{sec:parallel transport}.
\end{remark}

\begin{theorem}[Reconstruction theorem]\label{theo:PT}
Let $M$ be a smooth $n$-manifold with a cotriangulation $\cC$, and $G$ a Lie group. There is a bijective correspondence 
\[
 \left\{\parbox[d]{2.4in}{\centering $G$-valued smooth parallel transport maps $\mathrm{PT}_{\cC}$ in $M$,  up to equivalence}\right\} \leftrightarrow \left\{\parbox[d]{1.4in}{\centering Classes $\left[\left(P,\cE_{\cC}, [A]\right)\right]$, $[A]\in\cA_{P}/\cG_{P,*}$
 }\right\}.
\]
\end{theorem}

\begin{remark}
When $\partial M \neq \emptyset$ and $\partial \cC := \cC|_{\partial M}$ is also a cotriangulation of $\partial M$, let $\partial P := P|_{\partial M}$. We can consider a fixed gauge field  $[a]\in \cA_{\partial P}/\cG_{\partial P,*}$ with corresponding parallel transport map $\mathrm{pt}_{\partial \cC}:\cP_{\partial \cC}\rightarrow G$ where $\cP_{\partial \cC}$ is the corresponding path groupoid on $\partial M$. As a corollary of theorem \ref{theo:PT}, restriction of gauge field data to the boundary of $M$ induces an additional correspondence 
\[
 \left\{\parbox[d]{2.4in}{\centering \emph{$G$-valued smooth parallel transport maps $\mathrm{PT}_{\cC}$ in $M$ such that $\mathrm{PT}_{\cC}|_{\cP_{\partial \cC}} = \mathrm{pt}_{\partial \cC}$, up to equivalence}}\right\} \leftrightarrow \left\{\parbox[d]{1.6in}{\centering \emph{Classes} $\left[\left(P,\cE_{\cC}, [A]\right)\right]$, $[A]\in\cA_{P}/\cG_{P,*}$ such that $[A]|_{\partial P} = [a]$
 }\right\}.
\]
\end{remark}

It will be very convenient to introduce a special class of path families $\cF_{\sigma\tau}$ compatible with a cotriangulation $\cC$, which will be called  \emph{cellular path families}, and which will be considered exclusively henceforth.

\begin{definition}
Given a set of based diffeomorphisms 
\[
\cD = \left\{\psi_{\tau}:\overline{\mathbb{D}^{k}}\rightarrow\overline{c_{\tau}},\, \psi_{\tau}(0) = p_{\tau} \,|\, c_{\tau}\in\cC_{k},\, k =1,\dots, n\right\}
\]
its induced \emph{cellular path groupoid} is the topological subgroupoid $\cP_{\cD}\subset \cP_{\cC}$ generated by the finite-dimensional families $\cF_{\sigma\tau}$ of paths 
\[
\gamma^{x}_{\sigma\tau} := \left(\gamma_{\sigma}^{x}\right)^{-1}\cdot\gamma_{\tau}^{x} \in\cP(\overline{c_{\tau}},p_{\tau},p_{\sigma}),\qquad x\in\overline{c_{\sigma}}\subset\overline{c_{\tau}}
\]
where $\gamma_{\sigma}^{x}$ and $\gamma_{\tau}^{x}$ are the images under $\psi_{\sigma}$ and $\psi_{\tau}$ of the corresponding linear rays starting at 0 and such that $t\left(\gamma_{\sigma}^{x}\right)=t\left(\gamma_{\tau}^{x}\right)=x$.\footnote{A continuous and piecewise-smooth structure on the family $\cF_{\sigma\tau}$ is given in terms of its bijection with $\overline{c_{\sigma}}$. In particular, $\dim\cF_{\sigma\tau} = \dim c_{\sigma}$.} In the special case when $x = p_{\sigma}$, we will denote $\gamma^{x}_{\sigma \tau}$ simply by $\gamma_{\sigma\tau}$.
\end{definition}

%

Any cellular path groupoid $\cP_{\cD}\subset\cP_{\cC}$ contains in particular a cellular network groupoid $\cP_{\Gamma}\subset\cP_{\cD}$ on it, generated by the paths $\gamma_{\sigma\tau}^{p_{\sigma'}}$ for any $c_{\sigma'}\subset \overline{c_{\sigma}}$. Such $\cP_{\Gamma}$ then determines a LG field for any parallel transport map $\mathrm{PT}_{\cC}$ by restriction. The importance of cellular path groupoids relies on the fact that a choice of them can be used to define a homotopy equivalence relation on parallel transport maps relative to its cellular network. 

\begin{definition}[(cf. \cite{Lash56}, definition 3.5)]\label{def:rel-hom}
Two parallel transport maps $\mathrm{PT}_{\cC}$ and $\mathrm{PT}'_{\cC}$ are relatively homotopic with respect to a choice of subgroupoids $\cP_{\Gamma}\subset\cP_{\cD}$ if there is a homotopy of groupoid homomorphisms $g^{t}_{\cD}:\cP_{\cD}\rightarrow G$ such that $g^{t}_{\cD}|_{\cP_{\Gamma}} = \mathrm{PT}_{\Gamma}$ is a fixed LG field $\forall t\in[0,1]$, and $g^{0}_{\cD} = \mathrm{PT}_{\cC}|_{\cP_{\cD}}$, $g^{1}_{\cD} = \mathrm{PT}'_{\cC}|_{\cP_{\cD}}$.
\end{definition}

\begin{lemma}\label{lemma:independence}
Given a cellular network groupoid $\cP_{\Gamma}$, relative homotopy of parallel transport maps is independent of the choice of extending cellular path groupoid $\cP_{\cD}\supset\cP_{\Gamma}$.
\end{lemma}

\begin{proof}
Let $\cD'$ be any other collection of diffeomorphisms with the same cellular network groupoid $\cP_{\Gamma}$. This topological constraint implies that there is a homotopy of diffeomorphisms $\cD^{t}$ between $\cD$ and $\cD'$ preserving $\Gamma$, and an induced homotopy of cellular path groupoids $\cP_{\cD}^{t}$ between $\cP_{\cD}$ and $\cP_{\cD'}$ preserving $\cP_{\Gamma}$. Therefore, two parallel transport maps $\mathrm{PT}_{\cC}$ and $\mathrm{PT}'_{\cC}$ with a common LG field $g:\cP_{\Gamma}\rightarrow G$ are relatively homotopic with respect to $\cP_{\cD}$ if and only if they are relatively homotopic with respect to $\cP_{\cD'}$. 
\end{proof}


\begin{definition}\label{def:ELGT data}
Let  $(M,\cC,\cP_{\Gamma})$ be a cotriangulated manifold with a choice of cellular network $\cP_{\Gamma}$. 
An \emph{extended lattice gauge (ELG) field} is a relative homotopy class of smooth parallel transport maps 
with respect to a fixed LG field $\mathrm{PT}_{\Gamma}:\cP_{\Gamma}\rightarrow G$ and any cellular path subgroupoid $\cP_{\Gamma}\subset\cP_{\cD}\subset\cP_{\cC}$.
\end{definition}

We conclude this section with some remarks on path groupoids and smooth parallel transport maps that will be useful subsequently.

\begin{lemma}\label{lemma:path factorization}
Every element $[\gamma]\in \cP_{\cC}$ admits a local factorization
\[
[\gamma] = \left[\gamma_{v_{r}}\right]\cdot\dots\cdot \left[\gamma_{v_{2}}\right]\cdot\left[\gamma_{v_{1}}\right],
\]
with $\left[\gamma_{v_{i}}\right]\in\cP_{v_{i}}$ for some $c_{v_{1}},\dots,c_{v_{r}}\in\cC_{n}$.  
Therefore, the path groupoid $\cP_{\cC}$ is in particular generated by the local subgroupoids $\{\cP_{v}\}_{c_{v}\in\cC_{n}}$.
\end{lemma}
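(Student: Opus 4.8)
The plan is to cut $\gamma$ into arcs each contained in a single closed top cell $\overline{c_{v}}$, and then to slide the cut points onto base points so that each arc becomes a genuine element of some $\cP_{v}$. The two geometric facts that drive the argument are that the closed $n$-cells $\{\overline{c_{v}}\}_{c_{v}\in\cC_{n}}$ form a locally finite closed cover of $M$, and that $M\setminus\cS_{n-1}$ is the disjoint union of the open top cells $c_{v}$; restricted to the compact image $\gamma([0,1])$ only finitely many top cells are met. First I would produce a finite partition $0=t_{0}<t_{1}<\dots<t_{r}=1$ such that each sub-arc $\gamma_{i}':=\gamma|_{[t_{i-1},t_{i}]}$ lies in a single closed top cell $\overline{c_{v_{i}}}$, with $\gamma((t_{i-1},t_{i}))\subset c_{v_{i}}$ for one open top cell, by taking the interior cut instants $t_{i}$ to be precisely the moments at which $\gamma$ meets $\cS_{n-1}$.

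The delicate point, which I expect to be the main obstacle, is the \emph{finiteness} of this partition. A general piecewise-smooth path can meet $\cS_{n-1}$ in an infinite set (think of a smooth arc oscillating across an $(n-1)$-cell with crossings accumulating at a point), and thin homotopy cannot suppress such oscillations, since a retracing is a rank-$\leq 1$ deformation and therefore sweeps no area. I would therefore work with a representative of $[\gamma]$ in general position with respect to the P.L. structure underlying $\cC$ --- transverse to $\cS_{n-1}$ and disjoint from $\cS_{n-2}$ --- for which $\gamma^{-1}(\cS_{n-1})$ is a finite set of transverse crossings, each lying in the interior of an $(n-1)$-cell; compactness of $[0,1]$ then yields the finite partition. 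Making this reduction precise, and pinning down the exact class of admissible paths for which it is legitimate, is where I expect the genuine technical work to lie.

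With a finite partition in hand, the regrouping is constructive. By Remark \ref{remark:intersection-cells}, each cut point $x_{i}=\gamma(t_{i})$ lies in the interior of the common face $c_{\tau_{i}}$ with $\overline{c_{\tau_{i}}}=\overline{c_{v_{i}}}\cap\overline{c_{v_{i+1}}}$; this $(n-1)$-cell is contractible and contains its base point $p_{\tau_{i}}\in\cB_{v_{i}}\cap\cB_{v_{i+1}}$. I choose a short path $\de_{i}$ inside $c_{\tau_{i}}$ from $x_{i}$ to $p_{\tau_{i}}$, insert the thinly trivial loops $\de_{i}^{-1}\cdot\de_{i}$ at each cut, and regroup:
\[
[\gamma] = [\gamma_{r}'\cdot\de_{r-1}^{-1}]\cdot[\de_{r-1}\cdot\gamma_{r-1}'\cdot\de_{r-2}^{-1}]\cdots[\de_{1}\cdot\gamma_{1}'].
\]
Each bracketed factor is a path inside a single $\overline{c_{v_{i}}}$ (since $\gamma_{i}'\subset\overline{c_{v_{i}}}$ and $\de_{i-1},\de_{i}$ lie in the faces $\overline{c_{\tau_{i-1}}},\overline{c_{\tau_{i}}}\subset\overline{c_{v_{i}}}$) with both endpoints among the base points $\cB_{v_{i}}$, hence defines an element $[\gamma_{v_{i}}]\in\cP_{v_{i}}$. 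The honest endpoints $s(\gamma),t(\gamma)\in\cB_{\cC}$ need no correction: since $\gamma$ enters the first open top cell $c_{v_{1}}$ immediately after $t=0$, one has $s(\gamma)\in\overline{c_{v_{1}}}$ automatically, and symmetrically $t(\gamma)\in\overline{c_{v_{r}}}$.

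Finally, merging any two consecutive factors that happen to lie in a common closed top cell produces a factorization with the fewest possible factors, which is the asserted minimal factorization. Since this exhibits an arbitrary $[\gamma]\in\cP_{\cC}$ as a product of elements of the local subgroupoids, it follows at once that $\{\cP_{v}\}_{c_{v}\in\cC_{n}}$ generates $\cP_{\cC}$. I expect everything past the finite subdivision to be routine bookkeeping; the crux is securing finiteness in the second paragraph.
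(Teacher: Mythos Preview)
Your approach is essentially the paper's: cut at encounters with the $(n-1)$-skeleton, insert whiskers to nearby base points, regroup. Two small differences are worth noting. First, instead of demanding transverse isolated crossings disjoint from $\cS_{n-2}$, the paper allows the representative to linger on the skeleton along closed subintervals $[a_i,b_i]$ with $\gamma([a_i,b_i])\subset\overline{c_{v_i}}\cap\overline{c_{v_{i+1}}}$, and detours to the base point $p_{\sigma_i}$ of the \emph{minimal} subcell $c_{\sigma_i}$ of $\overline{c_{v_i}}\cap\overline{c_{v_{i+1}}}$ whose closure contains that arc (and which may well lie in $\cS_{n-2}$); this sidesteps your general-position requirement and accommodates endpoints in $\cB_{\cC}$ that sit on low-dimensional cells. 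Second, the paper simply asserts the existence of a representative with only finitely many such transition intervals, without confronting the oscillation phenomenon you raise; the finiteness concern you carefully flag is thus glossed over in the paper's own proof as well, so you have been more scrupulous than the authors on this point. Your proposed fix via general position is, however, not available as stated: a thin homotopy has rank at most one everywhere and therefore cannot move the image of a path sideways, so a representative whose image is not already in general position with respect to $\cS_{n-1}$ cannot be put there by retracing alone. The paper's interval-based formulation is more flexible than yours, but it does not actually resolve the infinite-oscillation issue either.
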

\begin{proof}
There exist multiple factorizations for a given path, but we prescribe one with a minimality property, as follows.  
Choose any representative $\gamma\in[\gamma]$ whose image intersects the interiors of a finite and  \emph{minimal}  number of $n$-cells $c_{v_{1}},\dots,c_{v_{r}}\in\cC_{n}$ (in the sense that there is no subcollection of $n$-cells with the same property, although a given $n$-cell may appear several times). Then, there exist subintervals $[a_{1},b_{1}],\dots,[a_{r-1},b_{r-1}]$ in $[0,1]$, such that
\[
\gamma([a_{i},b_{i}])\subset \overline{c_{v_{i}}}\cap\overline{c_{v_{i+1}}}, 
\]
while for any other $t\in[0,1]\setminus\left(\cup_{i=1}^{r-1} [a_{i},b_{i}]\right)$, $\gamma(t)$ lies in the interior of one of the previous $n$-cells.
For every $i=1,\dots,r-1$, choose the minimal subcell $c_{\sigma_{i}}$ of $\overline{c_{v_{i}}}\cap\overline{c_{v_{i+1}}}$ such that $\gamma([a_{i},b_{i}])\subset \overline{c_{\sigma_{i}}}$. Then, 
there exists another path $\gamma'$, intimate to $\gamma$, with an additional factorization into $r$ subpaths
\[
\gamma' = \gamma'_{r}\cdot\dots\cdot\gamma'_{1},
\]
such that $\gamma'_{i}([0,1])\subset\overline{c_{v_{i}}}$ for each $i=1,\dots,r$, and $t(\gamma'_{i}) = p_{\sigma_{i}}$  for each $i=1,\dots,r-1$. Letting $[\gamma_{v_{i}}] = [\gamma'_{i}]$, the claim follows.
\end{proof}

It follows from lemma \ref{lemma:path factorization} that a smooth parallel transport map is equivalent to a collection of compatible groupoid homomorphisms satisfying a suitable smoothness condition
\[
\left\{\mathrm{PT}_{\sigma}:\cP_{\sigma}\to G\right\}_{c_{\sigma}\in\cC},
\]
in the sense that for any $c_{\sigma}\subset\overline{c_{\tau}}$ and $[\gamma]\in\cP_{\sigma}\subset \cP_{\tau}$,  $\mathrm{PT}_{\tau}([\gamma]) = \mathrm{PT}_{\sigma}([\gamma])$. We may consider each possibility whenever it is more convenient. The local factorization of elements in $\cP_{\cC}$ motivates local path families as the essential building blocks for the study of local subgroupoids in $\cP_{\cC}$.

\begin{remark}\label{remark:factorization}
The cellular path families satisfy the following property. For any $k''$-subcell $c_{\sigma''}\subset\overline{c_{\sigma}}\cap\overline{c_{\sigma'}}$, $k'' >0$, consider the families $\cF_{\sigma' \sigma}$, $\cF_{\sigma'' \sigma}$, $\cF_{\sigma'' \sigma'}$. Whenever $x\in\overline{c_{\sigma''}}$, a local factorization of intimacy classes is induced
\begin{equation}\label{eq:factorization}
\left[\gamma^{x}_{\sigma' \sigma}\right] = \left[\gamma^{x}_{\sigma'' \sigma'}\right]^{-1} \cdot \left[\gamma^{x}_{\sigma'' \sigma}\right],
\end{equation}
with $\left[\gamma^{x}_{\sigma'' \sigma}\right]\in\cF_{\sigma'' \sigma}$, $\left[\gamma^{x}_{\sigma'' \sigma'}\right]\in\cF_{\sigma'' \sigma'}$, which is written symbolically as 
\[
\cF_{\sigma' \sigma}|_{\overline{c_{\sigma''}}} = \cF_{\sigma'' \sigma'}^{-1}\cdot\cF_{\sigma'' \sigma},
\]
and which in particular applies when $\overline{c_{\sigma''}}\subset\overline{c_{\sigma'}}\subset\overline{c_{\sigma}}$. There are several relevant types of collections of cellular path families. 
Some important ones are given in terms of a collection
\[
\mathfrak{F}_{\text{min},\cD} =\left\{\cF_{v w}\right\}_{c_{v}, c_{w}\in\cC_{n},\, \overline{c_{v}}\cap\overline{c_{w}}\neq \emptyset}.
\]
Hence $\forall x\in\overline{c_{v}}\cap\overline{c_{w}}$, $\gamma_{vw}^{x}\in \cF_{vw}$ and $\gamma_{wv}^{x}\in\cF_{wv}$ are related as $\left[\gamma_{wv}^{x}\right] = \left[\gamma_{vw}^{x}\right]^{-1}$. Any $\cF_{vw}\in\mathfrak{F}_{\text{min},\cD}$ induces an $(n-1)$-dimensional cell in $\cP_{\cC}$ from its bijective correspondence with $\overline{c_{v}}\cap\overline{c_{w}}$.\footnote{On a cotriangulation, if $\overline{c_{v}}\cap\overline{c_{w}} = \overline{c_{\tau}}$ for a pair $c_{v},c_{w}\in\cC_{n}$, then $c_{\tau}\in\cC_{n-1}$ necessarily.} Together, they generate an $(n-1)$-dimensional topological subgroupoid $\cP_{\text{min},\cD}\subset\cP_{\cD}$, the \emph{core} of $\cP_{\cD}$, which is the minimal cellular path subgroupoid containing the families in $\mathfrak{F}_{\text{min},\cD}$. 
\end{remark}


\section{Local and global topological aspects of extended LG fields}\label{sec:cellular data}

We will now assume for simplicity that $G$ is a \emph{connected} Lie group. The extra complications of the general case are easy to sort out, as the connected component of the identity $G_{0}$ is normal in $G$, and $\pi_{0}(G,e)\cong G/G_{0}$. For any $c_{\tau}\in\cC$, a continuous map $h:\overline{c_{\tau}}\to G$ will be called \emph{cellularly-smooth} if its restriction to any subcell $c_{\sigma}\subset \overline{c_{\tau}}$ is smooth. 

\begin{definition}\label{def:glueing map}
A collection of \emph{glueing maps} on $(M,\cC)$ is an assignment of a cellularly-smooth map $g_{\sigma\tau}:\overline{c_{\sigma}}\to G$ for any flag $\overline{c_{\sigma}}\subset\overline{c_{\tau}}$ in $\cC$ satisfying the factorization property
\begin{equation}\label{eq:compatibility-glueing}
\left(g_{\sigma\tau}|_{\overline{c_{\sigma'}}}\right) = g_{\sigma'\sigma}^{-1}\cdot g_{\sigma'\tau}
\end{equation}
whenever $c_{\sigma'}\subset c_{\sigma}\subset c_{\tau}$.
A collection of \emph{clutching maps}, or \emph{skeletal transition functions}, is an assignment of a cellularly-smoot map $h_{vw}:\overline{c_{v}}\cap\overline{c_{w}}\to G$ for any pair $c_{v},c_{w}\in\cC_{n}$ with $\overline{c_{v}}\cap\overline{c_{w}}\neq\emptyset$ satisfying $h_{vw} = h_{wv}^{-1}$ and the cocycle conditions
\[
h_{uw}\vert_{\overline{c_{\sigma}}} = \left(h_{uv}\vert_{\overline{c_{\sigma}}}\right)\left(h_{vw}\vert_{\overline{c_{\sigma}}}\right)\qquad \forall\, c_{\sigma}\in\cC_{n-2},\quad \overline{c_{\sigma}}=\overline{c_{u}}\cap\overline{c_{v}}\cap\overline{c_{w}}.
\]
\end{definition}

\begin{remark}\label{rem:induced-maps}
Every pair $(\mathrm{PT}_{\cC},\cP_{\cD})$ of a smooth parallel transport map and cellular path subgroupoid determines a collection of glueing maps, according to the rule
\begin{equation}\label{eq:glueing PT}
g_{\sigma\tau}(x) = \mathrm{PT}_{\cC}\left(\left[\gamma^{x}_{\sigma\tau}\right]\right),\qquad \gamma^{x}_{\sigma\tau}\in\cF_{\sigma\tau},
\end{equation}
and a collection of clutching maps\footnote{The clutching maps of a pair $(\mathrm{PT}_{\cC}, \cP_{\cD})$ will play the role of transition functions for the principal $G$-bundle $P$ of a gauge field (cf. the parallel transport functions in \cite{PS90,PS93}), and transform as such under homotopies of $\mathrm{PT}_{\cC}$ (proposition \ref{prop:equivalence g}).}
\begin{equation}\label{eq:clutching}
h_{v w}(x) = \mathrm{PT}_{\cC}\left(\left[\gamma^{x}_{vw}\right]\right) = g_{\tau v}^{-1}\cdot g_{\tau w},\qquad \gamma^{x}_{vw}\in\cF_{vw}.
\end{equation}
In particular, the LG field of the pair $(\mathrm{PT}_{\cC},\cP_{\cD})$ is equivalent to the set of values $\{g_{\sigma\tau}(p_{\sigma'})\,|\,c_{\sigma}\subset\overline{c_{\tau}},\, c_{\sigma'}\in \cC_{0,\sigma}\}$, where $\cC_{0,\sigma}:=\left\{c_{\sigma'}\in\cC_{0}\;\vert\; c_{\sigma'}\subset\overline{c_{\sigma}}\right\}$.
\end{remark}

\begin{definition}
For a finite subset $Y\subset X$, where $X \cong S^{k}$ or $\overline{\DD^{k}}$, and a map $g_{0}:Y\to G$, a \emph{ relative homotopy class} $\left[X,G,g_{0}\right]$  is an equivalence class of cellularly-smooth maps $g:X\to G$ via cellularly-smooth homotopies subject to the constraint $g|_{Y} = g_{0}$. 
\end{definition}

\begin{definition}\label{def:bundle data}
Let  $(M,\cC)$ be a cotriangulated $n$-dimensional manifold. A choice of \emph{cellular bundle data} relative to $\cC$ is an equivalence class of collections of clutching maps 
\[
\left[\{h_{v w}:\overline{c_{v}}\cap\overline{c_{w}}\rightarrow G\}_{c_{v},c_{w}\in\cC_{n},\; c_{v}\cap c_{w}\neq\emptyset}\right],
\]
where two collections are equivalent if there exists simultaneous clutching map homotopies $(h_{uv}, h_{vw},h_{wu})\sim(h'_{uv}, h'_{vw},h'_{wu})$ (\emph{cellular equivalences}) for all triples $\{c_{u},c_{v},c_{w}\}\subset\cC_{n}$ such that $\overline{c_{u}}\cap\overline{c_{v}}\cap\overline{c_{w}}\neq\emptyset$.
\end{definition}

\begin{remark}
A choice of cellular bundle data is equivalent to an isomorphism class of principal $G$-bundles. Details are given in theorem \ref{theo:equivalence} (appendix \ref{app:cellular bundle data}). 
\end{remark}

\begin{theorem}\label{theo:hom-class-glueing}
An ELG field is equivalent to a class of collections of gluing maps $\{g_{\sigma\tau}\,|\,c_{\sigma}\subset\overline{c_{\tau}}\}$ under simultaneous relative homotopies with respect to a fixed set of values $\{g_{\sigma\tau}(p_{\sigma'})\,|\,c_{\sigma}\subset\overline{c_{\tau}},\, c_{\sigma'}\in \cC_{0,\sigma}\}$. In particular, to every ELG field there is associated an isomorphism class of principal $G$-bundles on $M$ and a LG field on $(M,\Ga)$.
\end{theorem}

\begin{proof}
By definitions \ref{def:rel-hom}--\ref{def:ELGT data}, lemma \ref{lemma:independence}, and remark \ref{rem:induced-maps}, an ELG field determines a class of collections of gluing maps $\{g_{\sigma\tau}\,|\,c_{\sigma}\subset\overline{c_{\tau}}\}$ under simultaneous relative homotopy with respect to the set $\{g_{\sigma\tau}(p_{\sigma'})\,|\,c_{\sigma}\subset\overline{c_{\tau}},\, c_{\sigma'}\in \cC_{0,\sigma}\}$. Conversely, consider any class of collections $\{g_{\sigma\tau}\,|\,c_{\sigma}\subset\overline{c_{\tau}}\}$ with respect to a fixed set of values at 0-cells. Then the predetermined choice of discrete subgroupoid $\cP_{\Ga}\subset \cP_{\cC}$ determines a LG field $\mathrm{PT}_{\Ga}$ on $M$. The clutching maps $\{h_{vw} =  g_{\tau v}^{-1}\cdot g_{\tau w}\}$ of any collection representing the class induce a principal $G$-bundle $P$ with a set of fiber points $\cE_{\cC}$, whose isomorphism type is an invariant of the class. Any isomorphism type $\{P,\cE_{\cC}\}$ can be prescribed in this way, since a choice of clutching maps representing it can always be deformed locally to have any given values at 0-cells. We will prove the existence of a right-invariant smooth horizontal distribution on $P$ generating a collection of class representatives $\{g_{\sigma\tau}\}$, in terms of right-invariant  lifts of the elements of any auxiliary choice of cellular path subgroupoid $\cP_{\cD}\supset\cP_{\Ga}$. The ELG field of the resulting connection will be independent of the choices of $\cP_{\cD}$ and glueing map representatives. 

By construction, the bundle $P$ is trivialized along the restrictions $P\vert_{\overline{c_{\sigma}}}$ in terms of $\cP_{\cD}$, and the gluing maps $\{g_{\sigma\tau}\}$ compare trivializations whenever $\overline{c_{\sigma}}\subset\overline{c_{\tau}}$. In particular, this implies the existence of right-invariant collections of horizontal lifts of the elements in the smooth families $\{\cF_{\sigma\tau}\}$,
\[
\widetilde{\gamma^{x}_{\sigma\tau}} = \left(\widetilde{\gamma^{x}_{\sigma}}\right)^{-1}\cdot\widetilde{\gamma^{x}_{\tau}}, \qquad \pi\circ \widetilde{\gamma^{x}_{\sigma\tau}} = \gamma^{x}_{\sigma\tau},\qquad \forall\; \overline{c_{\sigma}}\subset\overline{c_{\tau}},\quad x\in \overline{c_{\sigma}},
\] 
in such a way that the parallel transport of the unique lift $\widetilde{\gamma^{x}_{\sigma\tau}}$ with source $s\left(\widetilde{\gamma^{x}_{\sigma\tau}}\right) = b_{\tau}$ equals $g_{\sigma\tau}(x)$ (notice that this is an order 0 constraint on lifts). Since $\forall\; c_{\sigma}\in\cC\setminus \cC_{0}$ the collections of induced horizontal ray lifts $\left\{\widetilde{\gamma^{x}_{\sigma}}\right\}$ determine a foliation of $\pi^{-1}\left(\overline{c_{\sigma}}\setminus\{p_{\sigma}\}\right)$, there is an induced 1-dimensional right-invariant horizontal distribution on $\pi^{-1}\left(\overline{c_{\sigma}}\setminus\{p_{\sigma}\}\right)$, defined in terms of the tangent lines $T_{b}\left(\widetilde{\gamma^{x}_{\sigma}}\right)$ for any $b\neq s\left(\widetilde{\gamma^{x}_{\sigma}}\right)$ in the image of any given induced ray lift $\widetilde{\gamma^{x}_{\sigma}}$. 

In order to show the existence of the desired smooth connection on $P$, additional first order constraints must be imposed by allowing deformations of the previous choices of horizontal lifts, in such a way that their sources and targets remain fixed. These constrains are unobstructed, inductive, and solved by smooth $(n-k+1)$-dimensional right-invariant horizontal distributions $\cK_{\sigma}$ on $\pi^{-1}\left(\overline{c_{\sigma}}\setminus\{p_{\sigma}\}\right)$ for every $c_{\sigma}\in\cC_{k}$, $k\geq 1$, satisfying suitable compatibility conditions. The constraints are given as follows:


\begin{itemize}
\item[(i)]$ \forall\; c_{\sigma_{1}}, c_{\sigma_{2}}\in\cC_{k}$, $\cK_{\sigma_{1}}\vert_{\pi^{-1}(\overline{c_{\sigma_{1}}}\cap \overline{c_{\sigma_{2}}})} = \cK_{\sigma_{2}}\vert_{\pi^{-1}(\overline{c_{\sigma_{1}}}\cap \overline{c_{\sigma_{2}}})}$.

\item[(ii)] $\forall\; \overline{c_{\sigma}}\subset\overline{c_{\tau}}$, $\cK_{\tau}\vert_{\pi^{-1}\left(\overline{c_{\sigma}}\setminus\{p_{\sigma}\}\right)}\subset \cK_{\sigma}$,

\item[(iii)] $\forall\; c_{\sigma}\in \cC$, there is a small open neighborhood $p_{\sigma}\in\cW_{\sigma}$ and a smooth $n$-dimensional right-invariant horizontal distribution $\cJ_{\sigma}$ in $\pi^{-1}\left(\cW_{\sigma}\right)$ such that $\cK_{\sigma}\vert_{\pi^{-1}\left(\cW_{\sigma}\cap c_{\sigma}\setminus\{p_{\sigma}\}\right)}\subset \cJ_{\sigma}\vert_{\pi^{-1}\left(\cW_{\sigma}\cap c_{\sigma}\setminus\{p_{\sigma}\}\right)}$.
\end{itemize}



\noindent  The distributions $\cK_{\sigma}$ are constructed inductively over $n$-cells, $(n-1)$-cells, and so on, by combining the different 1-dimensional distributions constructed before and deforming the lifts accordingly in such a way that the conditions (i)--(iii) are met (to do this, observe that a $k$-cell is in the boundary of exactly $n-k+1$ different $(k+1)$-cells). Any smooth $n$-dimensional right-invariant horizontal distribution on $P$ that contains each $\cK_{\sigma}$ 
is equivalent to a smooth connection on $P$ inducing the collection $\{g_{\sigma\tau}\}$ in terms of $\cP_{\cD}$ and $\cE_{\cC}$.
\end{proof}

\begin{proof}[\textbf{Proof of proposition \ref{prop:interpolation}}]
For any LG field on $(M,\Gamma)$ and any isomorphism class $\{P\}\in\check{H}^{1}(M,G)$, consider any compatible cell decomposition $\cC$, and any set of clutching maps $\{h_{vw}\}$ representing $\{P\}$ and compatible with the corresponding set of values $\{g_{\sigma\tau}(p_{\sigma'})\,|\,c_{\sigma}\subset\overline{c_{\tau}},\, c_{\sigma'}\in \cC_{0,\sigma}\}$. It follows from theorem \ref{theo:hom-class-glueing} that there is an ELG field relative to $(\cC,\Gamma)$ extending $\mathrm{PT}_{\Gamma}$  whose isomorphism class of principal $G$-bundles is $\{P\}$. By theorem \ref{theo:PT}, any representative parallel transport map $\mathrm{PT}_{\cC}$ is equivalent to a gauge orbit of connections with prescribed LG field on each representative $P\in\{P\}$. 
\end{proof}

\begin{remark}\label{rem:surj}
In proving proposition \ref{prop:interpolation}, it is established that the map
\[
\{\mathrm{PT}_{\cC}\} \mapsto (\{\pi: P\rightarrow M\}, \mathrm{PT}_{\Ga}),
\]
assigning to an ELG field its isomorphism class of principal bundles and LG field, is surjective. This is elaborated in section \ref{sec:category}.
\end{remark}

The construction of glueing maps $\left\{g_{\sigma\tau}\right\}$ from a pair $\left(\mathrm{PT}_{\cC},\cP_{\cD}\right)$ leads to another set of relative homotopy data which is actually \emph{equivalent} to an ELG field dissected into local pieces. This fact provides a mechanism to identify when two smooth parallel transport maps yield isomorphic principal $G$-bundles. We present such data in theorem \ref{theo:dissection-data}. The missing step for the new characterization is a glueing mechanism described in lemma \ref{lemma:glueing}.

\begin{lemma}\label{lemma:glueing}
Let $\cC$ be a cotriangulation of the $k$-sphere $S^{k}$ and $g_{0}:\cC_{0}\to G$ be fixed. If every $c_{\tau}\in\cC_{k}$ has assigned a relative homotopy class $[g_{\tau}] \in \left[\overline{c_{\tau}}, G, g_{0}|_{\cC_{0,\tau}}\right]$, such that for any $c_{\tau_{1}},c_{\tau_{2}}\in\cC_{k}$ with $\overline{c_{\sigma}}=\overline{c_{\tau_{1}}}\cap \overline{c_{\tau_{2}}}\neq\emptyset$, we have that $\left[g_{\tau_{1}}|_{\overline{c_{\sigma}}}\right] = \left[g_{\tau_{2}}|_{\overline{c_{\sigma}}}\right]$ as homotopy classes in $\left[\overline{c_{\sigma}},G,g_{0}|_{\cC_{0,\sigma}}\right]$, then the classes $\{[g_{\tau}]\}_{c_{\tau}\in\cC_{k}}$ can be glued into a well-defined element $[g]\in\left[S^{k},G,g_{0}\right]$.
\end{lemma}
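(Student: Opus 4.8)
The plan is to argue by induction on the dimension $k$, proving existence of a glued representative and the independence of its class from the auxiliary choices simultaneously, since the existence step in dimension $k$ will feed on the well-definedness step in dimensions $<k$. The base case is $k=0$ (or, for nontrivial overlaps, $k=1$): there $S^k$ is the union of arcs meeting at isolated $0$-cells, the overlaps are single points on which the data agree tautologically, and any choice of representatives already patches together, with the resulting rel-$\cC_0$ class determined by the per-arc rel-endpoint classes.

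The first preparatory move is to upgrade the pairwise hypothesis, stated only over the $(k-1)$-cells, to a fully compatible family $\{[g_\rho]\}$ indexed by \emph{all} cells $c_\rho\in\cC$. By remark \ref{remark:intersection-cells} (applied with $n=k$), each $(k-1)$-cell is the intersection of exactly two top cells and, more generally, any two intersecting top cells meet in a single closed $(k-1)$-cell whose faces are the lower cells. For a face $\overline{c_\rho}\subset\overline{c_\sigma}$ of an overlap cell, restricting a homotopy rel $\cC_0$ on $\overline{c_\sigma}$ to $\overline{c_\rho}$ shows the two restricted classes still agree there; transitivity across the triple-intersection pattern then forces $[g_{\tau}|_{\overline{c_\rho}}]$ to be independent of the top cell $\tau\supset\rho$. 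I define $[g_\rho]$ to be this common class; by construction the family is restriction-compatible, i.e. $[g_\rho]|_{\overline{c_{\rho'}}}=[g_{\rho'}]$ whenever $\overline{c_{\rho'}}\subset\overline{c_\rho}$.

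For existence I would build a representative skeleton by skeleton. Write $\cC^{(j)}$ for the union of closed cells of dimension $\le j$, set $g^{(0)}:=g_0$, and suppose inductively that $g^{(j)}\colon \cC^{(j)}\to G$ is a cellularly smooth map with $g^{(j)}|_{\overline{c_\rho}}\in[g_\rho]$ for every cell of dimension $\le j$. To extend across a $(j+1)$-cell $c_\rho$, pick a representative $\hat g_\rho\in[g_\rho]$ and compare $\hat g_\rho|_{\partial\overline{c_\rho}}$ with $g^{(j)}|_{\partial\overline{c_\rho}}$ on the $j$-sphere $\partial\overline{c_\rho}$, which inherits a triangle-dual structure (remark \ref{remark:cover}). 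On each top $j$-cell of $\partial\overline{c_\rho}$ both maps realize the same class $[g_{\rho'}]$, so the \emph{well-definedness statement in dimension $j$} yields the equality $[g^{(j)}|_{\partial\overline{c_\rho}}]=[\hat g_\rho|_{\partial\overline{c_\rho}}]$ in $[\partial\overline{c_\rho},G,g_0]$. The homotopy extension property of the cofibred CW pair $(\overline{c_\rho},\partial\overline{c_\rho})$ (valid for our piecewise smooth maps into the manifold $G$) then lets me deform $\hat g_\rho$ rel $\cC_0$ to a map agreeing with $g^{(j)}$ on $\partial\overline{c_\rho}$, still inside $[g_\rho]$; extending cell by cell gives $g^{(j+1)}$. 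At $j+1=k$ the pasting lemma produces the desired $g\colon S^k\to G$ with $g|_{\overline{c_\tau}}\in[g_\tau]$.

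The main obstacle is the well-definedness step, which closes the induction and is where the real work lies. Given two gluings $g,g'$ of the same data, I would construct a homotopy rel $\cC_0$ by the analogous skeletal induction on $S^k\times[0,1]$, maintaining over each cell $c_\rho$ a homotopy $H|_{\overline{c_\rho}\times[0,1]}$ rel $\cC_0$ from $g|_{\overline{c_\rho}}$ to $g'|_{\overline{c_\rho}}$ inside $[g_\rho]$. Extending $H$ across the cylinder $\overline{c_\rho}\times[0,1]$ of a $(j+1)$-cell amounts to extending a prescribed map off the $(j+1)$-sphere $\partial(\overline{c_\rho}\times[0,1])$, whose obstruction lies in $\pi_{j+1}(G)$; the delicate point is that two homotopies already built on $\partial\overline{c_\rho}\times[0,1]$ must be reconciled, so the difficulty is genuinely the \emph{gluing of homotopies over overlaps}. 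I expect this to be controlled precisely by the fact that each overlap is a single $(k-1)$-cell and by the propagated compatibility of the whole family $\{[g_\rho]\}$, which together force the intermediate boundary classes to match via the lower-dimensional well-definedness; the residual freedom on the interiors of top cells, measured by $\pi_k(G)$, is exactly the ambiguity that must be shown either to vanish or to be absorbed into the chosen alignment homotopies. Isolating and eliminating this $\pi_{j+1}(G)$-obstruction, using the triangle-dual combinatorics of remarks \ref{remark:intersection-cells} and \ref{remark:orientation}, is the crux of the argument.
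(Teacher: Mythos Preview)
Your existence argument---building a representative skeleton by skeleton, using the homotopy extension property and the inductively assumed well-definedness on the boundary spheres---is sound and in fact more systematic than the paper's, which simply grows the domain outward from one top cell by adjoining neighbours and choosing representatives that literally agree on the overlaps. Either route produces a glued map.

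The genuine gap is in your well-definedness step, and you have correctly located it: you isolate an obstruction in $\pi_{j+1}(G)$ (in particular $\pi_k(G)$ on the top cells) and then propose to kill it using ``the triangle-dual combinatorics of remarks~\ref{remark:intersection-cells} and~\ref{remark:orientation}''. That is the wrong lever. Those remarks are purely combinatorial statements about how cells intersect; they carry no information about $\pi_k(G)$ and cannot force any homotopy obstruction to vanish. Indeed, for a general target space $X$ in place of $G$ your obstruction is real and the lemma fails, so no argument that ignores the special nature of $G$ can close the gap.

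What you are missing is the \emph{group structure} on $G$, which the paper exploits decisively. Given two gluings $g,g'$ of the same data, form the pointwise quotient $g'g^{-1}\colon S^k\to G$. It equals $e$ on $\cC_0$, and on each top cell its class in $[\overline{c_\tau},G,e]$ is trivial because $[g|_{\overline{c_\tau}}]=[g'|_{\overline{c_\tau}}]=[g_\tau]$. The paper then argues that such a cell-wise trivial map has trivial class in $\pi_k(G,e)$, whence $[g]=[g']$ in $[S^k,G,g_0]$. The point is that pointwise multiplication turns the comparison of two arbitrary gluings into the comparison of a single gluing of the \emph{zero} data with the constant map $e$; this collapses your tower of $\pi_{j+1}(G)$-obstructions in one stroke. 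You should rebuild the uniqueness half around $g'g^{-1}$ rather than trying to glue homotopies directly.
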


\begin{proof}
The fundamental step to prove the lemma is a mechanism to glue neighboring classes 
to a new class over the union of their domain. After repeating the step as many (finite) times as necessary, we obtain an honest homotopy class in $\left[S^{k},G,g_{0}\right]$.  

Denote by $c_{\tau_{1}}$ an arbitrary $k$-cell. Consider all $k$-cells neighboring $c_{\tau_{1}}$, i.e. the $c_{\tau'}\in \cC_{k}$ such that $\overline{c_{\tau_{1}}}\cap\overline{c_{\tau'}}\neq\emptyset$, which can also be labeled as $c_{\tau_{2}},\dots,c_{\tau_{m_{1}}}$. The first step is to construct a class of cellularly equivalent maps defined over $\cup_{i=1}^{m_{1}}\overline{c_{\tau_{i}}}$. Such set is either homeomorphic to $\overline{\DD^{k}}$, to $S^{k}$ minus a finite number of open disks, or to $S^{k}$. We will consider each case, noting that the first one leads to the last two. 

Let us first assume that  $\cup_{i=1}^{m_{1}}\overline{c_{\tau_{i}}}$ is a closed disk in $S^{k}$. By hypothesis, for any pair $\{c_{\tau_{1}},c_{\tau_{i}}\}$ with $\overline{c_{\tau_{1}}}\cap\overline{c_{\tau_{i}}}=\overline{c_{\sigma_{1i}}}$, the induced classes $[g_{\tau_{1}} |_{\overline{c_{\sigma_{1i}}}}]$ and $[g_{\tau_{i}} |_{\overline{c_{\sigma_{1i}}}}]$ coincide.  
 Since there are no local obstructions over multiple intersections, we may choose representatives such that $g_{\tau_{1}} |_{c_{\tau_{1i}}}=g_{\tau_{i}} |_{c_{\tau_{1i}}}$,  
 and moreover, we can apply the same principle to all intersections $\overline{c_{\tau_{ij}}}=\overline{c_{\tau_{i}}}\cap\overline{c_{\tau_{j}}}$, $2\leq i,j\leq m_{1}$, we may also assume that $g_{\tau_{i}} |_{c_{\tau_{ij}}}=g_{\tau_{j}} |_{c_{\tau_{ij}}}$. Hence, the representatives $g_{\tau_{1}},\dots,g_{\tau_{m_{1}}}$ glue to define a piecewise-smooth function on $\cup_{i=1}^{m_{1}}\overline{c_{\tau_{i}}}$, and hence induce a class of cellularly equivalent maps on $\cup_{i=1}^{m_{1}}\overline{c_{\tau_{i}}}$.
We now add all $k$-cells $c_{\tau''}$ intersecting nontrivially with $\cup_{i=1}^{m_{1}}\overline{c_{\tau_{i}}}$, and repeat the previous procedure, until the complementary $k$-cells do not intersect pairwise. The resulting set must be $S^{k}$ minus a finite number of open disks. Call the remaining cells $c_{\tau_{f_{1}}},\dots,c_{\tau_{f_{l}}}$. Upon ordering the closures of $(k-1)$-cells $c_{\sigma}$ in the boundary of each $\overline{c_{\tau_{f_{j}}}}$,  there is a representative $g_{\tau_{f_{j}}}$ in its class whose values at each $\overline{c_{\sigma}}$ coincide with the boundary values of the previously constructed $g$. In this way, we obtain a piecewise-smooth map $g:S^{k}\to G$, up to piecewise-smooth homotopy fixing the values at every 0-cell, whose restriction to any $k$-cell recovers the starting homotopy classes.

To see that the previous procedure is independent of the choice of $k$-cell in $S^{k}$ at every step, repeat it with any other choice of $k$-cells at every step, and call $g'$ the constructed map. Since $g'$ attains the same values than $g$ at any 0-cell, and by hypothesis the restriction of $g' g^{-1}: S^{k}\to G$ to any of the $k$-cells of $\cC$ determines a trivial cellular homotopy class, it follows that the homotopy class of $g' g^{-1}$, as an element in $\pi_{k}(G,e)$, must be trivial. In particular, it follows that $[g']=[g]$ as elements in $\left[S^{k},G,g_{0}\right]$.
\end{proof}

\begin{remark}
Let $c_{\sigma}\in\cC_{k}$ and $g_{0}:\cC_{0,\sigma}\rightarrow G$ a fixed map. If $c_{\sigma'}\subset\overline{c_{\sigma}}$, every element in $[\overline{c_{\sigma}},G,g_{0}]$ induces an element in $\left[\overline{c_{\sigma'}},G,g_{0}|_{\cC_{0,\sigma'}}\right]$. We can give a recursive description of the space $[\overline{c_{\sigma}},G,g_{0}]$ using lemma \ref{lemma:glueing}, in terms of two types of data. The first type is parametrized by the subspace $\cH_{\sigma}$ of 
\[
\bigsqcup_{c_{\sigma'}\subset\overline{c_{\sigma}}} \left[\overline{c_{\sigma'}},G,g_{0}|_{\cC_{0,\sigma'}}\right]
\]
of elements such that over any $\partial\overline{c_{\sigma'}}$, the glueing of homotopy classes from lemma \ref{lemma:glueing} is an element in $\left[S^{k'},G,g_{0}|_{\cC_{0,\sigma'}}\right]$ that is trivial when identified with the free homotopy class that contains it, i.e. every representative is also homotopic  to a constant map, if one forgets the fixed values at 0-cells. We will call $\cH_{\sigma}$ the space of \emph{boundary homotopy data}. In this way, we get a surjection
\[
\mathrm{pr}_{\sigma}:[\overline{c_{\sigma}},G,g_{0}]\to \cH_{\sigma},
\]
whose fibers are all homotopy extension classes from $\partial \overline{c_{\sigma}}$ to $\overline{c_{\sigma}}$, and are a principal homogeneous space for the group $\pi_{k}(G,e)$. Such structure is described in lemma \ref{lemma:torsor}, necessary for the proof of theorem \ref{theo:dissection-data} and corollary \ref{cor:torsor}. Its first part is a standard result used to prove the abelian nature of the homotopy groups of $G$, and we include it for the sake of completeness. 
\end{remark}

\begin{lemma}\label{lemma:torsor}
The groups $\pi_{k}(G,e)$ can be realized as the sets of homotopy classes of maps $g:\overline{\DD^{k}}\to G$ such that $g|_{\partial\overline{\DD^{k}}} = e$, with product induced from pointwise multiplication of maps. For any $k$-cell $c_{\sigma}$ and $g_{0}:\cC_{0,\sigma} \rightarrow G$, there is a free action of $\pi_{k}(G,e)$ on $[\overline{c_{\sigma}},G,g_{0}]$ defined equivalently in terms of left or right pointwise multiplication, whose orbits are the fibers of $\mathrm{pr}_{\sigma}$.
\end{lemma}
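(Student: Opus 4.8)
The plan is to prove Lemma \ref{lemma:torsor} in two stages: first establish the standard identification of $\pi_{k}(G,e)$ with relative homotopy classes of maps from the disk that are constant on the boundary, then use this to manufacture a well-defined free action on $[\overline{c_{\sigma}},G,g_{0}]$.

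First I would recall the classical fact that $\pi_{k}(G,e)$ is naturally isomorphic to the set of homotopy classes of maps $g:\overline{\DD^{k}}\to G$ with $g|_{\partial\overline{\DD^{k}}} = e$, where homotopies are required to fix the boundary. This is just the standard collapse $\overline{\DD^{k}}/\partial\overline{\DD^{k}}\cong S^{k}$, under which a map constant on the boundary descends to a based map $S^{k}\to G$. The group operation claimed in the statement, \emph{pointwise} multiplication of two such maps using the group structure of $G$, agrees with the usual concatenation product on $\pi_{k}$: this is the familiar Eckmann--Hilton argument. Since the two products share units (the constant map $e$) and satisfy the interchange law --- pointwise multiplication of maps is compatible with concatenation of homotopies because $G$ is a topological group --- they coincide and are both commutative. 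I would cite this as a known result, as the remark preceding the lemma already flags it.

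Next I would construct the action. Fix a $k$-cell $c_{\sigma}$ and a chart $\psi_{\sigma}:\overline{\DD^{k}}\to\overline{c_{\sigma}}$ as in Remark \ref{remark:trivialization}. Given $[a]\in\pi_{k}(G,e)$ represented by $a:\overline{\DD^{k}}\to G$ with $a|_{\partial\overline{\DD^{k}}}=e$, and a class $[g]\in[\overline{c_{\sigma}},G,g_{0}]$, define the action by $[a]\cdot[g] = [(a\circ\psi_{\sigma}^{-1})\cdot g]$, the pointwise product in $G$. Because $a$ is $e$ on the boundary, the product $(a\circ\psi_{\sigma}^{-1})\cdot g$ agrees with $g$ on $\partial\overline{c_{\sigma}}\supset\cC_{0}|_{\overline{c_{\sigma}}}$, so it still restricts to $g_{0}$ on the $0$-cells and hence lands in $[\overline{c_{\sigma}},G,g_{0}]$. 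I would then check well-definedness: homotopies of $a$ rel boundary and homotopies of $g$ rel $\cC_{0}$ both produce homotopies of the product rel $\cC_{0}$, so the class $[(a\circ\psi_{\sigma}^{-1})\cdot g]$ depends only on $[a]$ and $[g]$. That the left and right versions agree follows from the commutativity established in the first stage: $a\cdot g$ and $g\cdot a$ are homotopic rel $\cC_{0}$ since $a$ can be slid past $g$ using the Eckmann--Hilton interchange, the boundary values being untouched.

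The substantive content, and what I expect to be the main obstacle, is identifying the orbits with the fibers of $\mathrm{pr}_{\sigma}$ and showing the action is \emph{free}. For freeness, suppose $[a]\cdot[g]=[g]$, i.e. $(a\circ\psi_{\sigma}^{-1})\cdot g\simeq g$ rel $\cC_{0}$; multiplying the homotopy pointwise by $g^{-1}$ (which is legitimate since $G$ is a group and $g^{-1}$ is cellularly smooth) yields a homotopy from $a\circ\psi_{\sigma}^{-1}$ to the constant $e$ rel $\cC_{0}$, and since $a$ is already $e$ on all of $\partial\overline{c_{\sigma}}$ this exhibits $[a]$ as trivial in $\pi_{k}(G,e)$. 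For the orbit description, two classes $[g],[g']$ have the same image under $\mathrm{pr}_{\sigma}$ precisely when they restrict to the same boundary homotopy data, i.e. $[g|_{\partial\overline{c_{\sigma}}}]=[g'|_{\partial\overline{c_{\sigma}}}]$ rel $\cC_{0}$; the delicate point is that one must first homotope $g'$ rel $\cC_{0}$ so that it agrees with $g$ \emph{on the nose} over all of $\partial\overline{c_{\sigma}}$ --- here the triviality-as-free-homotopy-class condition built into $\cH_{\sigma}$ is exactly what removes the obstruction, via Lemma \ref{lemma:glueing} --- after which $a\eqdef (g'\cdot g^{-1})\circ\psi_{\sigma}$ is well-defined, equals $e$ on $\partial\overline{\DD^{k}}$, and satisfies $[a]\cdot[g]=[g']$. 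Managing this boundary-matching step carefully, so that the quotient $g'\cdot g^{-1}$ genuinely descends to an element of $\pi_{k}(G,e)$ rather than merely a free homotopy class, is where the real care is needed.
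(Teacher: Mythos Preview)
Your proposal is correct and follows essentially the same approach as the paper. The only notable difference is one of emphasis: the paper spends most of its proof constructing an explicit homotopy between the concatenation product $g\ast g'$ and the pointwise products $gg'$, $g'g$ (via interpolating families $\DD^{k}_{\pm}(t)$ of subdisks), whereas you dispatch this with the Eckmann--Hilton argument; conversely, for the torsor claims the paper simply asserts that the action is ``clearly free'' and that transitivity on fibers is clear, while you supply the actual argument---multiplying by $g^{-1}$ for freeness, and homotoping $g'$ to match $g$ on the boundary before forming the quotient for transitivity.
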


\begin{proof}
Let  $x_{1},\dots,x_{k}$ be cartesian coordinates in $\RR^{k}$. The multiplication of two classes $[g], [g']\in\pi_{k}(G,e)$ can be described in terms of the choice of a pair of diffeomorphisms $\psi_{\pm}:\DD^{k}_{\pm}\to \DD^{k}$, where $\DD^{k}_{+}$ (resp. $\DD^{k}_{-}$) is the set of points in $\DD^{k}$ such that $x_{k}> 0$ (resp. $x_{k} < 0$), by letting $[g]\ast [g']$ be the class of maps $g\ast g'$ such that 
\[
g\ast g'|_{\DD^{k}_{+}} = g\circ \psi_{+}\quad \text{and} \quad g\ast g'|_{\DD^{k}_{-}} = g'\circ \psi_{-} 
\]
for any pair of representatives $g,g'$. Upon the choice of a pair of 1-parameter families of open cells $\DD^{k}_{\pm}(t)\subset\DD^{k}$ such that 
\[
\DD^{k}_{\pm}(0) = \DD^{k}_{\pm}\qquad \text{and} \qquad\DD^{k}_{\pm}(1) = \DD^{k}, 
\]
and complemented with a pair of 1-parameter families of diffeomorphisms $\psi_{\pm}^{t}:\DD^{k}_{\pm}(t)\to\DD^{k}$ such that $\psi_{\pm}^{0} = \psi_{\pm}$ and $\psi_{\pm}^{1} = \textrm{Id}$, we can define a based homotopy between any given representative $g\ast g'$ and the pointwise product $gg':\overline{\DD^{k}}\to G$ by letting $g\ast g' (t)$ to be equal to 
\[
\left\{\begin{array}{cl}
g\circ\psi_{+}^{t} & \text{on} \quad\DD^{k}_{+}(t)\setminus \DD^{k}_{+}(t)\cap \DD^{k}_{-}(t),\\\\
 g'\circ\psi_{-}^{t} &  \text{on} \quad\DD^{k}_{-}(t)\setminus \DD^{k}_{-}(t)\cap \DD^{k}_{-}(t),\\\\
 \left(g\circ\psi_{+}^{t}\right) \left(g'\circ\psi_{-}^{t}\right) & \text{on} \quad\DD^{k}_{-}(t)\cap \DD^{k}_{-}(t).
\end{array}\right.
\]
A similar argument shows the existence of a homotopy between $g\ast g'$ and $g'g$. The actions of $\pi_{k}(G,e)$ on $[\overline{c_{\sigma}},G,g_{0}]$ by pointwise left and right multiplication are equivalent by the previous argument, and are free since given any $[g]\in [\overline{c_{\sigma}},G,g_{0}]$, and $[f]\in \pi_{k}(G,e)$, the classes $[fg] = [gf]$ and $[g]$ coincide if and only if $[f]$ is the identity in $\pi_{k}(G,e)$. They also preserve the fibers of $\mathrm{pr}_{\sigma}$, and the induced action on any given fiber of $\mathrm{pr}_{\sigma}$ is transitive.
\end{proof}

\begin{theorem}[Dissection of extended lattice gauge fields]\label{theo:dissection-data}

Every ELG field $\{\mathrm{PT}_{\cC}\}$ on $(M,\cC, \Gamma)$ is equivalent to 
map which assigns, to every flag $\overline{c_{\sigma}}\subset\overline{c_{\tau}}$ in $\cC$, $c_{\sigma}\notin\cC_{0}$, the following collection of local homotopy data of glueing maps:
 
\noindent \emph{(a)}  To every 0-subcell $\overline{c_{\sigma'}}\subseteq \overline{c_{\sigma}}$, a group element 
\[
g_{\sigma\tau}(p_{\sigma'}) = \mathrm{PT}_{\Gamma}\left(\left[\gamma^{p_{\sigma'}}_{\sigma\tau}\right]\right)\in G, 
\]
\noindent \emph{(b)} More generally, to every $k$-subcell $c_{\sigma'}\subseteq \overline{c_{\sigma}}$, $k >0$, an extension class of maps $[g_{\sigma\tau}|_{\overline{c_{\sigma'}}}]$ from $\partial\overline{c_{\sigma'}}$ to $\overline{c_{\sigma'}}$, given the inductive boundary homotopy constraint in $\left[\partial\overline{c_{\sigma'}},G,g_{\sigma\tau}|_{\cC_{0,\sigma'}}\right]$, when $k\geq 2$,\footnote{The sum in the second term of the equality denotes the glueing of homotopy classes of cellularly smooth maps to $\partial\overline{c_{\sigma'}}$ from lemma \ref{lemma:glueing} for the map $g_{\sigma\tau}|_{\cC_{0,\sigma'}}$ of values at 0-cells.}
\begin{equation}\label{eq:boundary-constraint}
\left[g_{\sigma\tau}|_{\partial\overline{c_{\sigma'}}}\right] = \sum_{\{c_{\sigma''}\in\cC_{k-1} \,|\, c_{\sigma''}\subset\overline{c_{\sigma'}}\}} \left[ g_{\sigma\tau}|_{\overline{c_{\sigma''}}}\right], 
\end{equation}
which is trivial when identified with the free homotopy class that contains it. 
 Moreover, whenever $c_{\sigma'}\subset \overline{c_{\sigma}}\subset \overline{c_{\tau}}$, the compatibility conditions
\[
g_{\sigma'\tau}(p_{\sigma''}) = g_{\sigma' \sigma}(p_{\sigma''})\cdot g_{\sigma \tau}(p_{\sigma''}),
\]
are satisfied at every 0-cell $c_{\sigma''}\subset \overline{c_{\sigma'}}$,  and more generally, the compatibility condition
\begin{equation}\label{eq:comp-cond-ext}
\left[g_{\sigma'\tau} |_{\overline{c_{\sigma''}}}\right] = \left[\left(g_{\sigma'\sigma} |_{\overline{c_{\sigma''}}}\right)\cdot \left(g_{\sigma\tau} |_{\overline{c_{\sigma''}}}\right)\right],\footnote{The right-hand side denotes the class of products of piecewise-smooth representatives in the relative extension classes $\left[\left(g_{\sigma'\sigma} |_{\overline{c_{\sigma''}}}\right)\right]$, $\left[\left(g_{\sigma\tau} |_{\overline{c_{\sigma''}}}\right)\right]$, which coincides with the corresponding relative extension class of any class representative (cf. lemma \ref{lemma:torsor}).}
\end{equation}
is satisfied at every $k$-cell $c_{\sigma''}\subseteq\overline{c_{\sigma'}}$, $k=1,\dots,\dim(c_{\sigma'})$. 
\end{theorem}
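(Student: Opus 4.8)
The plan is to fix a representative cellular parallel transport map $\mathrm{PT}_\cC$ of the given extended lattice gauge field and to read off from it, via \eqref{eq:glueing PT}, the collection of glueing maps $\{g_{\sigma\tau}\}_{\overline{c_\sigma}\subset\overline{c_\tau}}$, each of which is cellularly smooth by Definition \ref{def:PT}. The values $g_{\sigma\tau}(p_{\sigma''})$ at $0$-subcells $c_{\sigma''}\subset\overline{c_\sigma}$ depend only on the restriction of $\mathrm{PT}_\cC$ to the discrete subgroupoid, i.e. on the standard lattice gauge field $\mathrm{PT}^0_\Gamma$, which yields assignment (a). The first thing I would verify is that the entire assignment descends to the equivalence class: if $\mathrm{PT}_\cC\sim\mathrm{PT}'_\cC$, then by Definition \ref{def:ELGT data} their restrictions to $\cP_\mathfrak{F}$ are homotopic rel the fixed standard lattice gauge field over $\cP^0_\mathfrak{F}$, and such a homotopy transports, again through \eqref{eq:glueing PT}, to a homotopy of glueing maps fixing all $0$-cell values; consequently the relative extension classes $\bigl[g_{\sigma\tau}|_{\overline{c_{\sigma''}}}\bigr]$ are invariants of the extended lattice gauge field and not merely of the chosen representative.

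Next I would carry out the dissection of each homotopy class $\bigl[g_{\sigma\tau}|_{\overline{c_{\sigma''}}}\bigr]$ by induction on $k=\dim c_{\sigma''}$, using the recursive description of $[\overline{c_{\sigma''}},G,g_0]$ preceding Lemma \ref{lemma:torsor}. At the inductive step the surjection $\mathrm{pr}_{\sigma''}$ splits the class into its boundary homotopy datum in $\cH_{\sigma''}$, namely the already-constructed classes on the $(k-1)$-subcells appearing in \eqref{eq:boundary-constraint}, together with an extension class living in the fibre, which by Lemma \ref{lemma:torsor} is a torsor under $\pi_k(G,e)$ and is therefore a well-defined invariant once the boundary datum is fixed; this produces assignment (b). The boundary homotopy constraint \eqref{eq:boundary-constraint} is then automatic: since $g_{\sigma\tau}$ is a genuine cellularly smooth map defined on all of $\overline{c_{\sigma''}}$, its restriction to $\partial\overline{c_{\sigma''}}$ extends over the disk, so the glued class on $S^{k-1}$ of Lemma \ref{lemma:glueing} is null-homotopic once the prescribed $0$-cell values are forgotten --- which is exactly the triviality demanded of elements of $\cH_{\sigma''}$.

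The compatibility conditions I would obtain directly from the factorization of glueing maps. At $0$-cells the identity $g_{\sigma'\tau}(p_{\sigma''})=g_{\sigma'\sigma}(p_{\sigma''})\,g_{\sigma\tau}(p_{\sigma''})$ is the groupoid-homomorphism property of $\mathrm{PT}^0_\Gamma$ applied to $[\gamma_{\sigma''\sigma'}]^{-1}\cdot[\gamma_{\sigma''\tau}] = \bigl([\gamma_{\sigma''\sigma'}]^{-1}\cdot[\gamma_{\sigma''\sigma}]\bigr)\cdot\bigl([\gamma_{\sigma''\sigma}]^{-1}\cdot[\gamma_{\sigma''\tau}]\bigr)$, using (a). Pointwise over a general $k$-subcell the same computation with \eqref{eq:glueing PT} gives $g_{\sigma'\tau}=(g_{\sigma'\sigma})\cdot(g_{\sigma\tau})$ as honest maps, the relative version of \eqref{eq:compatibility-glueing}; passing to homotopy classes and invoking Lemma \ref{lemma:torsor} to see that the class of a product of representatives is independent of the choices yields \eqref{eq:comp-cond-ext}.

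Finally, to complete the equivalence I would establish the converse: that $\mathrm{PT}^0_\Gamma$ together with any collection of local homotopy data satisfying \eqref{eq:boundary-constraint} and \eqref{eq:comp-cond-ext} reconstructs a unique extended lattice gauge field. Here Lemma \ref{lemma:glueing} is the essential tool: the boundary constraint is precisely the hypothesis that lets the extension classes on the $k$-cells be glued, cell by cell and dimension by dimension, into well-defined classes $[g_{\sigma\tau}]$ on each $\overline{c_\sigma}$, while \eqref{eq:comp-cond-ext} guarantees that these satisfy the factorization \eqref{eq:compatibility-glueing}, so that they are the glueing maps of an actual homotopy class of $\mathrm{PT}_\cC$ restricted to $\cP_\mathfrak{F}$. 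I expect this last gluing/reconstruction step to be the main obstacle: one must check that the inductive application of Lemma \ref{lemma:glueing} is consistent across the whole flag structure --- that the order in which neighbouring cells are merged does not affect the resulting class (the independence argument already embedded in Lemma \ref{lemma:glueing}), and that the freeness of the $\pi_k(G,e)$-action from Lemma \ref{lemma:torsor} makes each extension class a faithful invariant --- so that no information about the extended lattice gauge field is either lost or spuriously added in passing to the dissected data.
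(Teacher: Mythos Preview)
Your forward direction matches the paper's argument closely, and your identification of what needs to be checked (well-definedness on the equivalence class, the inductive dissection via $\mathrm{pr}_{\sigma''}$, automaticity of \eqref{eq:boundary-constraint}, and the compatibility from the pointwise factorization) is correct and in fact somewhat more explicit than the paper, which calls the dissection ``straightforward.''

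The gap is in your converse. You assemble, via Lemma~\ref{lemma:glueing}, a homotopy class of collections of glueing maps $\{g_{\sigma\tau}\}$ satisfying the factorization \eqref{eq:compatibility-glueing}, and then assert that ``they are the glueing maps of an actual homotopy class of $\mathrm{PT}_\cC$ restricted to $\cP_{\mathfrak{F}}$.'' But an extended lattice gauge field is, by Definition~\ref{def:ELGT data}, an equivalence class of genuine cellular parallel transport maps --- groupoid homomorphisms $\cP_\cC\to G$ with the smoothness of Definition~\ref{def:PT} --- not merely a consistent family of maps on the generating path families. You have not produced any $\mathrm{PT}_\cC$; you have only produced what would be its values on $\mathfrak{F}$. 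The obstacle you flag (consistency of the order of gluing) is handled already inside Lemma~\ref{lemma:glueing} and is not the issue.

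The paper closes this gap geometrically: it fixes a representative collection $\{g_{\sigma\tau}\}$, builds the principal bundle $P$ from the induced clutching maps, and then constructs an honest smooth connection on $P$ by specifying $G$-equivariant horizontal lifts of smooth representatives of the path families in $\mathfrak{F}$, subject to (i) the lift of $\gamma^x$ having parallel transport $g_{\sigma'\sigma}(x)$, and (ii) a transversality condition on tangent lines at boundary points. These lifts assemble into a smooth horizontal distribution on each $P|_{c_\sigma}$, which is then extended to a global horizontal distribution, hence a connection, whose parallel transport is the required $\mathrm{PT}_\cC$. This step --- passing from abstract glueing-map data back to an actual connection --- is the substantive content of the converse and is what your proposal is missing.
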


\begin{proof}
The disection of $\{\mathrm{PT}_{\cC}\}$ as relative extension homotopy classes with fixed values over 0-cells is straightforward.
The restriction of the class representatives $g_{\sigma\tau}$ to any subcell $\overline{c_{\sigma'}}$ determines local homotopy classes of maps over $\overline{c_{\sigma'}}$ relative to the 0-cells in its boundary, that can be factored as a boundary homotopy class over $\partial\overline{c_{\sigma'}}$ which is necessarily trivial as a free homotopy class, together with an extension class to the interior $c_{\sigma'}$. The classes associated to $(k-1)$-cells $\overline{c_{\sigma''}}$ in $\partial\overline{c_{\sigma'}}$, for any given $k$-cell $c_{\sigma'}$, glue according to lemma  \ref{lemma:glueing} to class of maps over $\partial\overline{c_{\sigma'}}$ that are homotopic to a constant map. The homotopical data obtained this way satisfies the compatibility conditions stated above. 

Conversely, given a collection of homotopy classes of extension maps to $c_{\sigma''}$ for every flag $\overline{c_{\sigma''}}\subset\overline{c_{\sigma}}\subset\overline{c_{\tau}}$, with fixed values at 0-cells and satisfying the glueing compatibility conditions above, it follows from lemma \ref{lemma:glueing} that we can reconstruct a class of collections of glueing maps  $\{\{g_{\sigma\tau}\}_{\overline{c_{\sigma}}\subset\overline{c_{\tau}}}\}$, under relative homotopy for the fixed values at 0-cells, as there are no obstructions to extending a given map to the interior of a closed cell. Such a class of collections of glueing maps is equivalent to an ELG field by theorem \ref{theo:hom-class-glueing}.
\end{proof}

The restrictions $\left\{\mathrm{PT}_{\cC}|_{\overline{c_{\sigma}}}\,|\, c_{\sigma}\in\cC\right\}$ of an ELG field are intrinsically \emph{local}, as parallel transport evaluations of the field along the plaquettes contained in $\overline{c_{\sigma}}$ are. As an important consequence, a given ELG field determines a given bundle structure \emph{locally}, in the sense that if its restrictions are known for every $n$-cell in $\cC$, the ELG field (and in particular, its principal $G$-bundle) can be recovered. Hence, any local action functional defined on ELG fields could be expressed as a sum of functionals of restrictions over $n$-cells. 

Consider the core subgroupoid $\cP_{\cD,\text{min}}\subset \cP_{\cD}$ generated by the cellular path subfamilies $\cF_{vw}$ for $c_{v},c_{w}\in\cC_{n}$, $\overline{c_{v}}\cap\overline{c_{w}}\neq\emptyset$.  With it we can define bigger homotopy classes of smooth parallel transport maps $\{\mathrm{PT}_{\cC}\}_{\min}$ with fixed values over the discrete subgroupoid $\cP_{\Gamma,\min} = \cP_{\Gamma}\cap \cP_{\cD,\min}$. Denote the projection of an ELG field $\{\mathrm{PT}_{\cC}\}$ onto the larger class containing it by $\textrm{pr}_{\text{core}}(\{\mathrm{PT}_{\cC}\})$.

\begin{definition}
The \emph{core} of an extended lattice gauge field $\{\mathrm{PT}_{\cC}\}$ is the resulting homotopy class of smooth parallel transport maps with respect to any core subgroupoid $\cP_{\cD,\min}$, i.e. $\{\mathrm{PT}_{\cC}\}_{\min} := \mathrm{pr}_{\text{core}}(\{\mathrm{PT}_{\cC}\})$.
\end{definition}

The core of an extended lattice gauge field constitutes the minimal local homotopical information that is required to recover an isomorphism class of principal $G$-bundles.
On a cotriangulation $\cC$, every $c_{\sigma} \in\cC_{n-2}$ determines 	two sets of triples $\{c_{v_{1}}, c_{v_{2}},c_{v_{3}}\}\subset\cC_{n}$, $\{c_{\tau_{1}},c_{\tau_{2}}, c_{\tau_{3}}\} \subset \cC_{n-1}$  such that $\overline{c_{\sigma}} = \overline{c_{v_{1}}}\cap\overline{c_{v_{2}}}\cap\overline{c_{v_{3}}} = \overline{c_{\tau_{1}}}\cap\overline{c_{\tau_{2}}}\cap\overline{c_{\tau_{3}}}$, together with a system of gapless flags
\begin{equation}\label{eq:cyclic triples}
\begin{matrix}
&&&&\overline{c_{v_{1}}}&&&&\\
&&&\rotatebox[origin=c]{30}{$\subset$}&&\rotatebox[origin=c]{-30}{$\supset$}&&&\\
&&\overline{c_{\tau_{3}}}&\supset&\overline{c_{\sigma}}&\subset&\overline{c_{\tau_{2}}}&&\\
&\rotatebox[origin=c]{30}{$\supset$}&&&\rotatebox[origin=c]{-90}{$\subset$}&&&\rotatebox[origin=c]{-30}{$\subset$}&\\
\overline{c_{v_{2}}}&&\supset &&\overline{c_{\tau_{1}}}&&\subset&&\overline{c_{v_{3}}}\\
\end{matrix}
\end{equation}
For every triple of $n$-cells as above, the cocycle condition for the restriction of the triple of clutching maps to their common $(n-2)$-cell closure $\overline{c_{\sigma}}$ can be formulated as their equivariance under suitable $\mathrm{S}_{3}$-actions, as follows. There is an action of $\mathrm{S_{3}}$ on $G \times G \times G$, prescribed on a choice of two generators in the following way: for the (even) 3-cycle $(123)$, $(123)\cdot (g_{1},g_{2},g_{3}) = (g_{3},g_{1},g_{2})$, and for the (odd) transposition $(12)$, $(12)\cdot (g_{1},g_{2},g_{3}) = (g_{2}^{-1},g_{1}^{-1},g_{3}^{-1})$.
Consider the multiplication map
\[
T_{G}:G \times G \times G \to G,\qquad (g_{1},g_{2},g_{3})\mapsto g_{1}g_{2}g_{3},
\]
let $V_{G}=T^{-1}_{G}(e) $, and denote by $\mathrm{pr}_{i}$ the projections from $G \times G \times G$ into the $i$th factor. 
It follows that $V_{G}$ is invariant under the $\mathrm{S}_{3}$-action defined before, and that $V_{G} \cong G\times G$ under the projections $\mathrm{pr}_{i}\times \mathrm{pr}_{j}$ to any pair of distinct $i,j\in\{1,2,3\}$. The previous action, which we call \emph{triadic}, is faithful. 

Consider a triple of clutching maps for any triple of $n$-cells $\{c_{v_{1}},c_{v_{2}},c_{v_{3}}\}$ as before. 
Define the maps 
\[
\mathbf{h}^{ijk}_{\sigma} := \left(h_{v_{i}v_{j}}|_{\overline{c_{\sigma}}}\right)\times \left(h_{v_{j}v_{k}}|_{\overline{c_{\sigma}}}\right)\times \left(h_{v_{k}v_{i}}|_{\overline{c_{\sigma}}}\right):\overline{c_{\sigma}}\times \overline{c_{\sigma}}\times \overline{c_{\sigma}} \to G\times G\times G
\]
whose restriction to the diagonal $\Delta(\overline{c_{\sigma}}\times\overline{c_{\sigma}}\times\overline{c_{\sigma}})$ lies in $V_{G}$.
The arbitrariness of the labeling in the triple of $n$-cells is equivalent to the fact that the $\mathrm{S}_{3}$-action on the set of maps $\left\{\mathbf{h}^{ijk}_{\sigma}\right\}$
by index permutations, is given by postcomposition with the triadic action on their respective images. 

\begin{corollary}\label{cor:core}
The core $\{\mathrm{PT}_{\cC}\}_{\min}$ of an extended lattice gauge field $\{\mathrm{PT}_{\cC}\}$ is equivalent to a collection of the following local homotopy data:

\noindent \emph{(a)} To every triple of elements $c_{v_{1}}, c_{v_{2}}, c_{v_{3}}\in\cC_{n}$ such that   $\overline{c_{v_{1}}}\cap\overline{c_{v_{2}}}\cap \overline{c_{v_{3}}} = \overline{c_{\sigma}}$
with $c_{\sigma}\in\cC_{n-2}$ as in \eqref{eq:cyclic triples}, we assign
\begin{itemize}
\item[(i)] For every 0-cell $c_{\sigma''_{0}}\subset\overline{c_{\sigma}}$, a point $\mathbf{h}_{\sigma}(c_{\sigma''_{0}})\in V_{G}$,
\item[(ii)] More generally, for every $k$-cell $c_{\sigma'}\subseteq \overline{c_{\sigma}}$, $k=1,\dots,n-2$, a relative homotopy class of maps 
\[
\left[\mathbf{h}_{\sigma}|_{\overline{c_{\sigma'}}}:\overline{c_{\sigma'}}\to V_{G}\right], 
\]
with fixed values over 0-subcells, representing an extension class  from $\partial \overline{c_{\sigma'}}$ to $\overline{c_{\sigma'}}$, that is determined by the inductive boundary data in $\left[\partial \overline{c_{\sigma'}},V_{G},\mathbf{h}_{\sigma}|_{\cC_{0,\sigma'}}\right]$ when $k\geq 2$,
\[
\left[\mathbf{h}_{\sigma}|_{\partial\overline{c_{\sigma'}}}\right] =\sum_{\{c_{\sigma''}\in\cC_{k-1} \,|\, c_{\sigma''}\subset\overline{c_{\sigma'}}\}} \left[\mathbf{h}_{\sigma}|_{\overline{c_{\sigma''}}}\right],
\footnote{The sum denotes the glueing of homotopy classes of cellularly smooth maps constructed in lemma \ref{lemma:glueing}, while $\mathbf{h}_{\sigma}|_{\cC_{0,\sigma'}}$ is the map of prescribed values at 0-cells. Following lemma \ref{lemma:torsor}, the set of such extension classes is a principal homogeneous space for $\pi_{k}(V_{G},\textbf{e})$, where $\textbf{e} = (e,e,e)$ (cf. corollary \ref{cor:torsor}).}
\]
which is trivial when identified with the free homotopy class that contains it. 
\end{itemize}
The assignment is equivariant for the permutation action of the group $\mathrm{S_{3}}$ on the triple $c_{v_{1}}, c_{v_{2}}, c_{v_{3}}$ and its triadic action on  $V_{G}$. 

\noindent \emph{(b)} To every pair of elements  $c_{v}, c_{w}\in\cC_{n}$ such that $\overline{c_{v}}\cap\overline{c_{w}} = \overline{c_{\tau}}$ with $c_{\tau}\in\cC_{n-1}$,  we assign an extension class  $[h_{v w}:\overline{c_{\tau}}\to G]$ of the inductively glued boundary class over $\partial\overline{c_{\tau}}$ 
 to $\overline{c_{\tau}}$, in such a way that the resulting induced glued class $[h_{w v}\cdot h_{v w}]$ is homotopically trivial.
\end{corollary}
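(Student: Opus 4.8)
The plan is to obtain this corollary as a specialization of the dissection theorem (theorem \ref{theo:dissection-data}) to the minimal collection $\mathfrak{F}_{\min}$. First I would observe that, by construction of the core, the class $\{\mathrm{PT}_{\cC}\}_{\min}$ is determined by the restriction of a representative $\mathrm{PT}_{\cC}$ to the minimal subgroupoid $\cP_{\min}$, relative to the fixed values on $\cP^{0}_{\min}$. Since $\cP_{\min}$ is generated by the families $\cF_{vw}$ joining neighbouring $n$-cells (remark \ref{remark:factorization}), this restriction is recorded precisely by the clutching maps $h_{vw}\colon\overline{c_{\tau}}\to G$, with $\overline{c_{\tau}}=\overline{c_{v}}\cap\overline{c_{w}}$, together with their homotopy classes relative to the prescribed values on the $0$-cells of $\overline{c_{\tau}}$, and subject to the relations $h_{wv}=h_{vw}^{-1}$ and the cocycle conditions over $(n-2)$-cells (a homotopy of homomorphisms restricts to a homotopy of the generating clutching maps respecting these relations at each time). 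Thus the core data is a homotopy class of collections $\{h_{vw}\}$ with fixed $0$-cell values, which is the homotopical cellular bundle data of definition \ref{def:bundle data} refined by the standard lattice gauge field, and the remaining task is to dissect this class into its codimension-one and codimension-two pieces.

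For part (b), I would apply the single-cell part of the dissection argument to each clutching map. Each $h_{vw}$ is a cellularly smooth map on the $(n-1)$-cell $\overline{c_{\tau}}$ with fixed $0$-cell values; by lemma \ref{lemma:torsor} its class relative to the $0$-cells factors as the inductively glued boundary class on $\partial\overline{c_{\tau}}$ (an $(n-2)$-sphere) together with a free extension class to the interior, the latter forming a torsor under $\pi_{n-1}(G,e)$. The relation $h_{wv}=h_{vw}^{-1}$ from remark \ref{remark:factorization} then forces $[h_{wv}\cdot h_{vw}]$ to be the class of the constant map $e$, hence homotopically trivial, yielding (b).

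For part (a), the key point is that the boundary values of the clutching maps over the $(n-2)$-cells are not free, but are constrained by the cocycle condition. By remark \ref{remark:intersection-cells}, each $c_{\sigma}\in\cC_{n-2}$ is the common closure of exactly three $(n-1)$-cells arising from a triple $\{c_{v_{1}},c_{v_{2}},c_{v_{3}}\}\subset\cC_{n}$, and on $\overline{c_{\sigma}}$ the diagonal restriction $\mathbf{h}_{\sigma}=(h_{v_{1}v_{2}}\times h_{v_{2}v_{3}}\times h_{v_{3}v_{1}})|_{\Delta}$ takes values in $V_{G}$. Since $V_{G}\cong G\times G$ is a smooth manifold (hence an $H$-space) with $\pi_{k}(V_{G},\mathbf{e})\cong\pi_{k}(G,e)^{2}$, I would run lemmas \ref{lemma:glueing} and \ref{lemma:torsor} verbatim with target $V_{G}$ in place of $G$, applied to $\mathbf{h}_{\sigma}$ over $\overline{c_{\sigma}}$ and all its subcells. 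This produces exactly the inductive homotopy data of (a): a point of $V_{G}$ over each $0$-subcell, and over each $k$-cell $c_{\sigma'}\subseteq\overline{c_{\sigma}}$ an extension class subject to the inductive boundary constraint in $[S^{k-1},V_{G},\mathbf{h}_{\partial\overline{c_{\sigma'}}}]$, each fibre a torsor under $\pi_{k}(V_{G},\mathbf{e})$. The $\mathrm{S}_{3}$-equivariance is then immediate: relabelling the triple permutes the three clutching maps, and by the discussion preceding definition \ref{def:cell-eq} this is implemented on $\mathbf{h}_{\sigma}$ by postcomposition with the triadic action on $V_{G}$, an equivariance inherited by the extracted homotopy data.

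The converse, and the \emph{main obstacle}, is to show that compatible secondary data of types (a) and (b) reconstructs the core faithfully, that is, glues back to a homotopy class of clutching maps recovering $\{\mathrm{PT}_{\cC}\}_{\min}$. Given the data I would first fix, over each $(n-2)$-cell, a representative of $\mathbf{h}_{\sigma}$ realizing (a); its three $G$-valued components prescribe the boundary behaviour of the clutching maps along that cell, with the cocycle condition automatic since $\mathbf{h}_{\sigma}$ lands in $V_{G}$. Then, over each $(n-1)$-cell, I would glue these $(n-2)$-cell boundary values by lemma \ref{lemma:glueing} and extend to the interior by the class chosen in (b). The delicate step, where I expect the real work to lie, is reconciling the redundancy between the two data sets: because $V_{G}\cong G\times G$ records two of the three clutching maps on each codimension-two cell, one must check that the boundary restrictions dictated by the (a)-data on the various $(n-2)$-cells bounding a given $(n-1)$-cell are mutually consistent and agree with the $0$-cell values fixed by $\mathrm{PT}^{0}_{\Gamma}$, so that lemma \ref{lemma:glueing} applies without over- or under-counting. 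Verifying this compatibility --- essentially that the codimension-two cocycle data and the codimension-one extension data assemble into a single well-defined element of the bundle data of definition \ref{def:bundle data} --- is the heart of the argument; once established, the equivalence with the core follows from theorem \ref{theo:dissection-data}.
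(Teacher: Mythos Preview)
Your proposal is correct and follows essentially the same approach as the paper: identify the core with a relative homotopy class of collections of clutching maps via \eqref{eq:clutching} and theorem \ref{theo:dissection-data}, then dissect into the $V_{G}$-valued data over $(n-2)$-cells and the residual extension classes over $(n-1)$-cells, with the converse via lemma \ref{lemma:glueing}.

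One remark: the step you flag as the ``main obstacle'' is handled more directly in the paper and is less delicate than you suggest. The reconciliation of the (a)- and (b)-data is obtained simply by taking the three component projections $\mathrm{pr}_{i}\circ\mathbf{h}_{\sigma}$, $i=1,2,3$, which by definition of $V_{G}$ and the $\mathrm{S}_{3}$-equivariance recover the boundary restrictions $h_{v_{j}v_{k}}|_{\overline{c_{\sigma}}}$ consistently (the third component being forced by the cocycle condition, so there is no over-counting). These projected classes, glued via lemma \ref{lemma:glueing} over each $\partial\overline{c_{\tau}}$ and then extended by the (b)-data, assemble the full homotopy class of clutching maps; no further compatibility check is needed beyond what the $\mathrm{S}_{3}$-equivariance already encodes.
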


\begin{proof}
Follows from theorem \ref{theo:dissection-data} and the relation \eqref{eq:clutching} between clutching and glueing maps, as there is an the equivalence between the core of a ELG field and a homotopy class of collections of clutching maps $\{\{h_{vw}\}_{\overline{c_{v}}\cap\overline{c_{w}}\neq\emptyset}\}$ with fixed values over 0-cells. When the latter are dissected over compatible triples, we obtain the local homotopy data stated above, the first part arising from any triple of cells $\overline{c_{v_{1}}}\cap\overline{c_{v_{2}}}\cap \overline{c_{v_{3}}} = \overline{c_{\sigma}}$, while the residual data takes the form of relative extension classes over all $(n-1)$-cells. Conversely, given any such choice of local homotopy data, it follows from the compatibility conditions, together with lemma \ref{lemma:glueing}, that the homotopy classes induced by the projections $\pi_{i}(\mathbf{h}_{\sigma}|_{\overline{c_{\sigma''}}})$ of the maps $\mathbf{h}_{\sigma}|_{\overline{c_{\sigma''}}}$ into components, with $i=1,2,3$, together with the extension classes $[h_{vw}]$, determine a homotopy class of collections of  clutching maps with fixed values at 0-cells.
\end{proof}

\section{Extended LG fields in small dimensions}\label{sec:small-dim}

To suplement section \ref {sec:cellular data}, we provide the explicit characterization of ELG fields following from theorem \ref{theo:dissection-data} and corollary \ref{cor:core}, for manifolds of dimensions 2, 3 and 4. The characterization suggests a recursive algorithm that could be implemented in general to determine a set of generators for the extension classes, based on a systematic use of the compatibility conditions \eqref{eq:comp-cond-ext}. Recall that for any Lie group $G$, the group $\pi_{1}(G, e)$ is abelian, $\pi_{2}(G,e)$ is trivial, while $\pi_{3}(G, e)$ is torsion-free, and hence isomorphic to $\ZZ^{m}$ for some $m$ \cite{Mil63}.

$\bullet$ ($n=2$)
This is the simplest nontrivial case. The cotriangulations are cell decompositions with exactly 3 edges merging at each vertex \eqref{eq:cyclic triples}, such as the tetrahedral, cubical and dodecahedral cellular representations of the 2-sphere. The only flags that need to be considered take the form $\overline{c_{\tau}}\subset\overline{c_{v}}$ with $c_{\tau}\in\cC_{1}$, and an ELG field is simply a splitting of its core: the latter corresponds to (i) an assignment of a parallel transporter to every path of the form $\left[\gamma_{\sigma v}\right]^{-1}\cdot\left[\gamma_{\sigma w}\right]$ for any pair of 2-cells $\overline{c_{v}},\overline{c_{w}}$ sharing a common boundary $\overline{c_{\tau}}$ containing a 0-cell $c_{\sigma}$, and (ii) a collection of extension classes $[h_{vw}]$ relative to their boundary values, grouping together as the points $\{\mathbf{h}_{\sigma}\} \subset V_{G}$. In turn, an ELG field corresponds to (i) an assignment of a parallel transporter to every path of the form $\left[\gamma_{\sigma \tau}\right]^{-1}\cdot\left[\gamma_{\sigma v}\right]$,\footnote{Observe that $\left[\gamma_{\sigma v}\right]^{-1}\cdot\left[\gamma_{\sigma w}\right] = \left(\left[\gamma_{\sigma \tau}\right]^{-1}\cdot\left[\gamma_{\sigma v}\right]\right)^{-1}\cdot\left(\left[\gamma_{\sigma \tau}\right]^{-1}\cdot\left[\gamma_{\sigma w}\right]\right)$.} and (ii) a collection of extension classes $[g_{\tau v}]$ relative to their fixed (and compatible) boundary values $g_{\tau v}(p_{\sigma})$, $g_{\tau v}(p_{\sigma'})$, $c_{\sigma},c_{\sigma'}\subset\partial \overline{c_{\tau}}$. Hence, every class representative $h_{vw}$ splits as $g_{\tau v}^{-1}\cdot g_{\tau w}$, for a pair of class representatives $g_{\tau v}$, $g_{\tau w}$. All of such extension classes can be parametrized by elements in $\pi_{1}(G,e)$ once auxiliary choices of extension maps are made. The boundary homotopy constraints \eqref{eq:boundary-constraint} and the compatibility conditions \eqref{eq:comp-cond-ext} are vacuous.
 
 \begin{figure}[!ht]
\caption{Stereographic projection of a tetrahedral cellular decomposition of the 2-sphere, together with a cellular network. Base points are indicated with the letter $p$. The 2-cells are labelled with the subscripts $v_{i}$, the 1-cells with $\tau_{j}$, and the 0-cells with $\sigma_{k}$.}\label{sphere}


\centering
\includegraphics[width=0.5\textwidth]{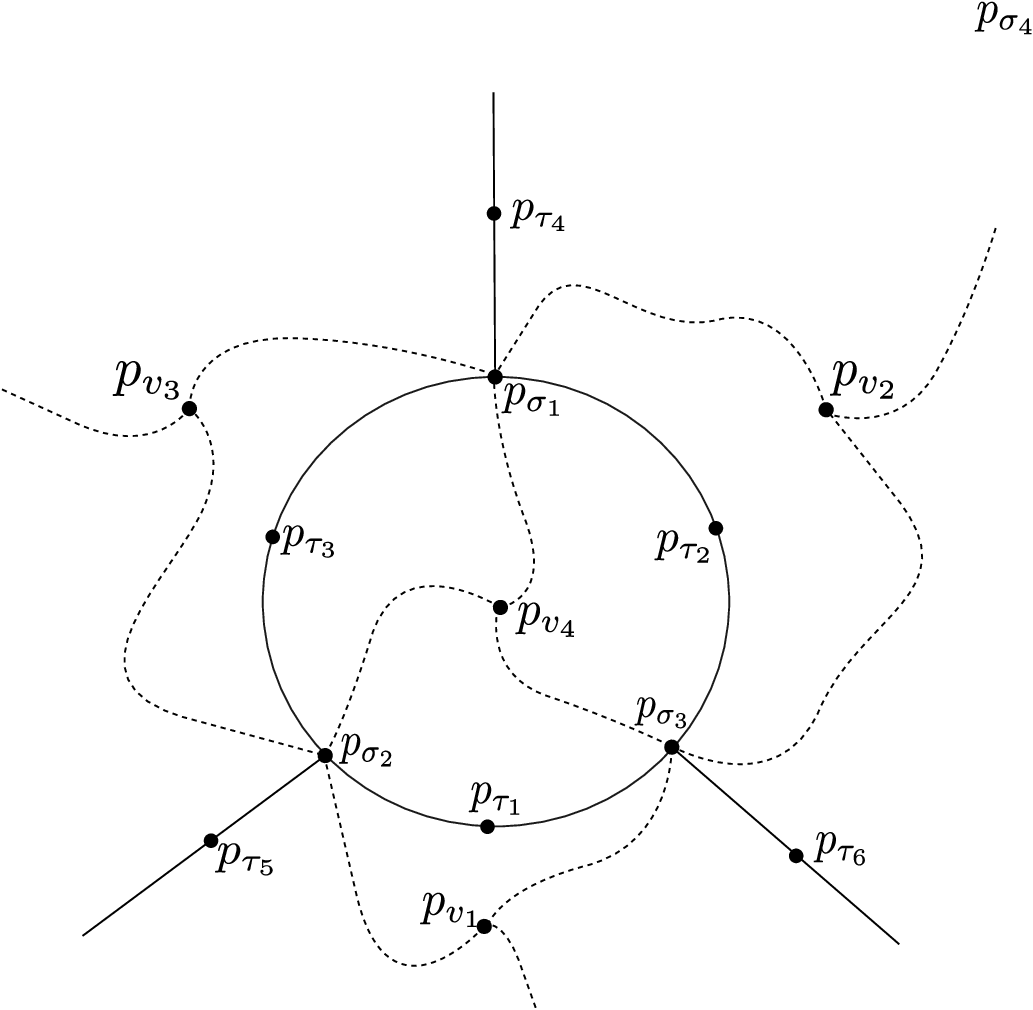}
\end{figure}

In fact, more can be said in dimension 2, regarding the projection of an ELG field to its cellular bundle data. We can prescribe an equivalence class of principal $G$-bundles by means of a ``canonical form" under cellular equivalence of an ELG field. Recall that the equivalence classes of principal $G$-bundles over a surface $S$ are parametrized by $\pi_{1}(G,e)$, and correspond to homotopy classes of transition functions for a trivialization $\{S\setminus\{p\},\cU\}$, where $p$ is an arbitrary point in $S$ and $\cU$ is a small disk containing $p$, after a retraction  from $(S\setminus\{p\})\cap \cU$ to a choice of some simply closed loop $\gamma\subset (S\setminus\{p\})\cap \cU$ is made. If we let $\gamma$ be the boundary of a 2-cell $c_{v}\in\cC$, $p=p_{v}$, and $\cU=\cU_{v}$, we can recover all equivalence classes by choosing any 1-cell $c_{\tau}\subset \partial\overline{c_{v}}$ (such that $\overline{c_{\tau}} = \overline{c_{v}}\cap\overline{c_{w}}$), choosing arbitrary elements in $V_{G}$ for every $c_{\sigma}\in\cC_{0}$, and decreeing all extension classes of clutching maps to be trivial, except for $[h_{v w}]$. The correspondence of such data and equivalence classes of principal $G$-bundles follows when the extension classes of clutching maps over the 1-subcells in $\partial\overline{c_{v}}$ are glued.

$\bullet$ ($n=3$) Over a 3-manifold, besides the prescription of a LG field with generators corresponding to the parallel transports of all paths $\left[\gamma_{\sigma'_{0}\sigma}\right]$ joining the base point of a $k$-cell $c_{\sigma}$, $k = 1,2,3$, and a 0-cell $c_{\sigma'_{0}}$ in its boundary, the compatible values at 0-cells $c_{\sigma'_{0}}$ for the relative homotopy classes of glueing maps can be constructed as
\[
g_{\sigma\tau}\left(p_{\sigma'_{0}}\right) = \mathrm{PT}\left(\left[\gamma_{\sigma'_{0}\sigma}\right]^{-1}\cdot\left[\gamma_{\sigma'_{0}\tau}\right]\right) = \mathrm{PT}\left(\left[\gamma^{p_{\sigma'_{0}}}_{\sigma\tau}\right]\right).
\]
Then, an ELG field can be entirely described in terms of the flags of the form $\overline{c_{\sigma}}\subset\overline{c_{\tau}}$ and $\overline{c_{\sigma}}\subset\overline{c_{v}}$, with $c_{\sigma}\in\cC_{1}$, $c_{\tau}\in\cC_{2}$, $c_{v}\in\cC_{3}$. This is so since the relative homotopy classes $[g_{\sigma\tau}]$, $[g_{\sigma v}]$ are determined by a single extension to $\overline{c_{\sigma}}$, which can be parametrized by elements in the group $\pi_{1}(G,e)$ after an auxiliary choice of extension maps is made. Then, the extensions of a class $[g_{\tau v}]$ over the boundary 1-subcells $\overline{c_{\sigma}}$ are determined recursively from the factorization property of representatives following \eqref{eq:comp-cond-ext}
\[
g_{\tau v}|_{\overline{c_{\sigma}}} = g_{\sigma\tau}^{-1}\cdot g_{\sigma v},
\]
and their glueing over $\partial\overline{c_{\tau}}$ is homotopic to a point in $G$, according to \eqref{eq:boundary-constraint}. The remaining extension class for $[g_{\tau v}]$ to the 2-cell $c_{\tau}$ is necessarily trivial, as a consequence of the triviality of $\pi_{2}(G,e)$.

$\bullet$ ($n=4$) This case is also easy to describe. Let $c_{\sigma'}\in\cC_{1}$, $c_{\sigma}\in\cC_{2}$, $c_{\tau}\in\cC_{3}$, $c_{v}\in\cC_{4}$. Once a LG field has been prescribed in terms of a set of generators as in the case $n=3$,  and the induced choice of all compatible values at 0-cells for the relative homotopy classes of glueing maps is constructed, we proceed as follows. To construct the extension classes, the fundamental scaffolding is determined by the relative homotopy classes $[g_{\sigma' \sigma}]$, $[g_{\sigma' \tau}]$ and $[g_{\sigma' v}]$, which are determined by an extension class from $\partial\overline{c_{\sigma'}}$ to $\overline{c_{\sigma'}}$. As before, such extension classes can be parametrized by elements in $\pi_{1}(G,e)$, once an auxiliary choice of extension map is made. The homotopy classes $[g_{\sigma\tau}|_{\overline{c_{\sigma'}}}]$ and $[g_{\tau v}|_{\overline{c_{\sigma'}}}]$ are determined once again from the factorization property of representatives
\[
g_{\sigma \tau}|_{\overline{c_{\sigma'}}} = g_{\sigma'\sigma}^{-1}\cdot g_{\sigma' \tau},\qquad g_{\tau v}|_{\overline{c_{\sigma'}}} = g_{\sigma'\tau}^{-1}\cdot g_{\sigma' v},
\]
which, in turn, determine the classes $[g_{\sigma v}|_{\overline{c_{\sigma'}}}]$. Once the boundary homotopy constraint \eqref{eq:boundary-constraint} is imposed on the glued classes $[g_{\sigma \tau}|_{\partial\overline{c_{\sigma}}}]$ and $[g_{\tau v}|_{\partial\overline{c_{\sigma}}}]$ (which imply the same constraint for the class $[g_{\sigma v}|_{\partial\overline{c_{\sigma}}}]$), their extensions from $\partial\overline{c_{\sigma}}$ to $\overline{c_{\sigma}}$ are trivial since $\pi_{2}(G,e)$ is so. For the same reason, the boundary homotopy constraint for $[g_{\tau v}|_{\partial\overline{c_{\tau}}}]$  is automatically verified. To conclude, the prescription of the extensions of $[g_{\tau v}]$ from $\partial\overline{c_{\tau}}$ to $\overline{c_{\tau}}$ can be parametrized by $\pi_{3}(G,e)$ once an auxiliary choice of extension map is made. 

A minimal example that holds in the previous cases, and in fact, for arbitrary dimensions, is the $n$-sphere, with a cellular decomposition induced from its realization as the boundary of an $(n+1)$-simplex or an $(n+1)$-cube. For instance, when $n=2$, and $S^{2}$ corresponds to the boundary of a tetrahedron (figure \ref{sphere}), we can list a pair of paths for every 1-cell $c_{\tau}$, joining $p_{\tau}$ and the 0-cells in its boundary $c_{\sigma_{1}},c_{\sigma_{2}}$. Each of these paths gets assigned a corresponding parallel transport value $\mathrm{PT}\left(\left[\gamma_{\sigma_{i}\tau}\right]\right)$, playing the role of a prototypical LG field in the usual sense. The previous paths get complemented with the paths $\left[\gamma_{\sigma'_{0}\sigma}\right]$ in a cellular network, joining the base point of an arbitrary $k$-cell $c_{\sigma}$, $k\geq 2$, to a 0-cell $c_{\sigma'_{0}}$ in its boundary, and also get assigned a parallel transport value. Together, they fully determining a LG field.  The prescription gets completed with an extension class of maps for every flag $\overline{c_{\sigma}}\subset\overline{c_{\tau}}$, and every $k$-subcell $\overline{c_{\sigma'}}\subseteq\overline{c_{\sigma}}$. When $n = 2$, there is one such extension class for every flag $\overline{c_{\tau}}\subset \overline{c_{v}}$ with $c_{\tau}\in\cC_{1}$, and the extensions for the pairs $\overline{c_{\tau}}\subset \overline{c_{v}}$ and $\overline{c_{\tau}}\subset \overline{c_{w}}$, such that $\overline{c_{\tau}} = \overline{c_{v}}\cap\overline{c_{w}}$, are inverse to each other.

\begin{remark}\label{remark:simple-simply-conn}
In the case when $G$ is compact, simple, and simply connected, as is the case for the groups $\mathrm{SU}(r)$ of physical interest, the only relevant homotopical information in small dimensions is contained in the group $\pi_{3}(G,e) \cong \ZZ$ \cite{Bott56,Mil63}. Hence, we conclude \emph{a posteriori} that in such case, LG and ELG fields coincide when $n=2,3$. This observation may look deceiving, but it is actually a confirmation of the fundamental difference in nature between these dimensions and the case $n = 4$, while also confirms that for the former, standard LGT methods are in fact \emph{complete} from a topological perspective. We plan to return to the problem of the determination of the $\mathrm{SU}(r)$-topological charge of an extended lattice gauge field when $n = 4$, and contrast it with the work of Phillips-Stone \cite{PS86,PS90,PS93}.
 \end{remark}

\section{The spaces of extended LG fields}\label{sec:category}

We will now study the spaces of ELG fields for a given Lie group $G$ and a cotriangulation $\cC$ in $M$. Such spaces are smooth finite-dimensional manifolds, whose topology encodes $\Check{H}^{1}\left(M,G\right)$.

Given a triple $(M,\cC,\Gamma)$, let us denote its space of LG fields by $\cM_{\Gamma}$. $\cM_{\Gamma}$ is isomorphic to the Lie group $G^{N_{1}}$, where $N_{1}$ is the number of edges in $\Gamma$, inheriting a smooth manifold structure. Denote the corresponding set of ELG fields by $\cM_{\cC}$. There is an obvious forgetful map 
\begin{equation}\label{eq:covering1}
\mathrm{pr}_{\cC}:\cM_{\cC} \to \cM_{\Gamma}. 
\end{equation}
which is surjective as a consequence of theorem \ref{theo:hom-class-glueing} (see remark \ref{rem:surj}).
In a similar way, if we let  $\cN_{\cC}$ be the set of cores of ELG fields, and $\cN_{\Gamma}$ the smooth manifold of their network data, there is a surjective forgetful map
\begin{equation}\label{eq:covering2}
\mathrm{pr}_{\min}:\cN_{\cC}\to \cN_{\Gamma},
\end{equation}
which together with the additional projections $\mathrm{pr}_{\text{core}}:\cM_{\cC} \to \cN_{\cC},$ and $\mathrm{pr}_{\Gamma,\min}:\cM_{\Gamma} \to \cN_{\Gamma}$ form a commutative diagram

\begin{equation}\label{eq:comm-diag}
\begin{CD}
\cM_{\cC} @>\mathrm{pr}_{\text{core}}>> \cN_{\cC}\\
@V\mathrm{pr}_{\cC}VV @VV\mathrm{pr}_{\min}V\\
\cM_{\Gamma} @>\mathrm{pr}_{\Gamma,\min}>> \cN_{\Gamma}
\end{CD}
\end{equation}
There exist smooth manifold structures on $\cM_{\cC}$ and $\cN_{\cC}$ in such a way that \eqref{eq:covering1} and \eqref{eq:covering2} are covering maps. Such structures are encoded in a ``path lifting" property implicit in the definition of ELG fields. Consider any smooth path $\mathrm{PT}^{t}_{\Gamma}:[0,1]\rightarrow \cM_{\Ga}$  joining any two given LG fields $\mathrm{PT}^{0}_{\Gamma}$ and $\mathrm{PT}^{1}_{\Gamma}$. Then it follows from the inductive mechanism described in theorem \ref{theo:dissection-data} that for every choice of ELG field 
\[
\left\{\mathrm{PT}^{0}_{\cC}\right\}\in \mathrm{pr}_{\cC}^{-1}\left(\mathrm{PT}^{0}_{\Gamma}\right), 
\]
there is an induced homotopy lifting 
\[
\widetilde{\mathrm{PT}^{t}_{\Gamma}}:[0,1]\rightarrow \cM_{\cC} 
\]
whose target $\{\mathrm{PT}^{1}_{\cC}\}:= t\left(\widetilde{\mathrm{PT}^{t}_{\Gamma}}\right)$ lies in $\mathrm{pr}_{\cC}^{-1}\left(\mathrm{PT}^{1}_{\Gamma}\right)$.
Moreover, for any LG field $\mathrm{PT}_{\Gamma}$ in a sufficiently small neighborhood $\cU \subset\cM_{\Ga}$ of $\mathrm{PT}^{0}_{\Gamma}$, its lift $\left\{\mathrm{PT}_{\cC}\right\}\in\cM_{\cC}$ would be independent of the choice of path joining $\mathrm{PT}_{\Gamma}$ and $\mathrm{PT}^{0}_{\Gamma}$, defining a local bijection over $\cU$. Consequently, there is an induced smooth manifold structure on $\cM_{\cC}$. An analogous construction on the points of $\cN_{\cC}$ can be implemented as well. The path lifting property on $\cM_{\cC}$ and $\cN_{\cC}$ implies that the maps \eqref{eq:covering1} and \eqref{eq:covering2} are covering maps. 

The groups of deck transformations of $\cM_{\cC}$ and $\cN_{\cC}$ admit an explicit description in terms of ``generators and relations". Consider the group
\[
\widetilde{G}_{\cC} = \prod_{\overline{c_{\sigma}}\subset\overline{c_{\tau}}}\left(\prod_{\overline{c_{\sigma'}}\subseteq \overline{c_{\sigma}}} G_{\sigma'}\right),
\]
where $G_{\sigma'} = \pi_{k}(G,e)$ for $c_{\sigma'}\in \cC_{k}$ (in particular, $G_{\sigma'} = \{e\}$ when $k = 0$). We will consider a subgroup determined by a series of group relations of two kinds. For any flag $\overline{c_{\sigma'}}\subset\overline{c_{\sigma}}\subset\overline{c_{\tau}}$, and $\overline{c_{\sigma''}}\subseteq\overline{c_{\sigma'}}$, consider the homomorphisms 
\[
\Theta^{\sigma''}_{\sigma'\sigma\tau}:  \widetilde{G}_{\cC}\times  \widetilde{G}_{\cC}\times \widetilde{G}_{\cC}\to G_{\sigma''},
\qquad
\Theta^{\sigma''}_{\sigma'\sigma\tau} := \left(\mathrm{pr}_{\sigma''\sigma'\sigma}\right)\left(\mathrm{pr}_{\sigma''\sigma\tau}\right)\left(\mathrm{pr}_{\sigma''\sigma'\tau}\right)^{-1}
\]
where for an arbitrary $\overline{c_{\sigma'}}\subseteq \overline{c_{\sigma}} \subset \overline{c_{\tau}}$, $\mathrm{pr}_{\sigma'\sigma\tau}$ denotes the projection from $\widetilde{G}_{\cC}$ to the corresponding factor $G_{\sigma'}$ for the flag $\overline{c_{\sigma}}\subset\overline{c_{\tau}}$.
The kernels of these maps among homotopy groups give relations that we will call \emph{multiplicative relations in homotopy}. The second type of relations are described as follows.
For any $(k+1)$-cell $c_{\sigma}$, $k = 1,\dots, n-2$, define the product homomorphism
\[
\alpha_{\sigma} : \prod_{c_{\sigma'}^{k}\subset\partial\overline{c_{\sigma}}} G_{\sigma'}\to \pi_{k}(G,e)
\]
which is unambiguous since $\pi_{k}(G,e)$ is abelian. Its kernel determines a relation that we will call \emph{boundary relation in homotopy}.
On the other hand, consider the group 
\[
\widetilde{H}_{\cC} =  \left(\prod_{\substack{c_{u},c_{v},c_{w}\in\cC_{n}\\\\ \overline{c_{u}}\cap\overline{c_{v}}\cap\overline{c_{w}} = \overline{c_{\sigma}}}}\left(\prod_{\overline{c_{\sigma'}}\subseteq\overline{c_{\sigma}}}H_{\sigma'}\right)\right)\times\left(\prod_{\substack{c_{v},c_{w}\in\cC_{n}\\\\\overline{c_{v}}\cap\overline{c_{w}}\neq\emptyset}}H_{vw}\right)
\]
where $H_{\sigma'} = \pi_{k}\left(V_{G},\mathbf{e}\right)$ if $c_{\sigma'}\in\cC_{k}$, and $H_{vw} = \pi_{n-1}(G,e)$. Consider the homomorphisms
\[
\mu_{vw}: H_{vw}\times H_{wv} \to \pi_{n-1}(G,e),\qquad ([g],[g']) \mapsto [gg']
\]
which, by a slight abuse of notation, will be regarded as defined on $\widetilde{H}_{\cC}\times\widetilde{H}_{\cC}$ after composing with a suitable projection map. In particular, $\ker(\mu_{vw}) = \ker(\mu_{wv})$. Similarly, for $c_{\sigma}\in\cC_{k+1}$, $k = 1,\dots, n-3$  define the product homomorphism
\[
\beta_{\sigma} : \prod_{c_{\sigma'}^{k}\subset\partial\overline{c_{\sigma}}}  \pi_{k}\left(V_{G},\textbf{e}\right) \to  \pi_{k}\left(V_{G},\textbf{e}\right)
\]
whose kernel defines an analogous boundary homotopy relation. Finally, we can consider a triadic action of $\mathrm{S}_{3}$ in $\widetilde{H}_{\cC}$, by acting in the obvious way in the first component, and as the identity in the second component. Such action determines an invariant subgroup $\widetilde{H}_{\cC}^{\mathrm{S}_{3}}\subset \widetilde{H}_{\cC}$. 

\begin{corollary}\label{cor:torsor} 
For any given LG field $\mathrm{PT}_{\Gamma}:\cP_{\Gamma}\to G$ in $\cM_{\Gamma}$, the set of ELG fields in $\mathrm{pr}_{\cC}^{-1}\left(\mathrm{PT}_{\Gamma}\right)$ is a principal homogeneous space for the group 
\[
G_{\cC} = \left(\bigcap_{\overline{c_{\sigma'}}\subseteq\overline{c_{\sigma}}\subset\overline{c_{\tau}}}\ker\left(\alpha_{\sigma'}\right)\right)\cap\left(\bigcap_{\overline{c_{\sigma''}}\subseteq\overline{c_{\sigma'}}\subset\overline{c_{\sigma}}\subset\overline{c_{\tau}}}\ker\left(\Theta^{\sigma''}_{\sigma'\sigma\tau}\right)\right)
\]
Similarly, the set of cores of ELG fields in $\mathrm{pr}_{\min}^{-1}\left(\mathrm{PT}_{\Gamma,\min}\right)$ for any $\mathrm{PT}_{\Gamma,\min}\in \cN_{\Gamma}$ is a principal homogeneous space for the subgroup 
\[
H_{\min} = \left(\bigcap_{\substack{\overline{c_{\sigma'}}\subseteq\overline{c_{\sigma}} = \overline{c_{u}}\cap\overline{c_{v}}\cap\overline{c_{w}}\\\\ c_{u},c_{v},c_{w}\in\cC_{n}}} \ker(\beta_{\sigma'})\right)\cap 
\left(\bigcap_{\substack{c_{v},c_{w}\in\cC_{n}\\\\ \overline{c_{v}}\cap\overline{c_{w}}\neq\emptyset}}\ker(\mu_{vw})\right)\cap \widetilde{H}_{\cC}^{\mathrm{S}_{3}}
\]
\end{corollary}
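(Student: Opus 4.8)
The plan is to construct both coverings directly from the dissection of Theorem~\ref{theo:dissection-data} (respectively the core dissection of Corollary~\ref{cor:core}), using the torsor structure on relative homotopy extension classes supplied by Lemma~\ref{lemma:torsor} to manufacture a fiberwise group action, and then to recognize $G_{\cC}$ (respectively $H_{\min}$) as exactly the subgroup of $\widetilde{G}_{\cC}$ (respectively $\widetilde{H}_{\cC}$) that preserves the dissection constraints. Concretely, I would fix $\mathrm{PT}^{0}_{\Gamma}\in\cM_{\Gamma}$; by Theorem~\ref{theo:dissection-data} the fiber $\mathrm{pr}_{\cC}^{-1}(\mathrm{PT}^{0}_{\Gamma})$ is in bijection with the collections of extension classes $\{[g_{\sigma\tau}|_{\overline{c_{\sigma''}}}]\}$, one for each flag $\overline{c_{\sigma}}\subset\overline{c_{\tau}}$ and each subcell $\overline{c_{\sigma''}}\subseteq\overline{c_{\sigma}}$, whose $0$-cell values are prescribed by $\mathrm{PT}^{0}_{\Gamma}$ and which satisfy the boundary constraint \eqref{eq:boundary-constraint} and the multiplicative compatibility \eqref{eq:comp-cond-ext}. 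For a fixed $k$-subcell with its boundary data held fixed, Lemma~\ref{lemma:torsor} makes the set of extension classes a free and transitive $\pi_{k}(G,e)=G_{\sigma''}$-set, the action being pointwise multiplication by maps trivial on the boundary; in particular this action leaves the boundary data of that cell untouched. Taking the product of these actions over all flags and subcells then yields an action of $\widetilde{G}_{\cC}$ on the set of all collections with the prescribed $0$-cell values.

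The heart of the argument is to verify that a tuple $a=(a_{\sigma''})\in\widetilde{G}_{\cC}$ carries a constraint-satisfying collection to another such collection precisely when $a\in G_{\cC}$. Modifying the $k$-cell datum by $a_{\sigma''}$ alters, through the glueing of Lemma~\ref{lemma:glueing}, the glued boundary class over each incident higher cell by the corresponding homomorphism $\alpha_{\sigma'}$ applied to the boundary tuple --- the glueing being additive because $\pi_{k}(G,e)$ is abelian --- so the boundary constraint \eqref{eq:boundary-constraint} survives if and only if $a\in\bigcap\ker(\alpha_{\sigma'})$. Similarly, since left and right pointwise multiplication agree up to homotopy (Lemma~\ref{lemma:torsor}), the product on the right of \eqref{eq:comp-cond-ext} is altered by $\mathrm{pr}_{\sigma''\sigma'\sigma}\cdot\mathrm{pr}_{\sigma''\sigma\tau}$ while $g_{\sigma'\tau}|_{\overline{c_{\sigma''}}}$ is altered by $\mathrm{pr}_{\sigma''\sigma'\tau}$, so \eqref{eq:comp-cond-ext} survives if and only if $a\in\bigcap\ker(\Theta^{\sigma''}_{\sigma'\sigma\tau})$. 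This identifies $G_{\cC}$ as the stabilizer of the constrained subset. The action there is free, inherited from the freeness in Lemma~\ref{lemma:torsor}, and transitive: given two constraint-satisfying collections with identical $0$-cell values, comparing them cell-by-cell in increasing dimension --- using that the $G_{\sigma''}$-action fixes all lower-dimensional data --- produces a unique difference tuple in $\widetilde{G}_{\cC}$, which must lie in $G_{\cC}$ because both collections satisfy the constraints. Hence $\mathrm{pr}_{\cC}^{-1}(\mathrm{PT}^{0}_{\Gamma})$ is a $G_{\cC}$-torsor.

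For the covering property I would note that each fiber is discrete, being a set of homotopy classes, and that $G_{\cC}$ is a discrete group. Local triviality follows from connectedness of $G$: over a contractible neighborhood $U\subset\cM_{\Gamma}\cong G^{N_{1}}$ one can continuously deform the prescribed $0$-cell values, and since the homotopy type of an extension is unaffected by such a deformation this yields canonical bijections of the relative sets $[\overline{c_{\sigma''}},G,g_{0}]$ as $g_{0}$ ranges over $U$, hence a local product splitting $\mathrm{pr}_{\cC}^{-1}(U)\cong U\times G_{\cC}$ with discrete fiber; thus $\mathrm{pr}_{\cC}$ is a covering with deck group $G_{\cC}$. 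The statement for cores is then obtained by the identical mechanism applied to Corollary~\ref{cor:core}: over generic subcells the role of $\pi_{k}(G,e)$ is played by $\pi_{k}(V_{G},\mathbf{e})=H_{\sigma'}$ acting on the triadic $V_{G}$-valued extension classes, while the clutching data contributes the factors $H_{vw}=\pi_{n-1}(G,e)$; the boundary constraints cut out $\bigcap\ker(\beta_{\sigma'})$, the relation $h_{wv}=h_{vw}^{-1}$ cuts out $\bigcap\ker(\mu_{vw})$, and the $\mathrm{S}_{3}$-equivariance of the core data restricts the admissible deck transformations to $\widetilde{H}_{\cC}^{\mathrm{S}_{3}}$, whose intersection is exactly $H_{\min}$.

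I expect the main obstacle to lie in the bookkeeping of the second step, namely in confirming that the interplay between dimensions is governed by precisely these two families of relations and by no others. One must check that the glueing of Lemma~\ref{lemma:glueing} is equivariant for the pointwise $\pi_{k}$-action, so that \emph{glue-then-multiply} equals \emph{multiply-then-glue}, and that the cell-by-cell comparison used for transitivity is well defined despite the fact that changing the datum on a $k$-cell shifts the boundary data of the incident higher cells; it is exactly this shift that the boundary relations $\ker(\alpha_{\sigma'})$ and the multiplicative relations $\ker(\Theta^{\sigma''}_{\sigma'\sigma\tau})$ are designed to absorb, and making this absorption precise is the crux of the verification.
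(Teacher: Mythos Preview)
Your proposal is correct and follows essentially the same approach as the paper: both argue that the $\widetilde{G}_{\cC}$-action via pointwise multiplication (Lemma~\ref{lemma:torsor}) restricts to the constrained set of Theorem~\ref{theo:dissection-data} precisely along $G_{\cC}$, with the boundary relations $\ker(\alpha_{\sigma'})$ preserving \eqref{eq:boundary-constraint} and the multiplicative relations $\ker(\Theta^{\sigma''}_{\sigma'\sigma\tau})$ preserving \eqref{eq:comp-cond-ext}, and similarly for $H_{\min}$. Your write-up is in fact more explicit than the paper's on transitivity (the cell-by-cell comparison in increasing dimension) and on local triviality, and your closing caveat about the equivariance of the glueing map is a fair point that the paper leaves implicit.
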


\begin{proof} 
That $G_{\cC}$ acts on the set of ELG fields compatible with a given lattice gauge field $\mathrm{PT}_{\Gamma}\in\cM_{\Gamma}$ is a consequence of lemma \ref{lemma:torsor}: the action of a given component on an extension class is given by pointwise left multiplication of representatives, the boundary relations in homotopy are introduced to ensure that the vanishing of the obstruction to the existence of extensions to the interior of a cell  $\partial\overline{c_{\sigma'}}$ on an ELG field is preserved under the action on $G_{\cC}$. Similarly, the multiplicative relations in homotopy are introduced to ensure that the compatibility conditions \eqref{eq:comp-cond-ext} of an ELG field are preserved under the action on $G_{\cC}$, a fact that is possible due to the abelian nature of the homotopy groups of $G$ and that the action is left and right simultaneously. Moreover, since for a given $k$-cell $c_{\sigma'}$, the set of extension classes to its interior given a fixed boundary homotopy data is a principal homogeneous space for $\pi_{k}(G,e)$, it readily follows that the stabilizer in $G_{\cC}$ of any ELG field is equal to the identity. 

The same applies to the group $H_{\min}$ acting on the set of cores compatible with a given groupoid homomorphism $\mathrm{PT}_{\Gamma,\min}:\cP_{\Gamma,\min}\to G$, recalling the defining relation \eqref{eq:clutching}, as a consequence of corollary \ref{cor:core}. The invariance of $H_{\min}$ under the triadic action ensures that the action of any of its elements in a core is also invariant under the corresponding triadic action. Moreover, the boundary relations in homotopy ensure that the vanishing of obstructions to existence of extensions are preserved under the action of $H_{\min}$ on a core. Together with the inversion relations determined by all $\ker(\mu_{vw})$, we ensure that conditions (a) and (b) in corollary \ref{cor:core} are preserved.
\end{proof} 

\begin{remark}
The dissection of the core of an ELG field in corollary \ref{cor:core}, or what is the same, the relative homotopy classes of clutching maps, is strictly dependent on the fixed values over 0-cells of the latter, i.e. the a priori choice of a collection of reference points 
\[
\displaystyle\bigsqcup_{c_{\sigma}\in\cC_{n-2}}\left.\left\{\mathbf{h}_{\sigma}\left(c_{\sigma''_{0}}\right) \,\right\vert \, c_{\sigma''_{0}}\in\cC_{0},\, c_{\sigma''_{0}}\subset \overline{c_{\sigma}}\right\}. 
\]
This illustrates the contrast between intrinsically \emph{local} objects (ELG fields), and the \emph{global} notion of cellular bundle data representing an equivalence class of principal $G$-bundles: if a different choice of reference points is made, there would still exist a corresponding relative homotopy class of clutching maps defining an equivalent principal $G$-bundle, but the new extension classes resulting from the dissection procedure cannot be compared to the former. The interest in getting a better understanding of the correspondence between ELG fields and equivalence classes of principal $G$-bundles motivates another notion of cellular equivalence of ELG fields, following from definition \ref{def:bundle data}. Namely, two ELG fields $\left\{\mathrm{PT}_{\cC}\right\}$, $\left\{\mathrm{PT}'_{\cC}\right\}$ are said to be \emph{cellularly equivalent} if the principal $G$-bundles they determine are equivalent. The covering maps  \eqref{eq:covering1} and \eqref{eq:covering2} provide an alternative characterization of cellular equivalence, since by definition the core of an ELG field is the minimal local homotopy data (relative to $\cC$) extending any given LG field that is necessary to reconstruct a principal $G$-bundle $P\to M$ up to equivalence. Such characterization is described in the following result.
\end{remark}

\begin{proposition}
Two ELG fields in $\cM_{\cC}$ yield equivalent principal $G$-bundles if and only if they lie in the same connected component in $\cM_{\cC}$.
\end{proposition}

\begin{proof}
The covering map \eqref{eq:covering2} was defined to correspond to the cellular equivalence in definition \ref{def:bundle data}, as the connected components of $\cN_{\cC}$ are in natural correspondence with the equivalence classes of principal $G$-bundles. Then, let us assume that two ELG fields project to the same connected component in $\cN_{\cC}$, i.e. any pair of representatives of their cores are cellularly equivalent. Using an isomorphism, we may assume that the bundles they determine are the same, equal to $P$. Consider any pair of gauge classes of smooth connections representing $\left\{\mathrm{PT}_{\cC}\right\}$ and $\left\{\mathrm{PT}'_{\cC}\right\}$. Since the space $\cA_{P}/\cG_{P}$ is connected, there is a  path connecting both gauge classes of connections. Such path determines a path in $\cM_{\cC}$ connecting $\left\{\mathrm{PT}_{\cC}\right\}$ and $\left\{\mathrm{PT}'_{\cC}\right\}$, and therefore, the inverse image under $\mathrm{pr}_{\textrm{core}}$ of a connected component in $\cN_{\cC}$ is a connected component in $\cM_{\cC}$.
This determines a bijection between the connected components of $\cM_{\cC}$ and $\cN_{\cC}$, and implies the result.
\end{proof}

\begin{corollary}\label{cor:fibration}
Cellular equivalence determines fibrations 
\[
\Phi : \cM_{\cC} \to \check{H}^{1}\left(M,G\right), \qquad \Psi : \cN_{\cC} \to \check{H}^{1}\left(M,G\right)
\]
such that $\Phi = \Psi\circ \mathrm{pr}_{\mathrm{core}}$, given by mapping the connected components in $\cM_{\cC}$ (resp. $\cN_{\cC}$) to their corresponding isomorphism classes of principal $G$-bundles. In particular, there exists subgroups $K_{\cC}\subset G_{\cC}$ and $K_{\min}\subset H_{\min}$ such that
\begin{equation}\label{eq:homo-space}
G_{\cC}/K_{\cC}  \cong H_{\min}/K_{\min} \cong \check{H}^{1}\left(M,G\right), 
\end{equation}
that are the subgroups of deck transformations in $\cM_{\cC}$ (resp. $\cN_{\cC}$) preserving any given connected component, and thus stabilizing any given equivalence class of principal $G$-bundles $\{P\}$.
\end{corollary}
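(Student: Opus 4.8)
The plan is to derive both fibrations directly from the covering structure of Corollary \ref{cor:torsor} together with the identification of connected components with bundle classes in the preceding Proposition, and then to extract the homogeneous-space description by a coset computation inside a single fiber. Endow $\check H^{1}(M,\underline G)$ with the discrete topology and define $\Phi$ (resp.\ $\Psi$) as the assignment, to an extended lattice gauge field (resp.\ to a core), of the isomorphism class of the principal $G$-bundle it determines via Theorem \ref{theo:equivalence}. By the preceding Proposition this assignment is constant exactly on connected components, so it descends to a bijection $\pi_0(\cM_{\cC})\xrightarrow{\ \sim\ }\check H^{1}(M,\underline G)$ and is therefore continuous; its surjectivity is the statement that every bundle class is realized, which follows from Theorems \ref{theo:PT} and \ref{theo:equivalence} once one recalls that every bundle admits a connection. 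A continuous surjection onto a discrete space with open-and-closed (indeed connected) fibers is automatically a Hurewicz fibration, which yields the fibration claim; the relation $\Phi=\Psi\circ\mathrm{pr}_{\mathrm{core}}$ is then immediate, since passing to the core leaves the underlying bundle unchanged (Corollary \ref{cor:core}) and $\mathrm{pr}_{\mathrm{core}}$ induces the bijection of components recorded in the Proposition.

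For the homogeneous-space statement I would fix a standard lattice gauge field $\mathrm{PT}^{0}_{\Gamma}$ and work inside the single fiber $F=\mathrm{pr}_{\cC}^{-1}(\mathrm{PT}^{0}_{\Gamma})$. Since $\cM_{\Gamma}\cong G^{N_1}$ is connected ($G$ being connected), the covering $\mathrm{pr}_{\cC}$ maps every connected component of $\cM_{\cC}$ onto all of $\cM_{\Gamma}$, so each component meets $F$. By Corollary \ref{cor:torsor} the fiber $F$ is a $G_{\cC}$-torsor, and a choice of base point identifies $F$ with $G_{\cC}$. The relation ``belonging to the same component of $\cM_{\cC}$'' is an equivalence relation on $F$ that is invariant under the $G_{\cC}$-action, because deck transformations are homeomorphisms and hence permute connected components; under the base-point identification it becomes a translation-invariant relation on $G_{\cC}$, whose classes are exactly the cosets of the subgroup $K_{\cC}$ equal to the class of the identity. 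This exhibits the components of $\cM_{\cC}$ in $G_{\cC}$-equivariant bijection with $G_{\cC}/K_{\cC}$, and composing with the Proposition gives $G_{\cC}/K_{\cC}\cong\check H^{1}(M,\underline G)$ as homogeneous $G_{\cC}$-spaces. Running the identical argument for the covering $\mathrm{pr}_{\min}\colon\cN_{\cC}\to\cN_{\min}$ over the connected base $\cN_{\min}$ produces $H_{\min}/K_{\min}\cong\check H^{1}(M,\underline G)$, and the two agree through the component-wise bijection induced by $\mathrm{pr}_{\mathrm{core}}$.

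The point requiring care, and the one I expect to be the main obstacle, is to justify that $K_{\cC}$ is \emph{the} subgroup of deck transformations preserving a connected component, independently of the component chosen. A priori the setwise stabilizer of the component through the coset $gK_{\cC}$ is the conjugate $gK_{\cC}g^{-1}$, so these stabilizers coincide only when $K_{\cC}$ is normal. Here normality is free of charge: the ambient group $\widetilde G_{\cC}=\prod\prod G_{\sigma'}$ is a product of the abelian homotopy groups $\pi_k(G,e)$, hence abelian, so the subgroup $G_{\cC}$ is abelian and every $K_{\cC}\subset G_{\cC}$ is normal; likewise $\widetilde H_{\cC}$ is assembled from the abelian groups $\pi_k(V_G,\mathbf e)$ and $\pi_{n-1}(G,e)$, so $K_{\min}$ is normal. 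This abelianness simultaneously validates the phrase ``preserving any given connected component,'' makes $G_{\cC}/K_{\cC}$ and $H_{\min}/K_{\min}$ bona fide quotients, and, read through the bijection of the Proposition, identifies $K_{\cC}$ and $K_{\min}$ as the stabilizers of any prescribed equivalence class of principal $G$-bundles $\{P\}$.
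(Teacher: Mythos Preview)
Your proof is correct and follows the natural line the paper intends; in fact the paper states this corollary without proof, treating it as an immediate consequence of Corollary~\ref{cor:torsor} and the preceding Proposition, so your argument supplies precisely the details the authors leave implicit. Your observation that $G_{\cC}$ and $H_{\min}$ are abelian (being subgroups of products of abelian homotopy groups) is the right way to justify the phrase ``preserving \emph{any} given connected component,'' and it cleanly resolves what would otherwise be an ambiguity between the stabilizers $gK_{\cC}g^{-1}$ of different components.
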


Having understood the way a discretization of a gauge field leads to the recovery of the topology of a principal $G$-bundle, we can now proceed and describe the global geometric picture that gives rise to the space of ELG fields $\cM_{\cC}$. In general, $\cM_{\cC}$ decomposes into connected components
\[
\cM_{\cC} = \bigsqcup_{\{P\}\in\check{H}^{1}\left(M,G\right)} \cM_{P,\cC}.
\]
Let us consider any given principal $G$-bundle $P$, with associated space of smooth connections modulo gauge transformations $\cA_{P}/\cG_{P}$. The gauge group $\cG_{P}$ contains a normal subgroup $\cG_{P,*}$, the \emph{restricted gauge group} (cf. \cite{Ba91}), consisting of those gauge transformations whose value over the fibers of every base point in $(M,\cC)$ is the identity, and characterized by stabilizing any given smooth parallel transport map $\mathrm{PT}_{\cC}$. Therefore, there is a bijective correspondence between the space of smooth parallel transport maps on $(M,\cC)$ yielding a bundle isomorphic to $P$, and the quotient $\cA_{P}/\cG_{P,*}$. Moreover, there is a principal bundle 
\begin{equation}\label{eq:fibration-restricted}
\cA_{P}/\cG_{P,*} \to \cA_{P}/\cG_{P}, 
\end{equation}
whose structure group is the Lie group $\cG_{P,\cC} = \cG_{P}/\cG_{P,*}$.

\begin{definition}\label{def:micro}
 
For $\cE_{\cC}$ a set of fiber points in $P$ covering a set of based points $\cB_{\cC}$ in  $(M,\cC)$, we say that two  smooth connections over $\left(P,\cE_{\cC}\right)$ are $\cC$-\emph{equivalent}, or \emph{equivalent at scale} $\cC$, if their associated parallel transport maps project to the same point in $\cM_{P,\cC}$.
 By a \emph{microscopical deformation} of $A\in\cA_{P}$, relative to a cellular network in $(M,\cC)$, we mean a smooth path in $\cA_{P}$ preserving $\cC$-equivalence and starting at $A$.
\end{definition}

A fundamental by-product of theorem \ref{theo:PT} is the \emph{cellular homotopy fibration} 
\begin{equation}\label{eq:cell homotopy fibration}
\cA_{P}/\cG_{P,*}\to\cM_{P,\cC}, \quad\qquad \mathrm{PT}_{\cC} \mapsto \{\mathrm{PT}_{\cC}\}
\end{equation}
mapping a given parallel transport map to its induced ELG field. The essence of definition \ref{def:micro} is captured in a commutative diagram of principal $\cG_{P,\cC}$-bundles 

\begin{equation}\label{eq:comm-diag2}
\begin{CD}
\cA_{P}/\cG_{P,*} @>>> \cM_{P,\cC}\\
@VVV @VVV\\
\cA_{P}/\cG_{P} @>>> \cM_{P,\cC}/\cG_{P,\cC}
\end{CD}
\end{equation}
The spaces of geometric significance in the proposed ELG theory are the quotient spaces $\cM_{P,\cC}/\cG_{P,\cC}$, 
which play the role of \emph{finite dimensional analogs} of the spaces $\cA_{P}/\cG_{P}$, for every $\{P\}\in\check{H}^{1}\left(M,G\right)$.

An interesting consequence of corollaries \ref{cor:torsor} and \ref{cor:fibration} is the reconstruction of the space $\check{H}^{1}(M,G)$ as a homogeneous space \eqref{eq:homo-space}, in terms of homotopy data of a local nature. Understanding the space $\check{H}^{1}(M,G)$ is reduced to understanding the groups $G_{\cC}$ and $K_{\cC}$.  Such a hybrid (geometric/algebraic) structure deserves to be thoroughly studied, as it sheds new light into the spaces $\check{H}^{1}(M,G)$ on arbitrary manifolds, but the task seems to be far from trivial. We plan to return to it in a subsequent publication.

Altogether, we have constructed a series of fibrations that take the set of smooth parallel transport maps over $(M,\cC)$ as a starting point. They could be understood diagramatically as follows:\footnote{We have excluded the final topological fibration to the characteristic classes of a principal $G$-bundle in the present work, as we plan to present it in a separate publication. A similar fibration was already studied in \cite{PS86,PS90,PS93}.}

\vspace{3mm}

\centerline{
\begin{xy}
(36,70)*+{\left\{\parbox[d]{1.6in}{\centering Equivalence classes of smooth parallel transport maps $\mathrm{PT}_{\cC}$}\right\}}="a";
(100,50)*+{\left\{\parbox[d]{1.5in}{\centering Extended lattice gauge fields on $(M,\cC)$}\right\}}="b"; 
(100,20)*+{\quad\left\{\parbox[d]{1.6in}{\centering Homotopical cellular bundle data on $(M,\cC)$}\right\}}="c";
(36,20)*+{\left\{\parbox[d]{1.5in}{\centering Equivalence classes of principal $G$-bundles}\right\}}="d";
(70,0)*+{\left\{\parbox[d]{1.5in}{\centering Characteristic classes of principal $G$-bundles}\right\}}="e"; 
{\ar@{->}_{\text{\parbox[d]{0.8in}{\centering Induced bundle equivalence fibration}}\quad} "a";"d"}
{\ar@{->}_{\text{\parbox[d]{1.3in}{\centering Cellular equivalence $\qquad$ of clutching maps$\qquad$ }}} "a";"c"}
{\ar@{->}^{\qquad \text{\parbox[d]{0.8in}{\centering Cellular equivalence fibration}}}  "b";"c"}
{\,\ar@{<->}_{\text{\parbox[d]{0.8in}{\centering Homotopy equivalence}}\quad} "c";"d"}
{\ar@{->}^{\qquad \text{\parbox[d]{0.8in}{\centering Induced ELGT data}}} "a";"b"}
{\ar@{->}^{\qquad\text{\parbox[d]{0.8in}{\centering  Chern-Weil}}} "d";"e"}
{\ar@{-->} "c";"e"}
\end{xy}}


\begin{remark}
The previous spaces have several fundamental properties  in the study of quantum gauge theories on the lattice. As in the standard LGT, the spaces $\cM_{\Gamma}$ are connected Lie groups (that are compact if $G$ and $M$ are compact), which carry a measure inherited from the Haar measure on $G$. It follows that the manifolds $\cM_{\cC}$ and their connected components also inherit a measure, but they may fail to be compact. It is then important to determine if their connected components $\cM_{P,\cC}$ are always compact. The spaces of physical relevance are the quotients $\cM_{P,\cC}/\cG_{P,\cC}$, which although finite-dimensional, are not necessarily smooth. To understand the properties of such quotients, it is necessary to understand $\cM_{\cC}$ better. We plan to return to such questions in a separate publication. 
\end{remark}

Consider a submanifold $\iota: N\hookrightarrow M$, and a cotriangulation of $M$ which restricts to a cotriangulation of $N$, which we will also denote by $\cC$. Naively speaking, we are allowing $N$ to possess a boundary and corners. Let us denote the corresponding space of extended lattice gauge fields in $N$ by $\cN_{\cC}$. In particular, there are two special and fundamental cases:
\begin{enumerate}
\item $N = \overline{c_{\sigma}}$, for any $c_{\sigma}\in \cC$,
\item $N$ is a $(n-1)$-dimensional and such that $N = \partial M$, or such that $M = M_{1}\sqcup_{N} M_{2}$ is the identification of two manifolds with boundary $M_{1}$ and $M_{2}$, along a common boundary $N = M_{1}\cap M_{2}$.
\end{enumerate}

We conclude this section with the following corollary, whose proof is straightforward.
\begin{corollary}
Under the above assumptions, the following properties follow:
\begin{itemize}
\item[(i)] There is a projection $\mathrm{pr}: \cM_{\cC} \to \cN_{\cC}$,
\item[(ii)] If $M = M_{1}\sqcup_{N} M_{2}$ (where $M_{i}$ inherits the cell decomposition $\cC_{i}$, $i=1,2$), then $\cM_{\cC}$ is isomorphic to the submanifold of
\[
\cM_{\cC_{1}}\times \cM_{\cC_{2}}, 
\]
consisting of pairs of ELG fields $\left(\left\{\mathrm{PT}_{\cC_{1}}\right\},\left\{\mathrm{PT}_{\cC_{2}}\right\}\right)$ such that
\[
\mathrm{pr}_{1}\left(\left\{\mathrm{PT}_{\cC_{1}}\right\}\right) = \mathrm{pr}_{2}\left(\left\{\mathrm{PT}_{\cC_{2}}\right\}\right) \in \cN_{\cC}
\]
\item[(iii)] For any principal $G$-bundle $P\to M$, let $P_{N} = P|_{N}$, $P_{i} = P|_{M_{i}}$, and $\widetilde{\cG}_{P_{i},\cC_{i}}$, $i = 1,2$, be the subgroup of $\cG_{P_{i},\cC_{i}}$ consisting of classes of gauge transformations that are trivial along $P_{N}$. Then, $\cM_{P,\cC}/\cG_{P,\cC}$ is isomorphic to the quotient 
\[
\cM_{P_{N},\cC_{N}}/\cG_{P_{N},\cC_{N}}
\]
under diagonal action, where $\cM_{P_{N},\cC_{N}}$ is the submanifold of 
\[
\left(\cM_{P_{1},\cC_{1}}/\widetilde{\cG}_{P_{1},\cC_{1}}\right)\times \left(\cM_{P_{2},\cC_{2}}/\widetilde{\cG}_{P_{2},\cC_{2}}\right)
\]
consisting of pairs of orbits of ELG fields whose induced projections to $\cN_{P_{N},\cC_{N}}$ coincide.
\end{itemize}
\end{corollary}

\section{Pachner moves and the dependence on cellular decompositions}\label{sec:Pachner}

We now address the dependence of our constructions on the underlying cotriangulation $\cC$. A special feature of the category of cotriangulations on a manifold $M$ is the abundance of morphisms that allows us to systematically connect and compare any given pair of them, up to dual P.L. equivalence. Pachner \cite{Pach91} proved that any two smooth triangulations of a manifold are related by a sequence of so-called \emph{Pachner moves}. Let $X^{n+1}$ be an abstract $(n+1)$-simplex. We say that two different P.L. structures $\Delta:|K| \to M$ and $\Delta':|K'| \to M$ on $M$ differ by a Pachner move if there exists a pair of injective simplicial maps $\mu: L \to K$ and $\mu': L' \to K'$, where
\[
L = \bigcup_{l=0}^{k}X^{n}_{l}\subset \partial X^{n+1}, \qquad L' = \bigcup_{l=k+1}^{n+1}X^{n}_{l}\subset \partial X^{n+1}, 
\]
for some arbitrary labeling $X^{n}_{0},\dots,X^{n}_{n+1}$ of the $n$-simplices in $\partial X^{n+1}$ and $0\leq k \leq n$, such that 
\begin{itemize}
\item[(i)] the map $\left(\Delta'\right)^{-1}\circ \Delta: |K| \to |K'|$ is simplicial outside $\left|\mu\left(L\setminus\partial L\right)\right|$, 
\item[(ii)] $\Delta\left(\left|\mu(L)\right|\right) = \Delta'\left(\left|\mu'(L')\right|\right)$, 
\item[(iii)] $\Delta\left(\left|\mu(\sigma)\right|\right) = \Delta'\left(\left|\mu'(\sigma)\right|\right)$ for all $\sigma \in \partial L =\partial L'$.
\end{itemize}
Therefore, any two cotriangulations of $M$ are related by a sequence of dual Pachner moves. We will describe such transformations explicitly.

\begin{remark}\label{remark:Pachner simplex}
In the definition of a Pachner move, when $k+1$ different $n$-simplices of the simplicial complex $L$ meet at a common $(n-k)$-face $\sigma$, the $n-k+1$ different $n$-simplices of the simplicial complex $L'$ meet at a common $k$-face $\sigma'$, dual to $\sigma$ (when $k=0$, $L = \sigma$ and $\sigma'$ is its complementary vertex $v$ in $X^{n+1}$, and correspondingly for $k=n$). Therefore, when $\Delta$ and $\Delta'$ of $M$ differ by a Pachner move, the latter corresponds to the replacement of the $(n-k)$-face $\sigma$ by the $k$-face $\sigma'$ (unless $k=0$, where the Pachner move consists of a refinement of an $n$-simplex through the extra interior vertex $v$, and conversely for $k=n$), and there is a common refinement $\Delta''$ containing exactly an additional 0-simplex $v$ corresponding to the intersection of $\sigma$ and $\sigma'$. In turn, $\cC\setminus\{c_{\sigma}\} \cong \cC'\setminus\{c_{\sigma'}\}$, and the corresponding cotriangulation $\cC''$ differs from $\cC$ and $\cC'$ by an additional $n$-cell $c_{v}$ (figure \ref{figure4}). The $n$-cell closure $\overline{c_{v}}$ is topologically equivalent to the product $\overline{c_{\sigma}}\times\overline{c_{\sigma'}}$. 
\end{remark}

\begin{lemma}\label{lemma:dual Pachner deformation}
Any two cotriangulations $\cC$ and $\cC'$ of $M$, with triangulations $\Delta$ and $\Delta'$ related by a Pachner move as before, are degenerations of a smooth 1-parameter family of cotriangulations $\{\cC''_{t}\}_{t\in (0,1)}$, 
\[
\lim_{t \to 0^{+}}\cC''_{t} = \cC,\qquad \lim_{t\to 1^{-}}\cC''_{t} = \cC',
\]
with $\cC''_{t}$ dual to $\Delta''$, the common refinement of $\Delta$ and $\Delta'$ (remark \ref{remark:Pachner simplex}). 
\end{lemma}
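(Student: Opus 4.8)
The plan is to keep the common refinement $\Delta''$ fixed as a P.L. structure on $M$ and to produce the family $\{\cC''_t\}$ as a smooth deformation of its dual cellular structure, in the same spirit as the degeneration $\cC^{0}=\lim_{\epsilon\to 0^{+}}\cC^{\epsilon}$ constructed above. The whole construction can be localized: outside a neighborhood of $\overline{c_v}$ the decompositions $\cC$, $\cC'$ and $\cC''$ agree, since by remark \ref{remark:Pachner simplex} one has $\cC\setminus\{c_{\sigma}\}\cong\cC'\setminus\{c_{\sigma'}\}$ and $\cC''$ differs only by the extra $n$-cell $c_v$ together with the cells in $\partial\overline{c_v}$ created by its insertion. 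I would therefore define $\cC''_t$ to be stationary away from $\overline{c_v}$ and concentrate all of the deformation inside it. The essential datum is the topological product structure $\overline{c_v}\cong\overline{c_{\sigma}}\times\overline{c_{\sigma'}}$ of remark \ref{remark:Pachner simplex}, with $\dim\overline{c_{\sigma}}=k$ and $\dim\overline{c_{\sigma'}}=n-k$; it is precisely this splitting into complementary factors that makes the two opposite collapses available.

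Concretely, I would fix a product chart on $\overline{c_v}$ identifying it with $\overline{\DD^{k}}\times\overline{\DD^{n-k}}$ and build a one-parameter family of star-like covers, in the sense of remark \ref{remark:cover}, for $\Delta''$ whose associated $n$-cell is the rescaled box $\lambda(t)\,\overline{\DD^{k}}\times\mu(t)\,\overline{\DD^{n-k}}$, for smooth profile functions with $\lambda(0)=0$, $\mu(1)=0$, and $\lambda,\mu>0$ on $(0,1)$. Since for every $t\in(0,1)$ the resulting cover is again star-like and realizes $\Delta''$ as its nerve, remark \ref{remark:cover} guarantees that each $\cC''_t$ is a genuine triangle-dual cellular decomposition dual to $\Delta''$, while smoothness of the profile functions and of the chart yields smoothness of the family. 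The lower-dimensional cells in $\partial\overline{c_v}$ inherit the same scaling, so the entire local picture deforms coherently.

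Finally I would analyze the two degenerations. As $t\to 0^{+}$ the factor $\overline{c_{\sigma'}}$ collapses, so $\overline{c_v}$ squashes onto $\overline{c_{\sigma}}$ and the cells carved out of $\partial\overline{c_v}$ in the $\overline{c_{\sigma'}}$-direction are absorbed into the neighboring $n$-cells; what survives is exactly the $k$-cell $c_{\sigma}$ with the incidence pattern of $\cC$, giving $\lim_{t\to 0^{+}}\cC''_t=\cC$. The symmetric collapse of $\overline{c_{\sigma}}$ as $t\to 1^{-}$ produces $c_{\sigma'}$ and hence $\cC'$. The main obstacle is exactly to make these limits \emph{clean}: one must verify that the collapse does not merely subdivide $\cC$, as a careless barycentric dual or a naive sliding of the interior vertex $v$ of $\Delta''$ would, but reproduces it on the nose; that is, that every lower cell on $\partial\overline{c_v}$ either converges to a cell of $\cC$ or degenerates to lower dimension, that the identification $\cC\setminus\{c_{\sigma}\}\cong\cC'\setminus\{c_{\sigma'}\}$ is respected in the limit, and that triangle-duality is preserved for all $t\in(0,1)$. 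Controlling the degenerating cells directly through the product chart on $\overline{c_v}$, rather than through an ambient barycentric construction, is what allows this verification to go through.
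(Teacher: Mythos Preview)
Your proposal is correct and follows essentially the same route as the paper: localize to a neighborhood of $\overline{c_v}$, exploit the product identification $\overline{c_v}\cong\overline{c_{\sigma}}\times\overline{c_{\sigma'}}$, and collapse one factor toward each endpoint of $(0,1)$ while keeping the decomposition constant outside. The paper's proof is more terse and does not spell out the profile functions or the star-like cover mechanism, but the idea is identical; note only that your boundary conditions on $\lambda,\mu$ are swapped relative to your own degeneration analysis (for $\overline{c_{\sigma'}}$ to collapse as $t\to 0^{+}$ you want $\mu(0)=0$, not $\lambda(0)=0$).
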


\begin{proof}
Consider a choice of cellular decomposition $\cC''$ dual to the common refinement $\Delta''$, in such a way that $\cU_{u} = \cU_{\sigma} = \cU_{\sigma'}$, and $\cC''$ coincides with $\cC$ and $\cC'$ on $M\setminus\cU_{v}$. Make $\cC''$ correspond with $\cC''_{1/2}$. Using the fact that $\overline{c_{v}}\cong \overline{c_{\sigma}}\times\overline{c_{\sigma'}}$,
define a family  on $(0,1/2]$ by letting $\overline{c_{v}}^{t}$ degenerate to $\overline{c_{\sigma}}$, and  on $[1/2,1)$ by letting $\overline{c_{v}}$ degenerate to $\overline{c_{\sigma'}}$, in such a way that all cells outside $\cU_{v}$ remain constant, all cells neighboring $c^{t}_{v}$ transform into the corresponding cells neighboring $c_{\sigma}$, $c_{\sigma'}$, and the full family over $(0,1)$ is smooth (figure \ref{figure4}). In the case when $k=0$ (resp. $k=n$), one of the  two degenerations would not be present: topologically, the cellular decompositions for $t\in(0,1]$ (resp. $t\in[0,1)$) would be equivalent.
\end{proof}

\begin{remark}
The cotriangulation $\cC_{\Delta}$ on an $n$-sphere acquired by realizing it as the boundary of an $(n+1)$-simplex is also a triangulation. 
An important consequence is the following realization of the dual Pachner moves in $\left(S^{n},\cC_{\Delta}\right)$. Choose a $k$-cell $c_{\sigma}$ in $\cC_{\Delta}$. Then, there is a $(n-k)$-cell $c_{\sigma'}$ in $\cC_{\Delta}$, complementary to the interior of the star of $c_{\sigma}$. Upon the choice of a smooth hemisphere $H$ in $S^{n}$ separating $c_{\sigma}$ and $c_{\sigma'}$, we can identify each of the two components in $S^{n}\setminus H$ with the open sets $\cU_{\sigma}$ and $\cU_{\sigma'}$. More can be said about the topology of the cell degenerations: It is possible to foliate $S^{n}\setminus \{\overline{c_{\sigma}},\overline{c_{\sigma'}}\}$ as a collection of cell subcomplexes, parametrized by $(0,1)$, and each isomorphic to $\partial \overline{c_{v}}$. The foliation can be extended to a foliation of the closed disk $\overline{\DD^{n+1}}$, with the leave $\cL_{t}$ corresponding to $\overline{c_{v}^{t}}$. This is the case since there exists a diffeomorphism $(0,1)\times \overline{c_{v}} \cong \overline{\DD^{n+1}}\setminus\{\overline{c_{\sigma}},\overline{c_{\sigma'}}\}$.
\end{remark}

\begin{figure}[!ht]
  \caption{(A) Pachner move on a triangulated surface, with common refinement shown (center). (B) The corresponding degenerations on the dual cellular decomposition. 
  }\label{figure4}
  
   \vspace*{3mm}
  
  \centering
    \includegraphics[width=0.7\textwidth]{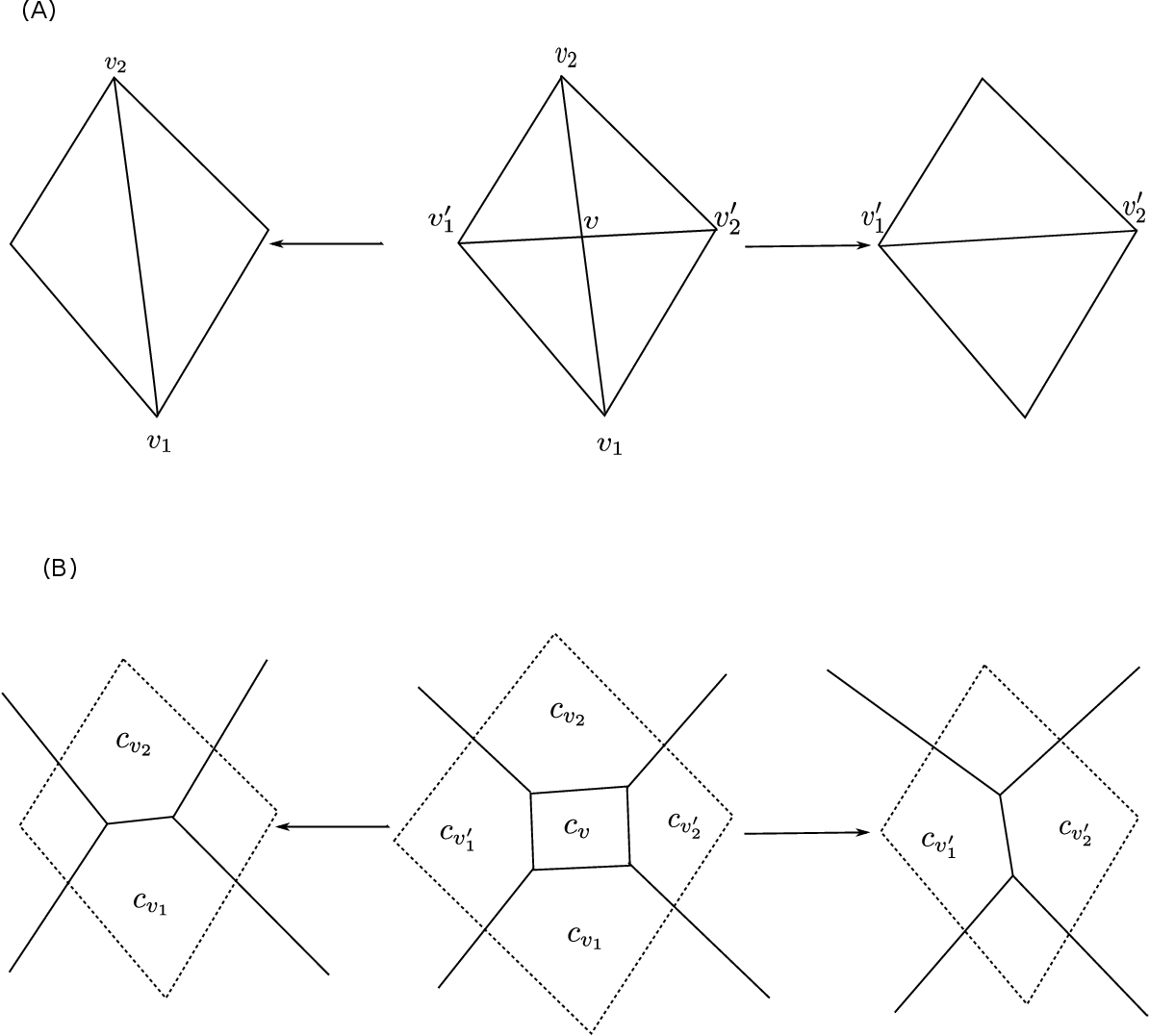}
\end{figure}

As an corollary of Pachner's theorem and lemma \ref{lemma:dual Pachner deformation}, we obtain the following result, useful to find sequences of transformations between cotriangulations.

\begin{corollary}
Any two smooth cotriangulations of $M$ are related by a sequence of smooth cotriangulations via deformations and contractions of cells.
\end{corollary}

\begin{remark}
In the same way that two cotriangulations $\cC$ and $\cC'$ in a mani\-fold $M$ related by a dual Pachner move can be thought of as degenerations of a 1-parameter family of cotriangulations $\{\cC''_{t}\}$,  a pair of choices of path groupoids $\cP_{\cD}$ and $\cP_{\cD'}$ adapted to $\cC$ and $\cC'$ may be understood as dege\-nerations of a 1-parameter family of path groupoids 
\[
\left\{\cP_{\cD''_{t}}\right\}_{t\in(0,1)}
\]
resulting from a 1-parameter family of cellular path groupoids interpolating $\cP_{\cD}$ and $\cP_{\cD'}$, relative to $\{\cC''_{t}\}_{t\in(0,1)}$. The characteristic feature of such degenerations follows from the isomorphism of cell complexes $\overline{c_{v}} = \overline{c_{\sigma}}\times\overline{c_{\sigma'}}$: for any pair of subcells $c_{\sigma_{0}}\subset\overline{c_{\sigma}}$ and $c_{\sigma'_{0}}\subset\overline{c_{\sigma'}}$, the subcell of $\overline{c_{v}}$ whose closure corresponds to $\overline{c_{\sigma}}\times \overline{c_{\sigma'}}$ would degenerate to $c_{\sigma_{0}}$ or $c_{\sigma'_{0}}$. Such degeneration implies the \emph{fusion} of the local path families supported over the closures of any two subcells in $\overline{c_{v}}$ collapsing to the same subcell in $\overline{c_{\sigma}}$ (or $\overline{c_{\sigma'}}$).  
\end{remark}

Consider a pair of cotriangulations $\cC$ and $\cC'$ related by a dual Pachner move as before, and coinciding over $M\setminus\cV_{\sigma} = M\setminus\cV_{\sigma'}$, where $\cV_{\sigma}$ (resp. $\cV_{\sigma'}$) denotes the open set in $M$ given as the interior of the star of $\overline{c_{\sigma}}$ --the union of all cells in $\cC$ (resp. $\cC'$) whose closures intersect $\overline{c_{\sigma}}$ (resp. $\overline{c_{\sigma'}}$). 
Let us moreover assume that $\cC$ and $\cC'$ are equipped with choices of ELG fields $\{\mathrm{PT_{\cC}}\}$ and $\{\mathrm{PT_{\cC'}}\}$ coinciding over all common cells  in $M\setminus \cV_{\sigma} = M\setminus\cV_{\sigma'}$. Consider a family $\cC''_{t}$ of cotriangulations degenerating to $\cC$ and $\cC'$ as before, and such that for all $t\in(0,1)$, the open set $\cV_{v}$ --the interior of the star of $\overline{c^{t}_{v}}$-- equals $\cV_{\sigma} = \cV_{\sigma'}$, and moreover, the restriction $\cC''_{t}|_{M\setminus\cV_{v}}$ coincides with the restrictions $\cC|_{M\setminus\cV_{\sigma}}$ and $\cC'|_{M\setminus\cV_{\sigma'}}$. Finally, equip such a family with a smooth family of adapted path groupoids $\left\{\cP_{\cD''_{t}}\right\}$, degenerating to $\cP_{\cD}$ and $\cP_{\cD'}$. 
The next definition plays the role of a \emph{generalized} cellular equivalence of ELG fields, suited to consider smooth 1-parameter families of cotriangulations arising from a dual Pachner move as before.

\begin{definition}
Two ELG fields $\{\mathrm{PT_{\cC}}\}$ and  $\{\mathrm{PT_{\cC'}}\}$ as before coinciding over all cells in $M\setminus \cV_{\sigma} = M\setminus\cV_{\sigma'}$ are called \emph{local relatives} if there exists a smooth 1-parameter family 
\[
\left\{\left\{\mathrm{PT_{\cC''_{t}}}\right\}\right\}_{t\in(0,1)}, 
\]
whose restriction to $\cC''_{t}|_{M\setminus\cV_{v}}$ coincides with the respective restrictions of $\{\mathrm{PT}_{\cC}\}$ and $\{\mathrm{PT}_{\cC'}\}$ to $\cC|_{M\setminus\cV_{\sigma}}$ and $\cC'|_{M\setminus\cV_{\sigma'}}$, and such that 
\[
\lim_{t\to 0^{+}} \left\{\mathrm{PT_{\cC''_{t}}}\right\} = \{\mathrm{PT_{\cC}}\},\qquad \lim_{t\to 1^{1}} \left\{\mathrm{PT_{\cC''_{t}}}\right\} = \{\mathrm{PT_{\cC'}}\}.
\]
\end{definition}

\begin{theorem}\label{theo:Pachner}
Let $\cC$ and $\cC'$ be two cotriangulations of $M$, related by a dual Pachner move, together with a pair of ELG fields $\{\mathrm{PT_{\cC}}\}$  and  $\{\mathrm{PT_{\cC'}}\}$ whose respective cores restricted to $\cC|_{M\setminus\cV_{\sigma}}$ and $\cC' |_{M\setminus\cV_{\sigma'}}$ coincide.  The principal $G$-bundles $P$ and $P'$ on $M$ induced by the cores $\{\mathrm{PT_{\cC}}\}_{\min}$ and $\{\mathrm{PT_{\cC'}}\}_{\min}$ are equivalent if and only if $\{\mathrm{PT_{\cC}}\}$ and $\{\mathrm{PT_{\cC'}}\}$ are local relatives.
\end{theorem}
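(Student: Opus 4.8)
The plan is to collapse the whole statement to a single assertion about connected components of the space of extended lattice gauge fields over the common refinement, and then to an exact homotopy sequence whose crucial vanishing is forced by the fact that $G$ is a topological group. First I set up the notation of lemma \ref{lemma:dual Pachner deformation}: the interpolating family $\{\cC''_t\}_{t\in(0,1)}$ is dual to the common refinement $\Delta''$ for every $t$, so the $\cC''_t$ are all combinatorially identical and the spaces $\cM_{\cC''_t}$ are canonically one space $\cM_{\cC''}$; a smooth family of ELG fields along the Pachner deformation is then just a path in $\cM_{\cC''}$ whose two degeneration limits (the cell $c_v$ collapsing to $c_\sigma$, resp. to $c_{\sigma'}$) land in $\cM_\cC$, resp. $\cM_{\cC'}$. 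Since collapsing the contractible $c_v$ is a refinement, theorem \ref{theo:PT} (refinement only adds trivialization points) gives each $\{\mathrm{PT}_\cC\}$ a canonical lift $\widehat{\{\mathrm{PT}_\cC\}}\in\cM_{\cC''}$, constant in the collapsing direction, inducing the same bundle $P$ and degenerating back as $t\to0^+$; likewise one obtains $\widehat{\{\mathrm{PT}_{\cC'}\}}$, inducing $P'$, with limit at $t\to1^-$. Writing $\cM^{\mathrm{fix}}_{\cC''}\subset\cM_{\cC''}$ for the ELG fields agreeing with the given common data on $M\setminus\cV_v$, both lifts lie in $\cM^{\mathrm{fix}}_{\cC''}$, and a local relation is \emph{precisely} a path in $\cM^{\mathrm{fix}}_{\cC''}$ joining $\widehat{\{\mathrm{PT}_\cC\}}$ to $\widehat{\{\mathrm{PT}_{\cC'}\}}$.

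The direction \emph{local relatives $\Rightarrow$ $P\cong P'$} is then immediate: such a path is in particular a path in $\cM_{\cC''}$, and by corollary \ref{cor:fibration} the bundle class is the locally constant $\pi_0$-label, hence constant along it; evaluating at the endpoints gives $P\cong P'$.

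For the converse I would study the restriction map $\rho:\cM_{\cC''}\to\cM_{\mathrm{ext}}$ onto the exterior data, of which $\cM^{\mathrm{fix}}_{\cC''}$ is the fibre over the fixed exterior field. Over the network coordinates of corollary \ref{cor:torsor} this is pulled back from the product projection $G^{N_1}=G^{N_1^{\mathrm{ext}}}\times G^{N_1^{\mathrm{int}}}\to G^{N_1^{\mathrm{ext}}}$ forgetting the interior edges, so $\rho$ is a fibration and I may use its long exact homotopy sequence
\[
\pi_1(\cM_{\mathrm{ext}})\xrightarrow{\ \partial\ }\pi_0(\cM^{\mathrm{fix}}_{\cC''})\to\pi_0(\cM_{\cC''})\to\pi_0(\cM_{\mathrm{ext}}).
\]
By the connectedness criterion (the proposition preceding corollary \ref{cor:fibration}) together with corollary \ref{cor:fibration}, $\pi_0(\cM_{\cC''})\cong\check H^1(M,\underline{G})$ records exactly the bundle class, so $\widehat{\{\mathrm{PT}_\cC\}}$ and $\widehat{\{\mathrm{PT}_{\cC'}\}}$ have equal image there precisely when $P\cong P'$. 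Exactness then puts them in one orbit of the monodromy action $\partial$ of $\pi_1(\cM_{\mathrm{ext}})$ on $\pi_0(\cM^{\mathrm{fix}}_{\cC''})$, and the entire converse reduces to showing this action is trivial, i.e.\ that $\pi_0(\cM^{\mathrm{fix}}_{\cC''})\hookrightarrow\pi_0(\cM_{\cC''})$ is injective.

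This triviality is the main obstacle, and it is where the group structure of $G$ is decisive. Unwinding the covering description of corollary \ref{cor:torsor}, the fibre $\cM^{\mathrm{fix}}_{\cC''}$ is itself a covering of $G^{N_1^{\mathrm{int}}}$ whose sheets are the extension classes of lemmas \ref{lemma:glueing} and \ref{lemma:torsor}, i.e.\ elements of homotopy groups $\pi_k(G)$ attached to the cells inside $\cV_v$; a loop in the exterior network drags the prescribed values at the relevant $0$-cells around a loop in $G$, and via the torsor identification of lemma \ref{lemma:torsor} the resulting monodromy $\partial$ is exactly the natural action of $\pi_1(G,e)$ on these $\pi_k(G,e)$. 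Because $G$ is a topological group, hence an $H$-space, $\pi_1(G,e)$ acts trivially on every $\pi_k(G,e)$; therefore $\partial$ is trivial, the map on $\pi_0$ is injective, and $P\cong P'$ forces $\widehat{\{\mathrm{PT}_\cC\}}$ and $\widehat{\{\mathrm{PT}_{\cC'}\}}$ into the same component of $\cM^{\mathrm{fix}}_{\cC''}$ — a local relation. The only remaining routine points are that $\rho$ is genuinely a fibration (immediate from the product/covering structure over the network coordinates, the interior and exterior cell data being separated by theorem \ref{theo:dissection-data}) and that the lift $\widehat{\{\cdot\}}$ degenerates correctly, both of which follow from lemma \ref{lemma:dual Pachner deformation} and the fusion of path families it induces.
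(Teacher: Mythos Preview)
Your argument is correct and considerably more explicit than the paper's, which dispatches the theorem in two sentences by observing that local relativity is nothing other than cellular equivalence (definition \ref{def:cell-eq}) confined to the star $\cV_v$: since the cores already agree on $M\setminus\cV_v$, equivalence of the induced bundles is decided entirely by the homotopy class of the clutching data over the cells inside $\cV_v$, and that is precisely what a local-relative family witnesses. The paper treats this as essentially definitional and does not invoke any exact sequence.

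Your route through the long exact sequence of the restriction fibration $\rho$ is a genuine alternative, and it isolates exactly the point the paper sweeps under the phrase ``the extra complications disappear'': namely, why a homotopy of cellular bundle data that happens to agree on the exterior at its endpoints can be replaced by one that fixes the exterior throughout. You resolve this as triviality of the monodromy of $\pi_1(\cM_{\mathrm{ext}})$ on $\pi_0(\cM^{\mathrm{fix}}_{\cC''})$, invoking the $H$-space property of $G$. That is the right mechanism, but your one-line justification conflates two things: the action of $\pi_1(G,e)$ on $\pi_k(G,e)$ in the sense of homotopy theory, and the monodromy of the bundle of extension-class torsors $[\overline{c_{\sigma''}},G,g_0]$ over the parameter space of boundary values $g_0$. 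What actually makes the latter trivial is that lemma \ref{lemma:torsor} realizes the $\pi_k(G,e)$-action by \emph{pointwise multiplication}, so one can trivialize the torsor bundle globally (e.g.\ by right-multiplying any reference extension by a map depending smoothly on $g_0$); this is where the group structure of $G$ enters, and it deserves a sentence rather than an appeal to the general $H$-space fact. With that clarification your argument is complete; it buys a cleaner separation of the two directions and an honest accounting of the obstruction, at the cost of more machinery than the paper's direct reading of the definitions.
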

\begin{proof}
The proof is straightforward. When the cores of the ELG fields are considered, or what is the same, the relative homotopy classes of collections of clutching maps, the notion of local relativity for a pair of ELG fields is a (local) specialization of the notion of cellular equivalence adapted to degenerating families of cell decompositions. 
The extra complications implicit in the notion of (global) cellular equivalence, present in general, disappear, since the homotopy equivalence relation that defines it has been confined to the $(n-1)$-cells in $\overline{c_{v}}$ and the interior of its star.
\end{proof}

\appendix

\section{Cotriangulations}\label{sec:triangle-dual}

Here we recall, for the sake of clarity and convenience, some standard notions on cell decompositions that we will require. We refer the reader to \cite{Lef42, Koz07} for further details.

Let $M$ be an $n$-dimensional smooth manifold, assumed to be connected, not necessarily compact, and with or without boundary. A \emph{smooth cell decomposition} of $M$ is defined inductively, as a collection $\cC= \{c_{\sigma}\subset M\}$ of pairwise disjoint subsets of $M$ (called \emph{cells}) whose union is equal to $M$, together with diffeomorphisms $\phi_{\sigma}^{k}:\DD^{k}\to M$ (where $\DD^{k}$ denotes the unit disk in $\RR^{k}$) 
such that $\overline{c_{\sigma}}\setminus c_{\sigma}$ 
is a union of cells in $\cC$ of dimension at most $k-1$. Two cell decompositions are equivalent if there is a diffeomorphism $\varphi:M\rightarrow M$ mapping one into the other.
$\cC_{k}$ will denote the collection of $k$-dimensional cells in $\cC$. The $l$th-skeleton of $\cC$ is defined as $\mathrm{Sk}_{l}(\cC)=\bigsqcup_{k=0}^{l}\cC_{k}$.

A \emph{flag} of length $m+1\leq n+1$ in $\cC$ of $M$ is a collection of nested cell closures $\overline{c_{\sigma_{0}}}\subset\overline{c_{\sigma_{1}}}\subset \dots\subset \overline{c_{\sigma_{m}}}$ in $M$. A flag is said to be \emph{gapless} if $\dim c_{\sigma_{k}}-\dim c_{\sigma_{k-1}} =1$ for every $1\leq k\leq m$. A flag is \emph{complete} if its length is equal to $n+1$, so that $\dim c_{\sigma_{k}}=k$. 

A barycentric subdivision $B(\cC)$ of $\cC$ is any cell decomposition of $M$ for which there is a 1--1 correspondence between $k$-cells in $B(\cC)$ and gapless flags of length $k+1$ in $\cC$, in such a way that if $b^{k}_{\sigma_{0}\dots\sigma_{k}}\in B(\cC)_{k}$ corresponds to $\{\overline{c_{\sigma_{0}}}\subset\dots\subset\overline{c_{\sigma_{k}}}\}\subset \cC$, then $b^{k}_{\sigma_{0}\dots\sigma_{k}}\subset c_{\sigma_{k}}$ (notice that $\dim c_{\sigma_{k}}\geq k$). In particular, there is exactly a 0-cell $b^{0}_{\sigma}\subset c_{\sigma}$ in $B(\cC)$ $\forall c_{\sigma}\in\cC$.

A smooth triangulation of $M$ is a homeomorphism $\Delta: |K| \to M$, where $K$ is an abstract simplicial complex and $|K|$ is its geometric realization, 
whose restriction to the interior of any simplex in $|K|$ is a diffeomorphism. We will denote by $\sigma, \tau,\dots$ the elements in $K$ (called the \emph{faces} of $K$), and by $v,w,\dots$ the vertices of it (in general, the maximal faces of an abstract simplicial complex $K$ are called its \emph{facets}, and $K$ is said to be \emph{pure} if all of its facets have the same dimension). Every triangulation has a canonical barycentric subdivision $B(\Delta)$. A classical result of J. H. C. Whitehead states that every smooth manifold $M$ admits a piecewise-linear (P.L.) structure, namely, a triangulation $\Delta$ for which the link of any simplex is a P.L. sphere, and that any two P.L. structures are related by a P.L. bijection. It is sufficient to assume the piecewise-smoothness of any such $\Delta$. 

\begin{definition}
A cell decomposition $\cC$ of $M$ is called a \emph{cotriangulation} if there exist an open cover $\mathfrak{U}=\{\cU_{v}\}_{c_{v}\in\cC_{n}}$  of $M$ with the following properties:
\begin{itemize}
\item[(i)] For every $\sigma\in N(\mathfrak{U})$ (the nerve of $\mathfrak{U}$, an abstract simplicial complex), the open set $\cU_{\sigma}= \cap_{v\subset\sigma}\cU_{v}$ is contractible, 

\item[(ii)] The geometric realization of $N(\mathfrak{U})$ is a P.L structure for $M$. In particular, $N(\mathfrak{U})$ is pure and $\cap_{i=0}^{n+1}\cU_{v_{i}} = \emptyset$ for all pairwise-different $c_{v_{0}},\dots,c_{v_{n+1}}\in\cC_{n}$,

\item[(iii)] There is a 1--1 correspondence between the $k$-simplices $\sigma\in N(\mathfrak{U})$ and the $(n-k)$-cells $c_{\sigma}\in\cC_{n-k}$, in such a way that $\overline{c_{\sigma}}\subset \cU_{\sigma}$.
\end{itemize}
\end{definition}

\begin{remark}
To avoid unnecessary pathologies, cell decompositions are assumed to have the following property:  for every $k$-cell $c_{\sigma}$ ($k\geq 1$), $\partial\overline{c_{\sigma}}$ is is a piecewise-smooth $(k-1)$-sphere (c.f. definition \ref{def:bundle data}). We call such cell decompositions \emph{spherical}. All cotriangulations are spherical (following from the P.L. structure), but the converse is not true: hypercubical cell decompositions of $\RR^{n}$ determined by a lattice $\Lambda\subset\RR^{n}$ are spherical but not cotriangulations, unless  $n=1$. Cotriangulations are generic among spherical cell decompositions (when $n = 1$, all cell decompositions are cotriangulations).
\end{remark}

\begin{figure}[!ht]
\caption{A cotriangulation $\cC$ on a surface.}\label{figure1}

\vspace*{3mm}

\centering
\includegraphics[width=0.5\textwidth]{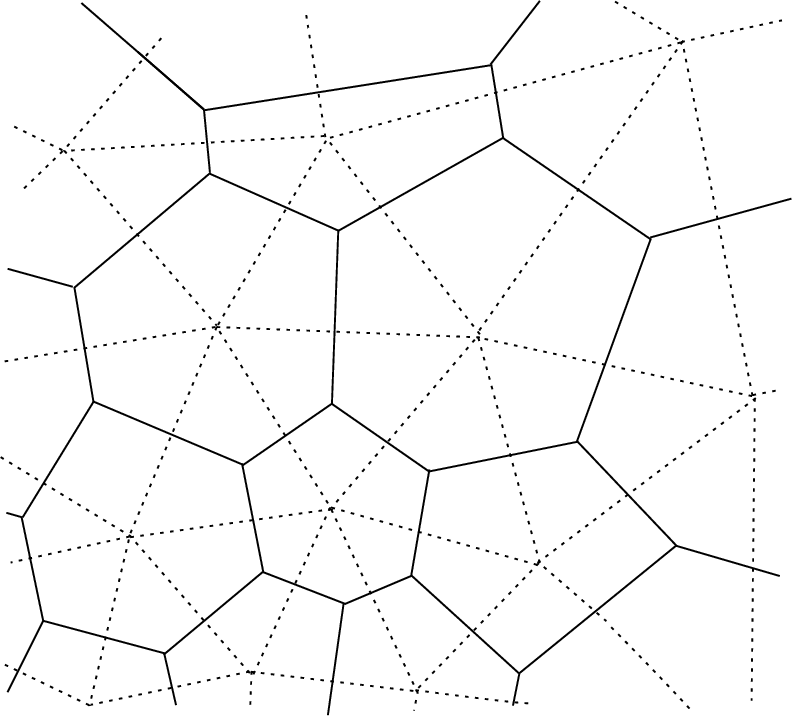}
\end{figure}

\begin{lemma}\label{lemma:stability}
Cotriangulations are stable under small deformations. Every spherical cell decomposition is a degeneration of a family of cotriangulations.  
\end{lemma}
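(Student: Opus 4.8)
The plan is to recognize that the triangle-dual property is exactly the requirement that the cells of $\cC$ meet in \emph{general position}, as made explicit in Remark~\ref{remark:intersection-cells}: every $k$-cell closure is the transversal intersection of exactly $n-k+1$ closures of $n$-cells, equivalently exactly $n-k+1$ of the codimension-one cells meet along each $(n-2)$-cell, and no $n+2$ top-cell closures have a common point. For a cellular decomposition this is the \emph{stable} local model, since around a codimension-$k$ cell the incident $n$-cells appear as the $n-k+1$ sectors of the generic configuration dual to a simplex, while any higher-valence incidence is unstable and breaks apart under an arbitrarily small perturbation. The two assertions then become the statements that general position is open and dense, respectively.

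For the first assertion I would argue openness directly. A small smooth (equivalently PL) deformation varies the characteristic maps $\phi^{k}_{\sigma}$ continuously, and transversality of a finite collection of closed cells is stable under such perturbations; hence the combinatorial incidence data of Remark~\ref{remark:intersection-cells} --- which $n$-cell closures meet in which lower cell --- is unchanged for sufficiently small deformations. In particular the defining count $n-k+1$ persists, so the deformed decomposition is again triangle-dual, and the triangle-dual decompositions form an open subset of the spherical ones.

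For the second assertion I would exhibit, for an arbitrary spherical $\cC$, a smooth one-parameter family $\{\cC^{\epsilon}\}_{\epsilon\in(0,1)}$ of triangle-dual decompositions with $\cC^{\epsilon}\to\cC$ as $\epsilon\to 0^{+}$, generalizing the shearing construction for the cubical lattice given in Section~\ref{sec:triangle-dual} (where the cubical $\cC^{0}$ is the degeneration of a triangle-dual family). The construction is local and inductive on the skeleta: wherever $m>n-k+1$ top cells share a codimension-$k$ face, I would apply an $\epsilon$-small generic perturbation to the incident characteristic maps so as to split that high-valence incidence into the triangle-dual pattern, introducing the new lower-dimensional cells exactly as the shear does in the cubical case. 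Resolving codimension-two incidences first and propagating upward, a standard general position argument in the PL category makes all incidences simultaneously triangle-dual for a generic perturbation; taking $\epsilon$ to measure the size of the perturbation yields the family, and as $\epsilon\to 0^{+}$ the newly created cells collapse so that $\cC^{\epsilon}$ degenerates to $\cC$.

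The hard part is this second assertion, specifically ensuring that the perturbed family consists of bona fide \emph{spherical} decompositions throughout --- each cell remaining an embedded disk with piecewise-smooth sphere boundary --- and that the resolved incidences realize precisely the triangle-dual combinatorial type of Remark~\ref{remark:intersection-cells}, not merely a generic arrangement, while the limit is exactly $\cC$. The cleanest way to keep this under control is to perform the resolution skeleton by skeleton, verifying at each codimension that the local model agrees with the dual-of-a-simplex picture; the cubical construction supplies the prototype for every elementary step, and the openness established in the first part guarantees that once triangle-duality is achieved for some small $\epsilon$ it persists for all smaller values.
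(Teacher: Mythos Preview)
Your approach is correct in outline but differs substantively from the paper's, particularly for the second assertion. The paper works throughout in the \emph{dual} picture, which converts the statement into one about triangulations. For stability, rather than invoking transversality directly, the paper observes that if $\cC$ were a nontrivial degeneration of a family $\cC_{\epsilon}$, then on the dual side several $n$-cells of $\cC_{\epsilon}^{\vee}$ would merge into a single $n$-cell of $\cC^{\vee}$, hence $\cC^{\vee}$ would contain an $n$-cell with strictly more than $n+1$ boundary $(n-1)$-faces; but $\cC^{\vee}$ is a triangulation, so every $n$-simplex already has the minimal number $n+1$, a contradiction. For the genericity assertion, the paper again dualizes: given spherical $\cC$, refine $\cC^{\vee}$ to a triangulation $\Delta$ \emph{without adding vertices} (so $\mathrm{Sk}_{0}(\Delta)=\mathrm{Sk}_{0}(\cC^{\vee})$); then $\Delta^{\vee}$ is triangle-dual and has the same collection of $n$-cells as $\cC$, so one can build the family $\cC_{\epsilon}$ by deforming $\Delta^{\vee}$ back to $\cC$, collapsing the new lower-dimensional cells as $\epsilon\to 0$.

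Your direct perturbation argument is closer in spirit to the cubical-lattice construction and is certainly viable, but as you yourself note, controlling sphericity of the perturbed boundaries and ensuring the resolved incidences realize exactly the dual-simplex combinatorics requires genuine work at each skeleton. The paper's dual-space route sidesteps this entirely: ``refine a cell complex to a simplicial one without adding vertices'' is a single standard step, and the triangle-dual property of the family then comes for free from the definition. What your approach buys is a more explicit description of the perturbation (useful if one actually needs the family); what the paper's approach buys is a two-line proof that avoids the technical verification you flagged as the hard part.
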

\begin{proof}
A cell decomposition $\cC^{0}$ is unstable under small deformations if there exists a family $\cC^{\epsilon}$, $\epsilon\in\RR$ of cell decompositions which are equivalent for small $\epsilon\neq 0$ but inequivalent from $\cC^{0}$ (figure \ref{figure2}). In that case, there exists a family of $k$-cells $c_{\sigma}^{\epsilon}$ for some $0<k<n$ for which $\overline{c^{\epsilon}_{\sigma}}$ collapses to a 0-cell $c^{0}$ in $\cC^{0}$ as $\epsilon\to 0$. In particular all 0-cells in $\overline{c^{\epsilon}_{\sigma}}$ coalesce into $c^{0}$. A 0-cell in $\overline{c^{\epsilon}_{\sigma}}$ corresponds to an $n$-cell on the dual cell decompositions $\left(\cC^{\epsilon}\right)^{\vee}$. In the limit all of these $n$-cells unite into the single $n$-cell $(c^{0})^{\vee}$, and the number of $(n-1)$-cells in $\partial\overline{(c^{0})^{\vee}}$ would be larger that the number of $(n-1)$-cells in the boundary of any of the $n$-cells in $\left(\cC^{\epsilon}\right)^{\vee}$ that unite into $\partial\overline{(c^{0})^{\vee}}$. Since $\cC^{\vee}$ is a triangulation, there would be a family of $n$-cells in $\left(\cC^{\epsilon}\right)^{\vee}$ uniting into to the interior of an $n$-simplex in $\cC^{\vee}$, whose number of $(n-1)$-cells in its boundary is minimal (equal to n+1), a contradiction.  Hence $\cC$ cannot be a degeneration of a smooth family of equivalent cell decompositions $\cC^{\epsilon}$.

Given a spherical cell decomposition $\cC$ of $M$, it is always possible to refine $\cC^{\vee}$ into a triangulation $\Delta$ whose vertices coincide with the set  of $0$-cells in $\cC^{\vee}$, i.e. $\textrm{Sk}_{0}(\cC^{\vee}) = \textrm{Sk}_{0}(\Delta)$. Since the $n$-cells in $\Delta^{\vee}$ and $\cC$ are in  $1-1$ correspondence, it is possible to construct a family $\cC^{\epsilon}$ of equivalent cell decompositions in $M$ such that for sufficiently small $\epsilon\neq 0$, $\cC^{\epsilon}$ is equivalent to $\Delta^{\vee}$. The limit $\cC^{0}$ coincides with $\cC$ by construction.
\end{proof}

\begin{figure}[!ht]
  \caption{(A) 1-cell contraction in a cellular decomposition $\cC$ on a surface. (B) The effect over the dual decomposition $\cC^{\vee}$.}\label{figure2}
  \centering
    \includegraphics[width=0.5\textwidth]{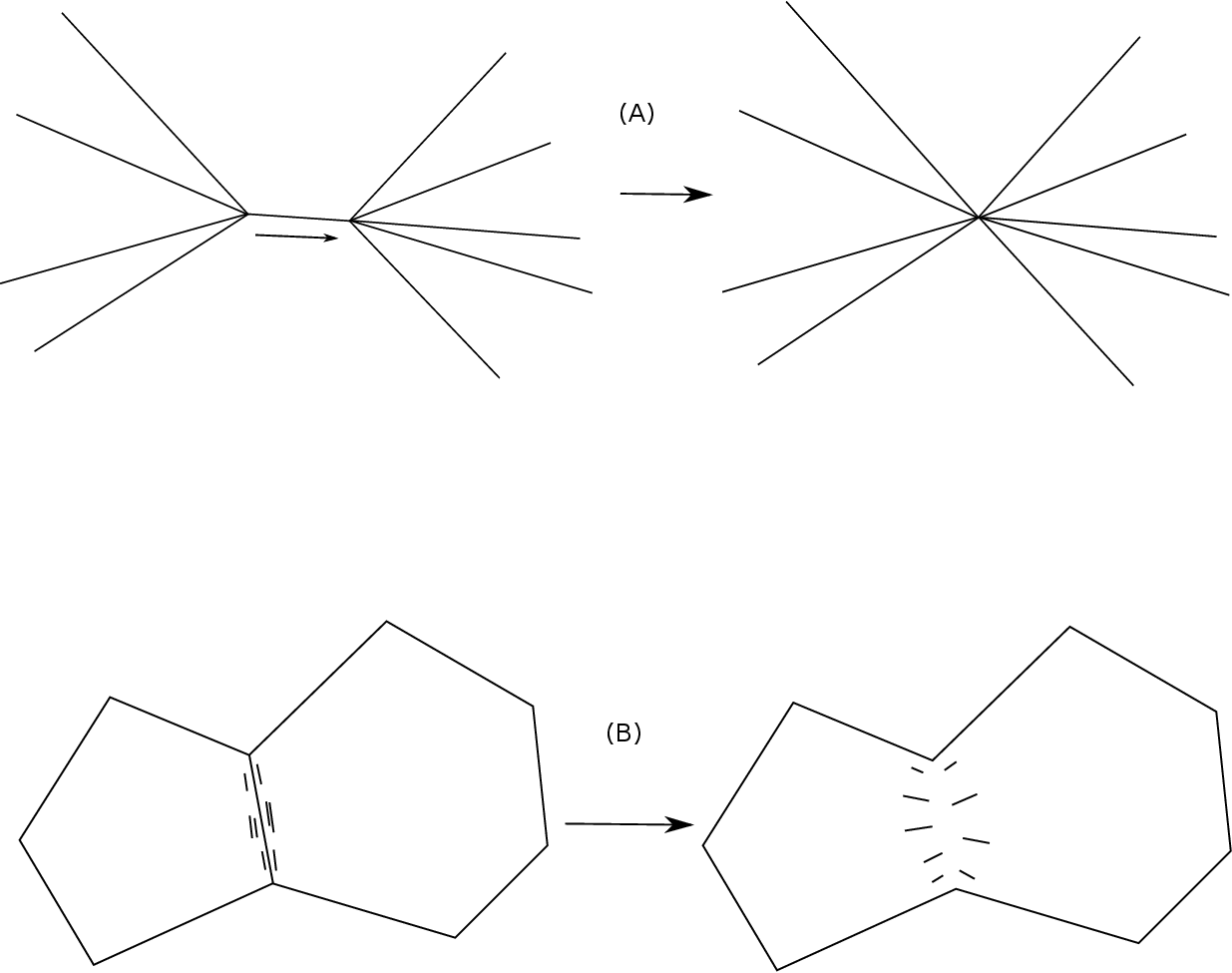}
\end{figure}

It is of paramount importance to relate the standard cell decompositions generated by a square lattice in $\RR^{n}$ to cotriangulations when the manifold $M$ is an Euclidean space or a quotient of it by  a lattice (i.e., a cylinder or a torus). The following lemma shows how the constructions and results of this work are also relevant and applicable in those standard cases.

\begin{lemma}
The cell decomposition $\cC^{0}$ of $\RR^{n}$ induced by the lattice $\Lambda^{0} = \ZZ^{n}$ is a degeneration of a family of cotriangulations $\cC^{\epsilon}$, $\epsilon\in(0,1)$.
\end{lemma} 
\begin{proof}
Setting a 1-1 correspondence between $n$-cells $c^{0}_{v}$ in $\cC^{0}$ and the 0-subcell $( i_{1}^{v},\dots, i_{n}^{v})$ in their closure, where for each $k=1,\dots,n$, 
\[
i_{k}^{v} = \min\{i_{k}\,:\, (i_{1},\dots,i_{n})\in\overline{c_{v}}\}, 
\]
we can identify the $n$-cells in $\cC^{0}$ with the elements in $\Lambda^{0}$. For any $\epsilon\in (0,1)$, consider the lattice $\Lambda^{\epsilon}$ in $\RR^{n}$, of $n$-tuples of real numbers of the form
\[
\left(i_{1},i_{2}+\epsilon i_{1},\dots,i_{n-1}+\epsilon i_{n-2},i_{n} + \epsilon i_{n-1}\right),\qquad i_{1},\dots,i_{n}\in\ZZ. 
\]
Then, in particular, $\Lambda^{0} = \ZZ^{n}$. Just as the lattice $\Lambda^{0}$ acts as the group of translation symmetries of $\cC^{0}$, the lattice $\Lambda^{\epsilon}$ will act on $\cC^{\epsilon}$ as its symmetry group of translations. The $n$-cell $c^{\epsilon}_{v}$, as well as its closure, is then defined by translating $c_{v}^{0}$ by the vector 
\[
\epsilon\left(0,i_{1}^{v},\dots,i_{n-1}^{v}\right),
\]
(figure \ref{figure6}).Although there is a 1-1 correspondence between $n$-cells in $\cC^{0}$ and $\cC^{\epsilon}$ by construction, new $k$-cells, $k=0,\dots,n-1$ are created in $\cC^{\epsilon}$ by subdivision of the boundary cells of each $n$-cell $c^{\epsilon}_{v}$. $\cC^{\epsilon}$ is a cotriangulation for every $\epsilon\in(0,1)$. Notice that over $\cC^{0}$, each 0-cell is the common intersection of $2n$ 1-cells. The $\epsilon$-shifts introduced to define $\cC^{\epsilon}$ separate $n-1$ of these 1-cells, leaving exactly $n+1$ 1-cells on each old and new 0-cell. The argument with the higher dimensional cells is similar, giving the combinatorial properties that determine a cotriangulation.
\end{proof}

\begin{figure}[!ht]
\caption{The cell decomposition $\cC^{\epsilon}$ in $\RR^{2}$.}\label{figure6}

\vspace*{3mm}

\centering
\includegraphics[width=0.4\textwidth]{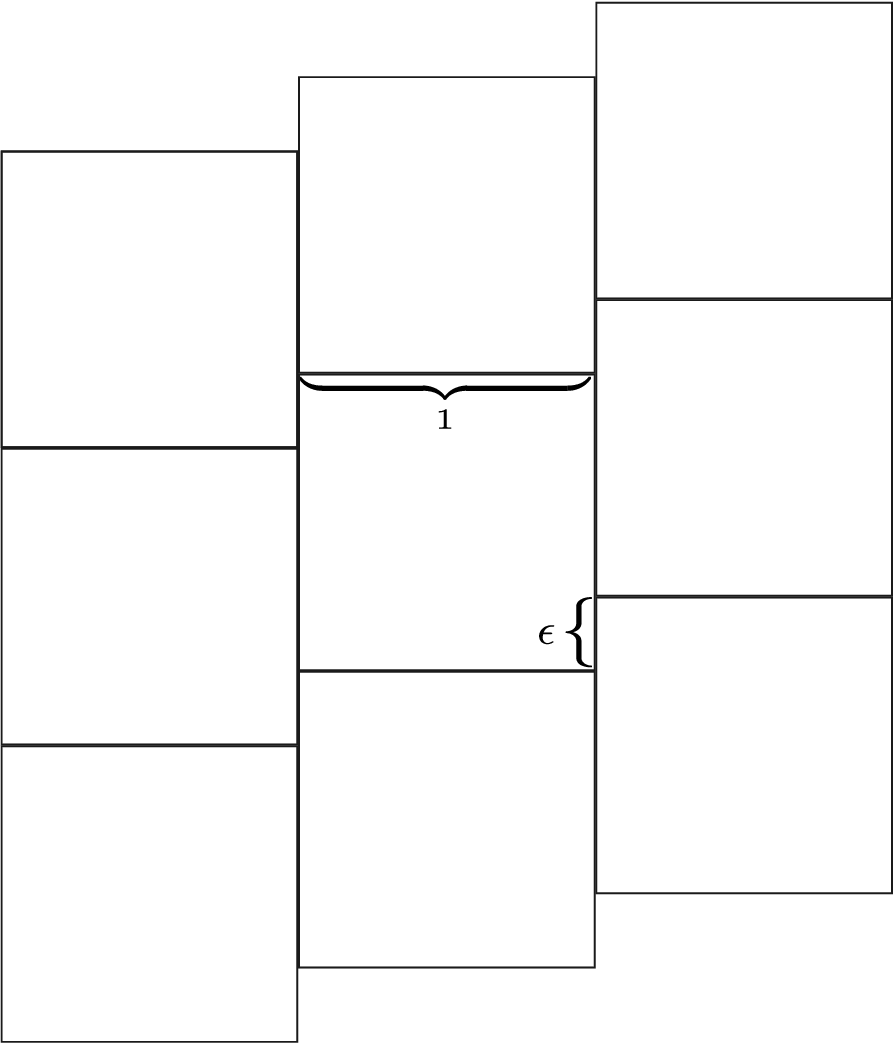}
\end{figure}

\section{Proof of the reconstruction theorem}\label{sec:parallel transport}

The idea to prove theorem \ref{theo:PT} (theorem 2 in \cite{Zapata12}) is to construct a concrete principal $G$-bundle $P$ from a smooth parallel transport map $\mathrm{PT}_{\cC}$. Then a horizontal lift property implies the existence of a gauge equivalence class of smooth connections in $P$, under the action of the group of bundle automorphisms covering the identity map in $M$, and acting as the identity over the fibers $\{\pi^{-1}(p_{\sigma})\}_{c_{\sigma}\in\cC}$, the \emph{restricted gauge transformations}. The group $\cG_{P,*} $ of restricted gauge transformations is a normal subgroup of the full gauge group $\cG_{P}$. 
Before proving theorem \ref{theo:PT}, we require to introduce the following result.

\begin{proposition}\label{prop:equivalence g}
(i) If $\mathrm{PT}_{\cC}$ and $\mathrm{PT}'_{\cC}$ are equivalent, then there is a set $\{g_{\sigma}\in G\}_{c_{\sigma}\in\cC}$ such that the glueing maps $\{g_{\sigma' \sigma}\}_{\overline{c_{\sigma}}\cap\overline{c_{\sigma'}}\neq \emptyset}$ and $\{g'_{\sigma' \sigma}\}_{\overline{c_{\sigma}}\cap\overline{c_{\sigma'}}\neq \emptyset}$ induced from a choice of cellular path groupoid $\cP_{\cD}$ are related as
\[
g'_{\sigma' \sigma}(x)=g_{\sigma'}\cdot g_{\sigma' \sigma}(x)\cdot g_{\sigma}^{-1}.
\]
(ii) If the choice of  $\cP_{\cD}$ is changed, then there is a set of smooth maps $\{g_{\sigma}(x):\overline{c_{\sigma}}\to G\}_{c_{\sigma}\in\cC}$, such that the maps $\{g_{\sigma' \sigma}\}_{\overline{c_{\sigma}}\cap\overline{c_{\sigma'}}\neq \emptyset}$ induced from $\mathrm{PT}_{\cC}$ transform as
\[
g_{\sigma' \sigma}(x)\mapsto g_{\sigma'}(x) g_{\sigma' \sigma}(x) g_{\sigma}(x)^{-1}.
\]
\end{proposition}

\begin{proof}
(i) is a straightforward consequence of definition \ref{def:PT}. Namely, the value of $g_{\sigma' \sigma}(x)$ (and similarly for the value $g'_{\sigma' \sigma}(x)$), for an arbitrary $x\in\overline{c_{\sigma}}\cap\overline{c_{\sigma'}}$, is equal to $\mathrm{PT}_{\cC}\left(\left[\gamma^{x}_{\sigma' \sigma}\right]\right)$. (ii) follows as a consequence of lemma \ref{lemma:independence}. The proposition applies in particular to the clutching maps $h_{v w} = g_{v w}$.
\end{proof}

\begin{proof}[\textbf{Proof of theorem} \ref{theo:PT}]
Given a smooth parallel transport map $\mathrm{PT}_{\cC}$, we can define a collection of principal $G$-bundles $\pi_{\sigma}:P_{\sigma}\to \overline{c_{\sigma}}$ for every $c_{\sigma}\in\cC$ in the following way. Let $\cP^{s}_{\sigma} = \bigcup_{p_{\sigma'}\in\cB_{\sigma}} \cP(\overline{c_{\sigma}},p_{\sigma'})$ and define

\begin{equation}\label{eq:quotient}
P_{\sigma}=\cP^{s}_{\sigma}\times G /_{\sim\mathrm{PT}} 
\end{equation}
where two pairs $([\gamma],g)$ and $([\gamma'],g')$ are related if $t(\gamma)=t(\gamma')$ and 
\[
g' = \textrm{PT}_{\sigma}\left(\left[\gamma'\right]^{-1}\cdot [\gamma]\right) g.
\]
The projection of a class $[\gamma,g]$ onto $M$ is simply defined as $[\gamma,g] \mapsto t(\gamma)$, determining a map $\pi_{\sigma}: P_{\sigma}\to \overline{c_{\sigma}}$. The global right $G$-action on $P_{\sigma}$ is defined as $[\gamma,g']\cdot g = [\gamma,g'g]$.
For every $p_{\sigma'}\in \cB_{\sigma}$, there is a special point $b_{\sigma'}\in\pi_{\sigma}^{-1}(p_{\sigma'})$, determined by considering the classes  $[p_{\sigma'},e]$, where by a slight abuse of notation, $[p_{\sigma'}]$ represents the class of the constant path at $p_{\sigma'}$. Consequently, there is an identification of the fiber $\pi_{\sigma}^{-1}(p_{\sigma'})$ with $G$. 
A smooth and global \emph{cellular trivialization} can be given for each principal $G$-bundle $\pi_{\sigma}:P_{\sigma}\to \overline{c_{\sigma}}$. This can be seen by considering the smooth family of paths $\cF^{s}_{\sigma}\subset\cP(\overline{c_{\sigma}}, p_{\sigma})$ 
induced by a diffeomorphism $\psi_{\sigma}:\overline{\DD^{k}}\to \overline{c_{\sigma}}$ such that $\psi_{\sigma}(0) = p_{\sigma}$, as the collection consisting of the images of linear segments in $\overline{\DD^{k}}$ with source at 0. This way, we get a bijection
\[
\Psi_{\sigma}:\overline{c_{\sigma}}\times G \to P_{\sigma}, \qquad (x,g)\mapsto \left[\gamma^{x},g\right],	
\]
which defines a smooth structure of manifold with corners on $P_{\sigma}$, and a trivialization as a smooth principal $G$-bundle over $\overline{c_{\sigma}}$.
Since whenever $c_{\sigma}\subset \overline{c_{\tau}}$, we have that $\cP^{s}_{\sigma}\subset\cP^{s}_{\tau}$, we can construct a smooth bijection $P_{\sigma}\mapsto P_{\tau}|_{\overline{c_{\sigma}}}$ by restriction. Therefore, the collection of principal $G$-bundles $\{P_{\sigma}\}_{c_{\sigma}\in\cC}$ can be glued into a single smooth bundle $\pi:P \to M$.\footnote{The bundle $P$ could also be constructed as a quotient similar to \eqref{eq:quotient}, in terms of the full path space $\cP^{s}_{\cC}$ of all intimacy equivalence classes of piecewise-smooth paths with source an arbitrary $p_{\sigma}$ in $M$, and projection $\pi$ defined in a similar way.} As before, such a bundle would come with a preferred set of points $\{b_{\sigma}\in\pi^{-1}(p_{\sigma})\}_{c_{\sigma}\in\cC}$, which we will denote by $\cE_{\cC}$. Thus, $\pi\left(\cE_{\cC}\right) = \cB_{\cC}$ and any other choice of complete cellular path families $\cF$ would determine an equivalent bundle.

A \emph{horizontal lift} for every path  $\gamma:[0,1]\to M$ and a choice of initial condition $[\gamma' ,g]\in \pi^{-1}(\gamma(0))$ can be constructed from the smooth parallel transport map $\mathrm{PT}_{\cC}$. Namely, horizontal lifts can first be defined  if we take any path $\gamma:[0,1]\to M$ and an initial condition $[\gamma', g]$, $[\gamma']\in\cP^{s}_{\cC}$. Then by the independence under reparametrization, the path $\lambda(\gamma) : [0,1] \to P_{\sigma}$ with source $s(\lambda(\gamma))=[\gamma',g]$ is defined as 
\[
[\lambda(\gamma)](\epsilon)=\left[\left(\gamma \cdot \gamma'\right) |_{[0,(\epsilon+1)/2]},g\right],\quad \epsilon \in [0,1]
\]
satisfying $\pi \circ \lambda(\gamma)(\epsilon) = \gamma (\epsilon)$. Horizontal lifts may also be constructed as an iteration of local lifts 
and subsequent initial conditions, following lemma \ref{lemma:path factorization}.
Therefore, the construction of a gauge orbit of connections under the action of $\cG_{P,*}$ follows. The construction depends on the choice of base points $\cE_{\cC}$ along the fibers over the base points $\cB_{\cC}$, and determines a smooth connection in $P$, up to the action of $\cG_{P,\*}$. 
\end{proof}

\begin{remark}\label{remark:skeletal-normalization}
The bundle $P$ can also be constructed from local trivializations and clutching maps.
Consider a collection of smooth path families $\{\cF^{s}_{\sigma}\}_{c_{\sigma}\in\cC}$, $\cF^{s}_{\sigma} = \{\gamma^{x}_{\sigma}\,:\, x\in\overline{c_{\sigma}}\}$ as before. Whenever $\tau \subset \sigma$, the identity
\[
[\gamma_{\tau}^{x},g]=[\gamma^{x}_{\sigma},g']
\]
for any $x\in\overline{c_{\sigma}}\subset\overline{c_{\tau}}$, together with the corresponding trivializations, allows us to express $g'=g_{\sigma\tau}(x)g$ for some $g_{\sigma\tau}(x)\in G$. 
If $\{\cF_{\sigma}^{s}\}$ is changed, $g_{\sigma\tau}(x)$ would transform as 
\[
g_{\sigma\tau}(x)\mapsto g_{\sigma}(x)g_{\sigma\tau}(x) g_{\tau}(x)^{-1}, 
\]
for some well-defined smooth functions $g_{\sigma}:\overline{c_{\sigma}}\to G$ and $\left(g_{\tau}:\overline{c_{\tau}}\to G\right) |_{\overline{c_{\sigma}}}$, as it follows from proposition \ref{prop:equivalence g}. Only the homotopy type of the glueing maps $\{g_{\sigma' \sigma}\}$, is relevant to determine an equivalence class of principal $G$-bundles on $M$, and ultimately, the only data that is strictly necessary to reconstruct a bundle $P$ are the clutching maps \eqref{eq:clutching}, 
that determine a choice of \emph{cellular bundle data}, characterizing a principal $G$-bundle with trivializations over $\cC$, up to equivalence, as it follows from theorem  \ref{theo:equivalence}.\footnote{It is possible to construct a system of transition functions for $P$ if each $P_{\sigma}$ is instead constructed over an open set $\cU_{\sigma}$, by extending the spaces $\cP^{s}_{\sigma}$ to consist of paths belonging to $\cU_{\sigma}$ (and not only to $\overline{c_{\sigma}}$), and repeating the previous constructions verbatim.} 
\end{remark}


\section{Proof of the equivalence theorem for bundle data}\label{app:cellular bundle data}

Given an open cover $\mathfrak{U}=\{\cU_{v}\}$, a \v{C}ech 1-cocycle is a collection of smooth maps $g_{v w}:\cU_{v w}= \cU_{v}\cap\cU_{w}\rightarrow G$ satisfying
\begin{equation}\label{eq:cocycle1}
g_{w v}=g_{v w}^{-1}\quad\quad \text{on}\quad\cU_{v w}, \qquad\qquad g_{uv}g_{v w} = g_{uw} \quad\quad \text{on}\quad \cU_{uvw},
\end{equation}
and two cocycles $\{g_{v w}\}$ and $\{g'_{v w}\}$ are cohomologous if there exist a collection of smooth maps $\{ g_{v}:\cU_{v}\to G \}_{c_{v}\in\cC_{n}}$ (\emph{local gauge transformations}) such that 
\begin{equation}\label{eq:cocycle2}
g'_{v w}=g_{v}g_{v w}g^{-1}_{w} \quad\quad\text{on}\quad \cU_{v w}.
\end{equation}
These are statements about the 2-skeleton of the nerve $N(\mathfrak{U})$. It that sense it is convenient to work with cotriangulations and adapted open covers. A cover $\mathfrak{U}$ is \emph{adapted to $\cC$} if it is good, $N(\mathfrak{U}) \cong \cC^{\vee}$, and for every $c_{\sigma}\in\cC$, the open set $\cU_{\sigma}\in\mathfrak{U}$ satisfies $\overline{c_{\sigma}}\subset \cU_{\sigma}$. 

\begin{theorem}\label{theo:equivalence}
Let $M$ be an $n$-manifold, $n\geq 2$, together with a cotriangulation $\cC$ and an adapted cover $\mathfrak{U}$. 
There is a bijective correspondence
\[
 \left\{\parbox[d]{1.4in}{\centering Cellular bundle data in $M$, relative to $\cC$}\right\} \longleftrightarrow \left\{\parbox[d]{1.6in}{\centering Cohomology classes of \v{C}ech 1-cocycles on $\mathfrak{U}$}\right\}
\]
\end{theorem}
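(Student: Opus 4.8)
The plan is to realize the claimed bijection as the pair of operations ``restriction to the dual cells'' and ``extension along deformation retractions,'' exploiting that the star-like cover $\mathfrak{U}$ is a good cover whose nerve is the triangulation dual to $\cC$. The essential simplification is that a principal $G$-bundle trivialized over $\mathfrak{U}$ is exactly a \v{C}ech $1$-cocycle $\{g_{vw}:\cU_{vw}\to G\}$ as in \eqref{eq:cocycle1}, and the conditions defining such a cocycle involve only the double and triple overlaps $\cU_{vw}$ and $\cU_{uvw}$. Under the duality of Remark~\ref{remark:intersection-cells} these correspond precisely to the $(n-1)$-cells $\overline{c_\tau}$ and the $(n-2)$-cells $\overline{c_\sigma}$ on which the cellular bundle data of Definition~\ref{def:bundle data} is supported, so no compatibility condition beyond those of Definition~\ref{def:bundle data} can arise; the emptiness of $(n+2)$-fold intersections from property (ii) guarantees there is nothing higher to verify.

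The technical heart, which I expect to be the main obstacle, is the construction of a coherent system of smooth deformation retractions
\[
r_\sigma:\cU_\sigma\longrightarrow \overline{c_\sigma},\qquad \sigma\in N(\mathfrak{U}),
\]
one for every simplex of the nerve, compatible in the sense that whenever $\tau\subset\sigma$ one has $r_\tau(\cU_\sigma)\subseteq\overline{c_\sigma}$ and $r_\tau|_{\cU_\sigma}=r_\sigma$. Such a system exists because of the precise structure of a star-like cover (Remark~\ref{remark:cover}): the decomposition into embedded $k$-disks $\cD_k=\cS_k\setminus\cS_{k-1}$ endows each $\cU_\sigma$ with a product-like collar over $\overline{c_\sigma}$, and one builds the $r_\sigma$ by induction on $\dim\overline{c_\sigma}$ (equivalently, downward on $\dim\sigma$), extending the retractions already chosen on the smaller cells in the boundary. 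I would isolate this coherence statement as a lemma; the remaining verifications are then formal.

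Granting the coherent retractions, I define the correspondence as follows. Given a cocycle $\{g_{vw}\}$, restrict to $h_{vw}:=g_{vw}|_{\overline{c_\tau}}$; the triple-overlap identity $g_{uv}g_{vw}g_{wu}=e$ on $\cU_{uvw}$ restricts on the diagonal $\Delta(\overline{c_\sigma}\times\overline{c_\sigma}\times\overline{c_\sigma})$ to the statement that $h_{v_1v_2}\times h_{v_2v_3}\times h_{v_3v_1}$ takes values in $V_G$, producing cellular bundle data. Conversely, given a representative $\{h_{vw}\}$, set $g_{vw}:=h_{vw}\circ r_\tau$; coherence of the retractions forces
\[
g_{uv}g_{vw}g_{wu}\big|_{\cU_{uvw}}=\big(h_{uv}h_{vw}h_{wu}\big)\circ r_\sigma=e,
\]
since the $V_G$ condition gives $h_{uv}h_{vw}h_{wu}\equiv e$ on $\overline{c_\sigma}$, and likewise $g_{wv}=g_{vw}^{-1}$ from $h_{wv}=h_{vw}^{-1}$. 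Hence $\{g_{vw}\}$ is a genuine \v{C}ech cocycle.

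Finally I would check that these maps descend to the respective equivalence classes and are mutually inverse. If a gauge transformation $\{g_v:\cU_v\to G\}$ relates two cocycles as in \eqref{eq:cocycle2}, then contractibility of $\cU_v$ lets me homotope each $g_v$ to a constant; restricting this homotopy exhibits $(g_v|_{\overline{c_\tau}})\,h_{vw}\,(g_w|_{\overline{c_\tau}})^{-1}$ as cellularly equivalent to $h_{vw}$, the $V_G$ condition being preserved throughout because the simultaneous gauge action conjugates the diagonal triple product by a single group element, which carries $V_G$ into itself. Conversely a cellular equivalence $h^t_{vw}$ extends to a homotopy of cocycles $h^t_{vw}\circ r_\tau$, and homotopic cocycles define gauge-equivalent bundles by the standard argument. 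Mutual inverseness reduces to the observation that $i\circ r_\tau$ is homotopic to $\mathrm{id}_{\cU_{vw}}$ through the retraction, so $h_{vw}\circ r_\tau$ restricts back to $h_{vw}$ while $g_{vw}\circ(i\circ r_\tau)$ is cocycle-homotopic to $g_{vw}$; the coherence of the retractions again ensures these homotopies respect the cocycle condition on triple overlaps. The only place genuine care is needed is the coherence lemma of the second paragraph, from which everything else follows formally.
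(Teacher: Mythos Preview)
Your route is genuinely different from the paper's and conceptually cleaner, but the key lemma on which everything rests is false as stated. A system of \emph{retractions} $r_\sigma:\cU_\sigma\to\overline{c_\sigma}$ satisfying the strict coherence $r_\tau|_{\cU_\sigma}=r_\sigma$ cannot exist: for $\tau\subset\sigma$ in the nerve, the set $\overline{c_\tau}\cap\cU_\sigma$ is an open neighborhood of $\overline{c_\sigma}$ inside $\overline{c_\tau}$, hence strictly larger than $\overline{c_\sigma}$; but the retraction condition forces $r_\tau$ to be the identity there while coherence forces it to land in $\overline{c_\sigma}$. You rely on the retraction property again when you write ``$h_{vw}\circ r_\tau$ restricts back to $h_{vw}$,'' so this is not a harmless slip.

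The fix is to drop the word ``retraction'' and build instead a coherent system of \emph{maps} $r_\sigma:\cU_\sigma\to\overline{c_\sigma}$ (merely homotopy inverses to the inclusions), constructed by downward induction on $\dim\sigma$ using that each $\overline{c_\sigma}$ is contractible so that any partial definition extends. The cocycle verification then goes through unchanged. The mutual-inverse step needs an extra argument: $g_{vw}|_{\overline{c_\tau}}=h_{vw}\circ(r_\tau|_{\overline{c_\tau}})$ is no longer equal to $h_{vw}$, but $r_\tau|_{\overline{c_\tau}}$ is homotopic to the identity through self-maps of $\overline{c_\tau}$ preserving each $\overline{c_\sigma}$ (again built inductively), and precomposing the triple $(h_{v_1v_2},h_{v_2v_3},h_{v_3v_1})$ with such a homotopy preserves the $V_G$ condition, yielding the required cellular equivalence.

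By contrast, the paper avoids any global coherence lemma. It extends each $h_{vw}$ from $\overline{c_\tau}$ to $\cU_\tau$ by hand in two steps: first extending across the collar $\cZ_\tau$ of $\partial\overline{c_\tau}$ using auxiliary maps $H^i_\sigma$ into $V_G$, then exploiting the combinatorial fact that the three ``body'' regions $\cB_{\tau_1},\cB_{\tau_2},\cB_{\tau_3}$ around any $(n-2)$-cell have empty common intersection, so that arbitrary extensions over the $\cB_{\tau_i}$ can be completed to a cocycle by forcing the triple identity on the remaining pieces. This is more explicit but less conceptual; your approach, once repaired, gives a shorter argument that makes the role of the nerve structure transparent.
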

\begin{proof}
Consider a class of \v{C}ech 1-cocycles $\{g_{v w}\}$ as above. For any given representative and any $c_{\tau}\in\cC_{n-1}$ such that $\overline{c_v}\cap\overline{c_w}=\overline{c_{\tau}}$, the restrictions $h_{v w} := g_{v w}|_{\overline{c_{\tau}}}$ determine a representative of a class of cellular bundle data. This is so since every $\cU_{v}$ is contractible by definition, and any local gauge transformation $g_{v}:\cU_{v}\to G$ is smoothly homotopic to the identity. Therefore, any equivalent cocycle $\{g'_{v w}\}$ would be homotopic to $\{g_{v w}\}$ trough a homotopy of cocycles, inducing a cellular equivalence of maps $\{h_{v w}(t)\}$, $t\in[0,1]$. 

Conversely, consider an arbitrary choice of cellular bundle data over $\cC$, and for each $c_{v}, c_{w}\in\cC_{n}$ such that $\overline{c_v}\cap\overline{c_w}=\overline{c_{\tau}}$ with $c_{\tau}\in\cC_{n-1}$, fix data representatives in the form of maps $h_{v w}:\overline{c_{\tau}}\to G$ such that for any $(n-2)$-cell closure $\overline{c_{\sigma}}$, the restriction of the induced triples of maps to the diagonal $\Delta(\overline{c_{\sigma}}\times\overline{c_{\sigma}}\times\overline{c_{\sigma}})$ lies in $V_{G}$. We will construct maps $g_{v w}:\cU_{v w}\to G$ 
satisfying the cocycle conditions \eqref{eq:cocycle1} for any triple $\{c_{v_{1}},c_{v_{2}},c_{v_{3}}\}\in\cC_{n}$ as in \eqref{eq:cyclic triples}. The construction will be done in two steps.

First let us consider for every $c_{\tau}\in\cC_{n-1}$, the intersections $\cU_{\tau}\cap \textrm{Sk}_{n-1}(\cC)$ (figure \ref{figure3}), and their subsets 
\[
\cZ_{\tau}=\cU_{\tau}\cap \left(\bigcup_{\tau'\in\cC^{\tau}_{n-1}}\overline{c_{\tau'}}\right), 
\]
where $\cC_{n-1}^{\tau}=\{c_{\tau'}\in\cC_{n-1}\, \vert \, \overline{c_{\tau}}\cap\overline{c_{\tau'}}=\overline{c_{\sigma}},\; c_{\sigma}\in\cC_{n-2}\}$. Each $\cZ_{\tau}$ is topologically a cylinder for $\partial\overline{c_{\tau}}$ without boundary. We will extend each $h_{v w}|_{\partial\overline{c_{\tau}}}$ to the whole $\cZ_{\tau}$ in such a what that the cocycle conditions are still satisfied. Consider any triple $\{c_{v_{1}}, c_{v_{2}}, c_{v_{3}}\}$ as in \eqref{eq:cyclic triples}. 
For each $i=1,2,3$ $j,k\neq i$,  $(ijk)\sim (123)$, consider the intersection 
\[
\cI_{i}=\mathrm{pr}_{i}^{-1}\left(h_{v_{j}v_{k}}(\overline{c_{\tau_{i}}} \cap \cU_{\sigma})\right)\cap V.
\]
For $i=1,2,3$,  and $c_{\sigma}$ as above, consider any collection of piecewise-smooth maps $H^{i}_{\sigma}:\left(\overline{c_{\tau_{1}}}\cup\overline{c_{\tau_{2}}}\cup\overline{c_{\tau_{3}}}\right)\cap \cU_{\sigma}\to G\times G\times G$,  satisfying that
\begin{itemize}
\item[(i)] $H^{i}_{\sigma}\left((\overline{c_{\tau_{1}}}\cup\overline{c_{\tau_{2}}}\cup\overline{c_{\tau_{3}}})\cap \cU_{\sigma}\right)\subset \cI_{i}$, 
\item[(ii)] $H^{i}_{\sigma} |_{\overline{c_{\sigma}}}=(h_{v_{1}v_{2}}, h_{v_{2}v_{3}}, h_{v_{3}v_{1}})|_{\overline{c_{\sigma}}}$,
\item[(iii)] If for $j\neq i$ there is $c_{\tau'_{j}}$ such that $\mathrm{int}\left(\overline{c_{\tau_{j}}}\cap\overline{c_{\tau'_{j}}}\right)\subset\cZ_{\tau_{i}}\setminus\overline{c_{\tau_{i}}}$,\footnote{Here, $\mathrm{int}\left(\overline{c_{\tau_{j}}}\cap\overline{c_{\tau'_{j}}}\right)$ denotes the $(k-2)$-cell whose closure is $\overline{c_{\tau_{j}}}\cap\overline{c_{\tau'_{j}}}$.} then 
\[
H^{i}_{\sigma} |_{\mathrm{int}\left(\overline{c_{\tau_{j}}}\cap\overline{c_{\tau'_{j}}}\right)} = H^{i}_{\sigma'} |_{\mathrm{int}\left(\overline{c_{\tau_{j}}}\cap\overline{c_{\tau'_{j}}}\right)}.
\]
\end{itemize}
Such collections necessarily exist, since each of the sets $(\overline{c_{\tau_{1}}}\cup\overline{c_{\tau_{2}}}\cup\overline{c_{\tau_{3}}})\cap \cU_{\sigma}$ deformation retracts to $\overline{c_{\sigma}}$. The set of maps $\{H^{i}_{\sigma}\}$ is parametrized by the elements $c_{\sigma}\in\cC_{n-2}$ according to \eqref{eq:cyclic triples}. 

Altogether, the functions $\{H^{i}_{\sigma}\}_{c_{\sigma\in\cC_{n-2}}}$ determine extensions of each $h_{v_{j}v_{k}}$ to $\cZ_{\tau_{i}}$, $(ijk)\sim (123)$. The extension to a subset $\cU_{\tau_{i}}\cap \overline{c_{\tau_{j}}}$ is defined by means of the function $H^{j}_{\sigma}$. Conditions (ii) and (iii) above ensure that the overall process gives a well-defined extension of $h_{v_{j}v_{k}}$ over $\cZ_{\tau_{i}}$, and condition (i) ensures that the new functions would satisfy the cocycle condition \eqref{eq:cocycle1}.
Denote such extension by $h_{v_{j}v_{k}}$ again, now as a function on $\cZ_{\tau_{i}}\cup c_{\tau_{i}}$.


\begin{figure}[!ht]
  \caption{The intersection of $\cU_{\tau}$ with the $(n-1)$-skeleton of $\cC$.}\label{figure3}
  
  \vspace*{3mm}
  
  \centering
    \includegraphics[width=0.34\textwidth]{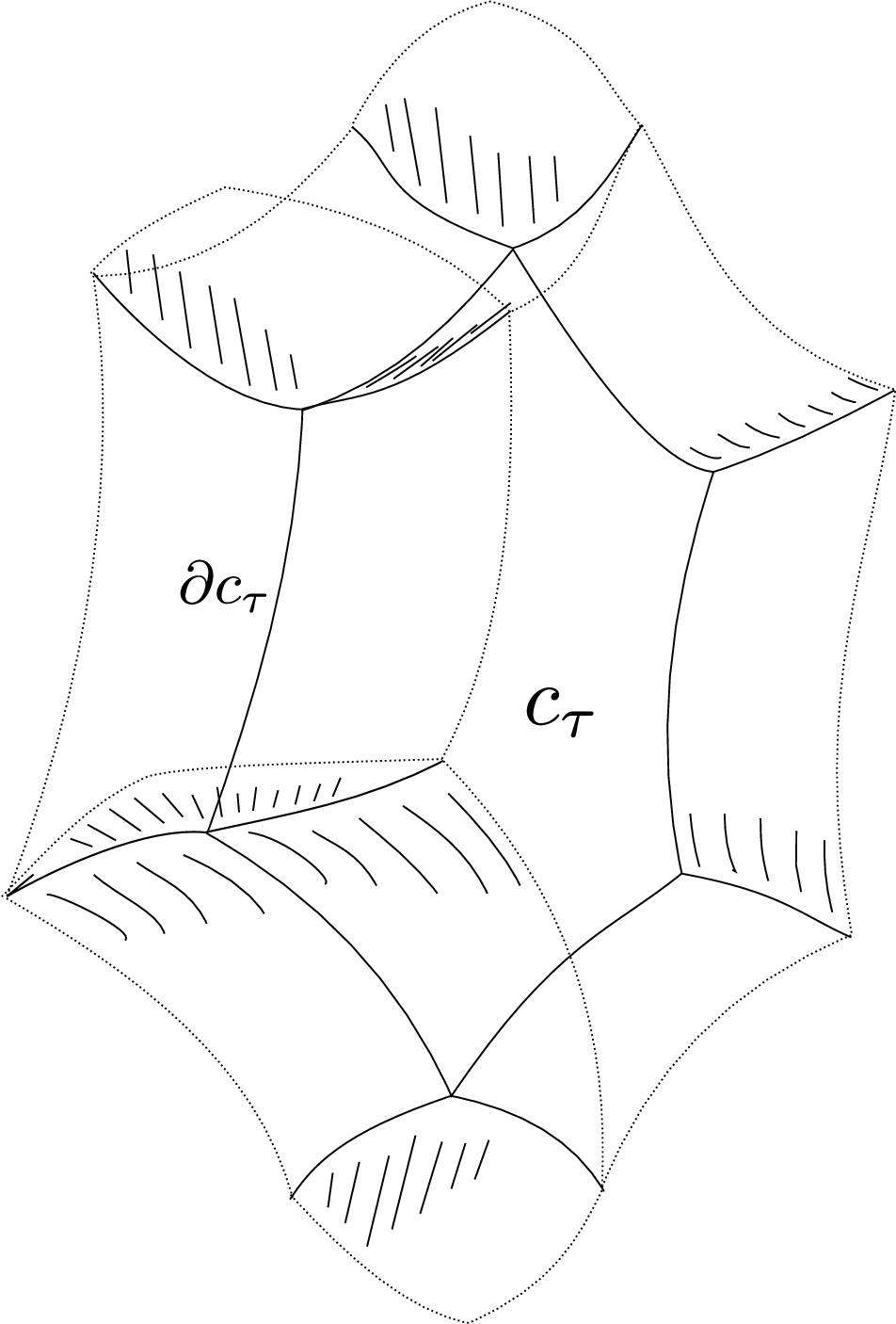}
\end{figure}

The second step of the construction  is to extend each $h_{v_{j}v_{k}}$ to the whole $\cU_{\tau_{i}}$, and relies in the following observation. For any triple $\{c_{v_{1}}, c_{v_{2}}, c_{v_{3}}\}$ as in \eqref{eq:cyclic triples}, the sets
\[
\cB_{\tau_{i}}=\cU_{\tau_{i}}\cap \left(c_{\tau_{i}}\cup c_{v_{j}}\cup c_{v_{k}}\right), \qquad i\neq j,k,\quad j\neq k, 
\]
which are open cylinders for $c_{\tau_{i}}$  (whose boundary points contain $\cZ_{\tau_{i}}$), satisfy $\cB_{\tau_{1}}\cap\cB_{\tau_{2}}\cap\cB_{\tau_{3}}=\emptyset$ (figure \ref{figure3}). If we consider any piecewise-smooth extension of $h_{v_{j}v_{k}}$ to $\cB_{\tau_{i}}$ for each $c_{\tau_{i}}$, then there is a unique way to extend such maps to the sets $\cU_{\tau_{i}}\setminus\overline{\cB_{\tau_{i}}}$ by forcing the cocycle condition on all triple intersections $\cU_{\sigma}$, obtaining maps defined over $\cU_{\tau_{i}}$.  
Hence, we can determine a collection $\{g_{v w}:\cU_{v w}\to G\}$, defining a cocycle.  This set of transition functions depends a priori on the choice of representatives $h_{v w}:\overline{c_{v}}\cap\overline{c_{w}}\to G$ for each $c_{v}, c_{w}\in\cC_{n}$ such that $\overline{c_{v}}\cap\overline{c_{w}} = \overline{c_{\tau}}$ with $c_{\tau}\in\cC_{n-1}$,  and the choice of extensions of such $h_{v w}$ to the sets $\cZ_{\tau}\cup\cB_{\tau}$.

To conclude, we must show that another set of representatives for the choice of cellular bundle data or another choice of extensions of the maps $h_{vw}$ determine a cocycle which is cohomologous to the previous one. But any choice of extensions of any pair of representatives $h_{v w}$ and $h'_{v w}$ of a given class $[h_{v w}]$ would be homotopic  through cellularly-smooth functions satisfying the cocycle conditions, hence the resulting homotopies $g_{v w}(t)$ between each $g_{v w}$ and $g'_{ v w}$ would define transition functions for every $t\in [0,1]$, equivalent to a principal $G$-bundle over $M\times[0,1]$. It follows from the homotopy invariance properties of principal bundles \cite{MS74}, that  the bundles on $M$ resulting by restriction to the boundary of $M\times[0,1]$ are isomorphic.
\end{proof}

\noindent \textbf{Acknowledgments.}
This work was partially supported by grant PAPIIT-UNAM IN 109415. 
CM was partially supported by the DFG SPP 2026 priority programme ``Geometry at infinity", and would like to kindly thank Anthony Phillips for multiple discussions and insights. JAZ was supported by a sabbatical grant by PASPA-UNAM.

\bibliographystyle{amsalpha}
\bibliography{Gauge}

\providecommand{\bysame}{\leavevmode\hbox to3em{\hrulefill}\thinspace}
\providecommand{\MR}{\relax\ifhmode\unskip\space\fi MR }
\providecommand{\MRhref}[2]{%
  \href{http://www.ams.org/mathscinet-getitem?mr=#1}{#2}
}
\providecommand{\href}[2]{#2}
\begin{thebibliography}{ORS05}

\bibitem[AL94]{AL94}
A.~Ashtekar and J.~Lewandowski, \emph{Representation theory of analytic
  holonomy {C}*-algebras}, Knots and quantum gravity, Clarendon press, Oxford
  (1994), 21--62.

\bibitem[Bae98]{Baez:1997zt}
J.~C. Baez, \emph{Spin foam models}, Class. Quant. Grav. \textbf{15} (1998),
  1827--1858.

\bibitem[Bar91]{Ba91}
J.~W. Barrett, \emph{Holonomy and path structures in general relativity and
  {Y}ang-{M}ills theory}, Int. J. Theor. Phys. \textbf{30} (1991), no.~9,
  1171--1215.

\bibitem[BH11]{BH11}
J.~C. Baez and J.~Huerta, \emph{An invitation to higher gauge theory}, Gen.
  Rel. Gravit. \textbf{43} (2011), no.~9, 2335--2392.

\bibitem[Bot56]{Bott56}
R.~Bott, \emph{An application of the {M}orse theory to the topology of {L}ie
  groups}, Bull. Soc. Math. France \textbf{84} (1956), 251--281.

\bibitem[CP94]{CP94}
A.~Caetano and R.~F. Picken, \emph{An axiomatic definition of holonomy}, Int.
  J. Math. \textbf{5} (1994), no.~6, 835--848.

\bibitem[Cre83]{Creu83}
M.~Creutz, \emph{Quarks, gluons and lattices (vol. 8)}, {C}ambridge
  {M}onographs on {M}ath. {P}hys., {C}ambridge {U}niversity {P}ress, 1983.

\bibitem[GP96]{GP96}
R.~Gambini and J.~Pullin, \emph{Loops, knots, gauge theories and quantum
  gravity}, {C}ambridge {M}onographs on {M}ath. {P}hys., {C}ambridge
  {U}niversity {P}ress, 1996.

\bibitem[KN63]{KN63}
S.~Kobayashi and K.~Nomizu, \emph{Foundations of differential geometry (vol. 1,
  no. 2)}, {N}ew {Y}ork, 1963.

\bibitem[Kob54]{Koba54}
S.~Kobayashi, \emph{La connexion des variet\'es fibr\'ees {I}, {II}}, Comptes
  Rendus Acad. Sci. \textbf{238} (1954), no.~3--4, 318--319,443--444.

\bibitem[Koz07]{Koz07}
D.~Kozlov, \emph{Combinatorial algebraic topology (vol. 21)}, {S}pringer
  {S}cience \& {B}usiness {M}edia, 2007.

\bibitem[L\"82]{Luscher82}
M.~L\"uscher, \emph{Topology of lattice gauge fields}, Commun. Math. Phys.
  \textbf{85} (1982), 39--48.

\bibitem[Las56]{Lash56}
R.~Laschof, \emph{Classification of fibre bundles by the loop space of the
  base}, Ann. of Math. \textbf{64} (1956), no.~3, 436--446.

\bibitem[Lef42]{Lef42}
S.~Lefschetz, \emph{Algebraic {T}opology (vol. 27)}, American Mathematical
  Society Colloquium Publications, v. 27., 1942.

\bibitem[Lew93]{Lew93}
J.~Lewandowski, \emph{Group of loops, holonomy maps, path bundle and path
  connection}, Class. Quantum Grav \textbf{10} (1993), no.~5, 879--904.

\bibitem[Mil56]{Milnor56}
J.~W. Milnor, \emph{Construction of universal bundles {I}-{II}}, Ann. of Math.
  \textbf{63} (1956), no.~2, 272--284,430--436.

\bibitem[Mil63]{Mil63}
\bysame, \emph{Morse theory}, {A}nnals of {M}athematics {S}tudies, No. 51,
  {P}rinceton {U}niversity {P}ress, 1963.

\bibitem[MS74]{MS74}
J.~W. Milnor and J.~D. Stasheff, \emph{Characteristic classes}, {A}nnals of
  {M}athematics {S}tudies, No. 76, {P}rinceton {U}niversity {P}ress, 1974.

\bibitem[NR61]{NR61}
M.~S. Narasimhan and S.~Ramanan, \emph{Existence of universal connections},
  Amer. J. Math. \textbf{83} (1961), no.~2, 563--572.

\bibitem[ORS05]{ORS05}
D.~Oriti, C.~Rovelli, and S.~Speziale, \emph{Spinfoam 2{D} quantum gravity and
  discrete bundles}, Class. Quantum Grav. \textbf{22} (2005), no.~1, 85--108.

\bibitem[Pac91]{Pach91}
U.~Pachner, \emph{{P}.{L}. homeomorphic manifolds are equivalent by elementary
  shellings}, European J. Combin. \textbf{12} (1991), no.~2, 129--145.

\bibitem[Phi85]{Phil85}
A.~Phillips, \emph{Characteristic numbers of ${U}_{1}$-valued lattice gauge
  fields}, Ann. Phys. \textbf{161} (1985), 399--422.

\bibitem[PS86]{PS86}
A.~Phillips and D.~Stone, \emph{{L}attice gauge fields, principal bundles and
  the calculation of topological charge}, Commun. Math. Phys. \textbf{103}
  (1986), no.~4, 599--636.

\bibitem[PS90]{PS90}
\bysame, \emph{{T}he computation of characteristic classes of lattice gauge
  fields}, Commun. Math. Phys. \textbf{131} (1990), no.~2, 255--282.

\bibitem[PS93]{PS93}
\bysame, \emph{{A} topological {C}hern-{W}eil theory}, Memoirs of the {AMS},
  no. 504, 1993.

\bibitem[Rei94]{Reisenberger:1994aw}
Michael~P. Reisenberger, \emph{{World sheet formulations of gauge theories and
  gra\-vity}}, {On recent developments in theoretical and experimental general
  relativity, gravitation, and relativistic field theories. Proceedings, 7th
  Marcel Grossmann Meeting, Stanford, USA, July 24-30, 1994. Pt. A + B}, 1994.

\bibitem[Seg68]{Segal68}
G.~Segal, \emph{Classifying spaces and spectral sequences}, Inst. Hautes
  \'Etudes Sci. Publ. Math. (1968), no.~34, 105--112.

\bibitem[Ste51]{Steen51}
N.~E. Steenrod, \emph{The topology of fibre bundles}, Vol. 14. Princeton
  University Press, 1951.

\bibitem[SW09]{SW09}
U.~Schreiber and K.~Waldorf, \emph{Parallel transport and functors}, J.
  Homotopy Relat. Struct. \textbf{4} (2009), no.~1, 187--244.

\bibitem[Zap12]{Zapata12}
J.~A. Zapata, \emph{Local gauge theory and coarse graining}, Journal of
  {P}hysics: {C}onference {S}eries \textbf{360} (2012), no.~1, 012054.

\end{thebibliography}

\end{document}